\definecolor{green}{rgb}{0.0, 0.5, 0.0}
\crefname{lemma}{lemma}{lemmata}
\crefname{claim}{claim}{claims}
\crefname{theorem}{theorem}{theorems}
\crefname{proposition}{proposition}{propositions}
\crefname{corollary}{corollary}{corollaries}
\crefname{claim}{claim}{claims}
\crefname{remark}{remark}{remarks}
\crefname{definition}{definition}{definitions}
\crefname{fact}{fact}{facts}
\crefname{question}{question}{questions}
\crefname{condition}{condition}{conditions}
\crefname{algorithm}{algorithm}{algorithms}
\crefname{assumption}{assumption}{assumptions}
\crefname{notation}{notation}{notation}
\crefname{cond}{Condition}{Conditions}
   \par\noindent{\bfseries\upshape Proof Sketch\ }%
\newtheorem{theorem}{Theorem}[section]
\newtheorem{lemma}[theorem]{Lemma}
\newtheorem{corollary}[theorem]{Corollary}
\newtheorem{claim}[theorem]{Claim}
\newtheorem{definition}[theorem]{Definition}
\newtheorem{fact}[theorem]{Fact}
\theoremstyle{definition}
\newcommand{\eps}{\epsilon}
\newcommand{\Ind}{\mathds{1}}
\newcommand{\1}{\Ind}
\renewcommand{\Pr}{\operatorname*{\mathbf{Pr}}}
\newcommand{\Var}{\operatorname*{\mathbf{Var}}}
\newcommand{\E}{\operatorname*{\mathbf{E}}}
\newcommand{\poly}{\operatorname*{\mathrm{poly}}}
\newcommand{\polylog}{\operatorname*{\mathrm{polylog}}}
\newcommand{\trace}{\operatorname{tr}}
\def\R{\mathbb R}
\def\Z{\mathbb Z}
\newcommand{\cC}{\mathcal{C}}
\newcommand{\cE}{\mathcal{E}}
\newcommand{\cN}{\mathcal{N}}
\newcommand{\cU}{\mathcal{U}}
\newcommand{\cV}{\mathcal{V}}
\newcommand{\paren}[1]{(#1)}
\newcommand{\abs}[1]{\lvert#1\rvert}
\newcommand{\Abs}[1]{\left\lvert#1\right\rvert}
\newcommand{\norm}[1]{\lVert#1\rVert}
\newcommand{\Norm}[1]{\left\lVert#1\right\rVert}
\newcommand{\hide}[1]{}
\newcommand{\op}{\textnormal{op}}
\newcommand{\fr}{\textnormal{F}}
\def\Proj{\mathrm{Proj}}
\newcommand{\tr}{\mathrm{tr}}
\def\colorful{1}
\title{Efficient Multivariate Robust Mean Estimation\\ 
Under Mean-Shift Contamination}
\author{
Ilias Diakonikolas\thanks{Supported by NSF Medium Award CCF-2107079 and an H.I. Romnes Faculty Fellowship.}\\
University of Wisconsin-Madison\\
{\tt ilias@cs.wisc.edu}\\
\and
Giannis Iakovidis\thanks{Supported in part by NSF Medium Award CCF-2107079 and NSF Award DMS-2023239 (TRIPODS).}\\
University of Wisconsin-Madison\\
{\tt iakovidis@wisc.edu}\\
\and
Daniel M. Kane\thanks{Supported by NSF Medium Award CCF-2107547 and NSF Award CCF-1553288 (CAREER).}\\
University of California, San Diego\\
{\tt dakane@cs.ucsd.edu}
\and
Thanasis Pittas\thanks{Supported by NSF Medium Award CCF-2107079.}\\
University of Wisconsin-Madison\\
{\tt pittas@wisc.edu}\\
}
\begin{document}

\maketitle
\begin{abstract}
We study the algorithmic problem of robust mean estimation of an 
identity covariance Gaussian in the presence of  mean-shift contamination. In this  
contamination model, we are given a set of points in $\R^d$ generated i.i.d.\ 
via the following process. For a parameter $\alpha<1/2$, the $i$-th sample $x_i$ 
is obtained as follows: with probability $1-\alpha$, $x_i$ is drawn from $\cN(\mu, I)$, 
where $\mu \in \R^d$ is the target mean; and with probability $\alpha$, $x_i$ is drawn 
from $\cN(z_i, I)$, where $z_i$ is unknown and potentially arbitrary. 
Prior work characterized the information-theoretic limits of this task. 
Specifically, it was shown that--- 
in contrast to Huber contamination---
in the presence of mean-shift contamination
consistent estimation is possible. 
On the other hand, all known robust estimators in the mean-shift model have 
running times exponential in the dimension. 
Here we give the first computationally efficient algorithm 
for high-dimensional robust mean estimation with mean-shift contamination that can tolerate 
a constant fraction of outliers. 
In particular, our algorithm has   
near-optimal sample complexity, 
runs in sample-polynomial time, 
and approximates the target mean to {\em any} desired accuracy. 
Conceptually, our result contributes to a growing body of work that studies inference 
with respect to natural noise models 
lying in between fully adversarial and random settings.
\end{abstract}

\thispagestyle{empty}

\newpage
\setcounter{page}{1}
\section{Introduction}

\paragraph{Background and Motivation} 
Robust statistics \cite{Huber09,diakonikolas2023algorithmic} aims to develop accurate 
estimators in the presence of a constant fraction of outliers. In a range of machine learning 
scenarios, the standard i.i.d.\ assumption does not accurately represent the underlying 
phenomenon. 
For example, in ML security applications \cite{barreno2010security,BiggioNL12,SKL17-nips,TranLM18,DKK+19-sever}, the data may be adversarially manipulated or contain out-of-distribution data which arise from unknown categories \cite{du2024does}; 
in biological applications, datasets often contain natural outliers \cite{RP-Gen02,Pas-MG10,Li-Science08,DKK+17} that may pollute downstream statistical analysis.
The field of robust statistics originates from the 1960s with the pioneering 
works of Tukey and Huber \cite{tukey1960survey, Huber64}. Early work in the field 
obtained minimax optimal robust estimators for the mean estimation and other tasks. 
However, the multivariate versions of these estimators incurred exponential runtime in the dimension. A recent line of work in computer science, starting with \cite{DKKLMS16, LaiRV16}, 
has led to a revival of robust statistics from an algorithmic standpoint, 
by providing the first robust estimators in high dimensions 
with polynomial sample and time complexity. 

The prototypical setting for robust statistics is Gaussian mean estimation 
in Huber'a contamination model~\cite{Huber64}. Letting 
$\alpha \in (0,1/2)$ denote the contamination parameter, each sample in Huber's model is drawn either from an inlier normal 
distribution $\mathcal{N}(\mu,I)$ (where $\mu$ is the unknown mean to be estimated) 
with probability $1-\alpha$; or from an unknown and potentially arbitrary 
distribution $E$ with probability $\alpha$. Recent work has provided 
efficient algorithms with near-optimal error guarantees for this model 
\cite{DKKLMS18-soda, diakonikolas2024near}.
Huber's contamination is a rather strong model, as it allows for the unknown distribution of corruptions to be arbitrary. On the one hand, 
this level of generality makes the model quite powerful --- 
by allowing it to cover a wider range of phenomena. 
On the other hand, the model definition is inherently tied to 
significant information-theoretic limitations: 
It is known that even with an  infinite number of samples,
any mean estimator $\hat{\mu}$ in Huber's model has to incur an error of 
$\|\mu - \hat{\mu}\|=\Omega(\alpha)$, where $\alpha$ is the rate of contamination. 
This happens because it is possible to start from two  Gaussians (even in one 
dimension) with means $\Omega(\alpha)$ apart, and mix them 
with appropriate outlier distributions so that the contaminated 
distributions are identical; see, e.g., Chapter 1 of~\cite{diakonikolas2023algorithmic}. 
(Consistency is similarly unachievable in 
other commonly used corruption models like total variation 
or strong contamination.)

\paragraph{Robust Estimation with Mean Shift Contamination}
A natural way to achieve consistency is to impose additional 
structure on the contamination model. A prominent and 
well-studied assumption is that outliers do not follow arbitrary 
distributions, but instead correspond to mean-shifted copies of 
the original distribution. Mean shift contamination has been 
extensively studied in the robust statistics literature, 
for both regression~\cite{sardy2001robust,gannaz2007robust,mccann2007robust,she2011outlier} 
and mean estimation~\cite{cai2010optimal,collier2019multidimensional,carpentier2021estimating,li2023robust,kotekal2024optimal}.

In this paper, we consider arguably the most basic version of mean 
shift contamination, where the inliers are drawn from a 
known covariance Gaussian distribution with unknown mean $\mu$, 
while each outlier is sampled from a (potentially different) 
Gaussian with the same covariance and unknown arbitrary mean. 
The goal is to understand the statistical-computational landscape of mean estimation in this model.

\begin{definition}[Mean-Shift Contamination Model]\label{def:cont}
    For $\alpha \in (0,1/2)$ and $n \in \Z_+$, a set $T$ of $n$ points in $\R^d$ is called 
    an $\alpha$-corrupted set of points from $\cN(\mu,I)$ under the mean-shift model,
    if an adversary chooses $z_1,\ldots,z_n$, and each point $x_i \in T$ is then sampled independently as follows: 
    With probability $1-\alpha$, $x_i$ is sampled from $\cN(\mu,I)$, and probability  $\alpha$,  
    $x_i$ is sampled from $\cN(z_i,I)$.\looseness=-1
\end{definition}

Some comments are in order. 
A more general version of \Cref{def:cont} 
would allow the inliers and outliers 
be drawn from Gaussians with covariance matrix $\Sigma$.
When $\Sigma$ is known to the algorithm, it suffices to consider the case $\Sigma=I$ 
(as we can transform the samples and reduce to this case). The 
known covariance version of the shift contamination model 
has been studied in a number of works, 
including~\cite{collier2019multidimensional,carpentier2021estimating, li2023robust, kotekal2024optimal}.\footnote{\cite{carpentier2021estimating} considers a special case of the model where $z_i - \mu$ are assumed to be non-negative.} 
Moreover, the closely related (more challenging) 
model where the covariance is unknown 
has also been considered~\cite{cai2010optimal, carpentier2021estimating, kotekal2024optimal}. 
Mean shift contamination has been studied  
from a theoretical statistics standpoint, 
with a focus on minimax rates, 
and as a modeling 
assumption~\cite{jin2007estimating,sun2007oracle,cai2009simultaneous,efron2004large,efron2007correlation,efron2008microarrays}. 
Additional motivation can be found 
in~\cite{kotekal2024optimal}.
A more detailed summary of related work is provided in \Cref{sec:related_work}.

The majority of prior work has primarily focused 
on the one-dimensional setting.
Specifically,~\cite{kotekal2024optimal}  
obtained sharp minimax rates (upper and lower bounds) of 
estimating $\mu$ in absolute error.  
(These works also consider the more general estimation 
task for the unknown variance case.) 
Intriguingly, and in sharp contrast to Huber's model, 
consistency is achievable in the mean-shift model. Intuitively, this happens because 
now all the samples are convoluted with a Gaussian; thus, given infinitely many samples, 
one can form the {underlying} distribution, perform deconvolution, 
and recover the single spike corresponding to the inliers. 
More specifically, 
\cite{kotekal2024optimal} showed that, if the contamination parameter $\alpha$ is a positive 
constant strictly less than $1/2$, {the mean can be estimated to any desired accuracy $\eps$ using}
$2^{\Theta(1/\eps^2)}$ samples. 
This can be used to show {(see~\Cref{thm:brute_force})} that the $d$-dimensional version of the problem can be solved, up to $\ell_2$-error $\eps$, with $n = d \, 2^{O(1/\eps^2)}$ 
samples---an upper bound that is essentially best possible.  

{While the one-dimensional estimators of~\cite{li2023robust, kotekal2024optimal} can be implemented in polynomial time, very little is known from an algorithmic standpoint for the multivariate problem. Specifically, the only known methods (achieving any desired accuracy) 
involve a brute-force search with complexity that scales exponentially in the dimension (\Cref{thm:brute_force}).  Of course, one could 
apply efficient robust algorithms designed for Huber's model from the existing robust statistics literature. Unfortunately, such algorithms inherently cannot obtain error better 
than $\Omega(\alpha)$---while our goal is to achieve any desired accuracy $\eps \ll \alpha$. }
 This discussion leads to the following question, which was the main motivation for this work:\looseness=-1
\begin{center} 
\emph{Is there a sample and {\em computationally efficient} multivariate robust mean estimator\\ 
in the mean-shift model? }
\end{center}
By the term ``sample efficient'', we mean an estimator %
with sample complexity $n=\poly(d,2^{1/\eps^2})$ samples. As is 
standard, computationally efficient refers to an algorithm with 
$\poly(n,d)$ runtime. %
In this paper, we answer this question in the affirmative. As a bonus, the sample complexity of our algorithm is minimax near-optimal, within logarithmic factors.

Specifically, we establish the following theorem:

\begin{restatable}{theorem}{MAINTHEOREM}{\em (Main Algorithmic Result)} \label{thm:main}
    Let $d \in \Z_+$ denote the dimension, $\mu \in \R^d$ be an unknown mean vector,  $\eps \in (0,1)$ be an accuracy parameter, and $\alpha\le 0.49$ be a contamination parameter. There exists an algorithm that takes as input $\eps$, draws 
    {$n = \tilde{O}(d/\eps^{2+o(1)} + 2^{O(1/\eps^2)})$} $\alpha$-corrupted samples from $\cN(\mu,I)$ under the mean-shift model (\Cref{def:cont}), runs in $\poly(n, d)$ time, and outputs $\widehat{\mu}$ such that with probability at least $0.99$ it holds $\| \widehat{\mu} - \mu \| \leq \eps$.
\end{restatable}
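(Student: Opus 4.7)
The plan is to reduce the $d$-dimensional estimation problem to a low-dimensional one on which we can afford the sample-polynomial but dimension-exponential brute-force algorithm of \Cref{thm:brute_force}, while handling the orthogonal ``easy'' directions with standard Huber-robust estimation. The key structural observation that I would exploit is that the bias the outliers introduce into the empirical mean is $\alpha(\bar z-\mu)$, where $\bar z$ is the average of the adversary's shifts; therefore in any direction $v$ in which this bias exceeds $\epsilon$, a direct computation shows that the eigenvalue of $\hat\Sigma-I$ along $v$ must be at least $\Omega(\epsilon^2/\alpha)$. This means that a spectral analysis of the empirical covariance can identify the ``bad'' directions.

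First I would apply an off-the-shelf Huber-robust mean estimator (e.g., from \cite{DKKLMS16}) to obtain $\hat\mu_0$ with $\|\hat\mu_0-\mu\|=O(\alpha\sqrt{\log(1/\alpha)})$, translate so that $\hat\mu_0=0$, and filter out samples more than $R=O(\sqrt{\log n})$ from the origin; with high probability all inliers survive. Then I would form the empirical covariance $\hat\Sigma$ of the surviving points, take $V$ to be the span of eigenvectors of $\hat\Sigma-I$ whose eigenvalues exceed a threshold $\tau=\Theta(\epsilon^2/\alpha)$, and use the trace bound $\tr(\hat\Sigma-I)=O(\alpha R^2)$ to bound $\dim V=k$. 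In the complementary subspace $V^\perp$, by construction every direction has bias at most $\epsilon$, so a Huber-robust estimator applied to the projections estimates $\Proj_{V^\perp}\mu$ to accuracy $O(\epsilon)$. Inside $V$, I would apply the brute-force algorithm of \Cref{thm:brute_force} in dimension $k$, which uses $O(k\cdot 2^{O(1/\epsilon^2)})$ samples and runs in time $2^{O(k/\epsilon^2)}$. The final estimate $\hat\mu$ is then obtained by combining the two pieces.

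The main obstacle will be making $k$ small enough that $2^{O(k/\epsilon^2)}$ is polynomial in the total sample size $n$: a one-shot spectral decomposition gives only $k=O(\alpha^2R^2/\epsilon^2)$, which can be as large as $\Omega(1/\epsilon^2)$ in the regime $\epsilon\ll\alpha$ that is the main interest. To drive $k$ down, I plan to iterate the reduction: after an initial low-accuracy refinement inside $V$, re-center, re-filter with a smaller radius, and re-run the spectral step to identify a strictly smaller ``residual'' bad subspace. The hardest part of the proof will be the bookkeeping that ensures (i) each iteration strictly decreases the effective dimension or the accuracy needed, (ii) errors accumulated over $O(\log(1/\epsilon))$ iterations stay within the target $\epsilon$, and (iii) the final brute-force step operates in constant (or $O(\log n)$) dimension so that its runtime is polynomial. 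A secondary issue is verifying that the projection of the mildly-filtered mean-shift mixture onto $V$ is itself a mean-shift mixture to which \Cref{thm:brute_force} applies; this should follow because Gaussian projections are Gaussian and filtering removes only a negligible mass of inliers.
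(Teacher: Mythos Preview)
Your high-level architecture---spectral dimension reduction followed by brute force on a low-dimensional subspace---matches the paper. But the concrete mechanism you propose for the reduction step has a genuine gap, and it is precisely the gap the paper's main technical contribution is designed to close.

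First, the filtering radius $R=O(\sqrt{\log n})$ cannot be right: in $d$ dimensions the inliers $x\sim\mathcal N(\mu,I)$ have $\|x\|$ concentrated around $\sqrt{d}$, so any $R$ that keeps the inliers must satisfy $R^2\geq d+O(\sqrt{d\log n})$. With this corrected $R$, your trace bound becomes $\tr(\hat\Sigma-I)=O(\alpha R^2)=O(\alpha d)$, and the number of eigenvalues above $\tau=\Theta(\epsilon^2/\alpha)$ is then $O(\alpha^2 d/\epsilon^2)$. For $\alpha=\Omega(1)$ and $\epsilon\ll\alpha$ (the interesting regime), this is $\gg d$: a single spectral round achieves \emph{no} dimension reduction at all. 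The paper's techniques section flags exactly this obstacle (``as $m$ is unbounded, $\E[mm^\top]$ could have many such eigenvalues'') and resolves it not by hard truncation but by a soft exponential reweighting: one estimates $\E[(xx^\top-\tfrac{1}{1+2\gamma}I)\,e^{-\gamma(\|x\|^2-d)}]$, which is roughly $\E[mm^\top e^{-\gamma\|m\|^2}]$ and therefore has trace $O(1/\gamma)$ regardless of the adversary's choice of centers. Choosing $\gamma\approx 1/\sqrt{d}$ gives a provable square-root reduction per round while keeping the sample complexity for concentration under control.

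Second, even if you repaired the one-round analysis to yield a square-root reduction (which requires bounding the surviving $\|z_i\|^2$ rather than $\|x_i\|^2$, and controlling concentration of a \emph{conditioned} empirical covariance---both nontrivial), iterating it stabilizes at $k=\Theta(1/\epsilon^4)$, since the fixed point of $k\mapsto C\sqrt{k}/\epsilon^2$ is $\Theta(1/\epsilon^4)$. Brute force in dimension $1/\epsilon^4$ runs in time $2^{\Theta(1/\epsilon^4)}$, which is \emph{not} polynomial in $n=2^{\Theta(1/\epsilon^2)}$. The paper handles this with a separate ``Phase~2'': once $k\lesssim\mathrm{polylog}(d)/\epsilon^5$, it switches to a more aggressive reweighting parameter $\beta=\epsilon$ (equivalently $\gamma\approx 1/(\epsilon\sqrt{k})$), at the cost of $2^{O(1/\epsilon^2)}$ samples per round, which drives the dimension down to $O(1/\epsilon^2)$. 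Your proposal's ``re-filter with a smaller radius'' does not supply an analogue of this step, and without it the runtime is not polynomial.
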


\Cref{thm:main} gives the first sample efficient 
algorithm for our problem 
that runs in sample-polynomial time.
We remind the reader that, in the Huber contamination 
model, this goal is information-theoretically impossible; that 
is, no algorithm can achieve consistency regardless of its 
sample and computational resources.

Note that the sample complexity of our algorithm 
is close to optimal: even for one-dimension with $\alpha \leq 0.49$, 
any estimator requires $2^{\Omega(1/\eps^2)}$ samples~\cite{carpentier2021estimating,kotekal2024optimal}; 
and linear dependence on $d$ is necessary even for the outlier-free setting. 
Moreover, we point out that our algorithm does not need to know the value 
of the contamination parameter $\alpha$. While we have assumed for simplicity of the statement 
that $\alpha \leq 0.49$, we show in \Cref{appendix:breakdownpoint} (see \Cref{thm:higher_breakdown}) that by a mild increase in the error and sample complexity, 
our algorithm can be generalized to work for $\alpha \in (0,1/2-c)$, for any constant $c \in (0,1/2)$, without the need to know $c$ a priori.%

\subsection{Organization}
The paper is structured as follows: In \Cref{sec:techniques}, 
we provide an overview of our techniques and outline 
the structure of our algorithm. In \Cref{sec:analysis}, we 
formally state the algorithm and sketch its analysis. Finally, 
in \Cref{sec:open-problems}, we  discuss directions for future work.

\subsection{Our Techniques}\label{sec:techniques}

Given samples of the form $x \sim \cN(m,I)$, 
where $m$ is a random variable taking the value $m=\mu$ with probability $2/3$ 
(instead of $2/3$, our model of \Cref{def:cont} uses $1-\alpha$; 
this is a simplification for the purpose of this proof overview), 
the goal is to estimate $\mu$ to $\ell_2$-error $\eps$ with high constant probability.

Note that, using the results of \cite{li2023robust,kotekal2024optimal}, 
it is straightforward to design a (computationally inefficient) algorithm 
with sample complexity $d2^{O(1/\eps^2)}$ and runtime  $2^{O(d+1/\eps^2)}$ --- 
via a standard reduction to the $1$-dimensional case (cf. \Cref{thm:brute_force}). 
In particular, for every $v$ in an exponentially-sized cover of the unit sphere, 
we apply a robust one-dimensional estimator to the projections of the samples $v^\top x$. 
This gives us an $\eps$-approximation of $v^\top \mu$. Piecing these values together 
for various values of $v$ allows us to reconstruct an $\eps$-approximation of $\mu$.

As a first step towards \Cref{thm:main}, we will design an algorithm  
with sample complexity $\poly(d)2^{\Theta(1/\eps^2)}$ and runtime $\poly(d)2^{\Theta(1/\eps^4)}$. Note that this is not yet a \emph{computationally efficient} algorithm, 
since the  runtime is {\em quasi-polynomial} in the sample size. 
It turns out that this simpler algorithm is a good start, 
as it will lay down the main ideas which eventually lead to the efficient algorithm of \Cref{thm:main}. 
The main idea is to design some kind of (data dependent) dimension reduction. 
Ideally, we would like to first find the direction $\mu/\|\mu\|$ along which $\mu$ lies, 
and then run a one-dimensional robust estimator on that direction to recover $\|\mu\|$. 
More realistically, we will attempt to recover a low-dimensional subspace 
that on which $\mu$ has a large projection, 
and then run a low-dimensional robust estimator on that subspace. 
In particular, if we can find a subspace $\cV$ of dimension 
$\mathrm{dim}(\cV) \leq \poly(1/\eps)$ such that $\|\Proj_{\cV^\perp}(\mu)\| \leq \eps/2$,
then we can apply the algorithm of \Cref{thm:brute_force} 
to the projection of the dataset onto $\cV$ to find 
a $\hat{\mu} \in \cV$ with $\|\Proj_{\cV}(\mu) - \hat{\mu}\| \leq \eps/2$. 
This would imply the desired error bound of $\| \mu - \hat{\mu}\| \leq \eps$. 

A standard way to find such a $\cV$ is by leveraging the second moment of the data. 
We have that $\E[xx^\top-I] = \E[m m^\top] \succcurlyeq \frac{2}{3} \mu \mu^\top$, 
since at least $2/3$ of samples (inliers) follow $\cN(\mu,I)$ and the terms corresponding to outliers are positive semidefinite. 
Therefore, if $\cV$ is the subspace spanned by all of the eigenvectors of $\E[xx^\top-I]$ 
with eigenvalues more than $\eps^2/6$, then $\|\Proj_{\cV^\perp}(\mu)\| \leq \eps/2$. Unfortunately, as $m$ is unbounded, $\E[mm^\top]$ could have many such eigenvalues. 
To fix this issue, we will need a way to effectively truncate the larger values of $m$.

To achieve this, we instead estimate the matrix 
$A:=\E[(xx^\top - \frac{1}{1 + 2\gamma} I) \exp(-(\|x\|^2 - d) \gamma)]$, 
for some carefully chosen $\gamma<1$. 
We can explicitly calculate the expectation for $x \sim \cN(m,I)$, 
which turns out to be roughly equal to $mm^\top e^{-\|m\|^2 \gamma}$ 
(cf. \Cref{lemma:expression}).
Since $2/3$ fraction of the samples have $m=\mu$, 
it follows that $A \succcurlyeq \frac{2}{3} \mu \mu^\top e^{-\|\mu\|^2 \gamma}$.
Note that so long as $\|\mu\|$ is not too large 
(indeed, we can assume it is $O(1)$ by some na\"ive outlier removal), 
every unit vector with $v^\top A v \leq \eps^2/6$ 
must satisfy $|v^\top \mu| \leq \eps/2$ (cf. \Cref{lemma:dimerror}). 
Additionally, it is not hard to see that $A$ has trace bounded by $O(1/\gamma)$; 
this implies that the number of eigenvectors with eigenvalue larger than $C \eps^2$ 
is at most $O(1/(\gamma \eps^2))$. Thus, if we take $\cV$ to be the span 
of such eigenvectors, we would have that $\|\Proj_{\cV^\perp}(\mu)\| \leq \eps/2$ 
and $\dim(\cV) = O(1/(\gamma \eps^2))$. 

This procedure implements one round of our 
dimension reduction. For $\gamma=1/\sqrt{d}$, this reduces the dimension 
from $d$ to $O(\sqrt{d}/\eps^2)$. That is, the dimension 
is reduced by a factor of $2$ whenever $d$ is 
at least a large constant multiple of $1/\eps^4$. 
By repeating this procedure, we reduce the dimension down to 
$O(1/\eps^4)$. %
Then, applying the algorithm of \Cref{thm:brute_force} on the remaining 
subspace of dimension $d'=O(1/\eps^4)$, completes the algorithm 
using $2^{O(d' + 1/\eps^2)} =2^{O(1/\eps^4)}$ 
additional samples and time.

The first challenge towards implementing the above approach is approximating $A$ from samples. 
This is non-trivial, as $\exp(-(\|x\|^2-d) \gamma)$ can be as large as $\exp(d\gamma)$, 
suggesting that we might need roughly this many samples. 
Fortunately, we note that $\|x\|^2$ is unlikely to be much less than $d$. 
In particular, the squared norm of a standard normal 
is approximately Gaussian distributed with mean $d$ and standard deviation $\sqrt{d}$. 
This implies that the expected size of $\exp(-(\|x\|^2-d)\gamma)$ is approximately  
$\exp( d \gamma^2 )$ (and it is smaller for Gaussians with other means). 
By simple concentration arguments, %
for each entry of the matrix $A$, it is not difficult to show that 
$\poly(d/\eps)\exp( d \gamma^2)$ samples suffice for $A$ to convergence . 
Thus, the choice $\gamma=1/\sqrt{d}$ made in the previous paragraph 
suffices for our purposes (\Cref{lem:sample_complexity2}).

We emphasize that there are two problems with this approach to be addressed.
The first  is that the runtime of the algorithm presented 
so far is not polynomial: this is because the overall number of samples is $n=\poly(d)2^{\Theta(1/\eps^2)}$, 
while the runtime of the final brute-force step 
is $\poly(d)2^{\Theta(1/\eps^4)}$.
Second, we would like the sample complexity to be linear in $d$ rather than $\poly(d)$. 

In order to resolve the first issue, we need our dimensionality reduction 
to be able to bring the dimension all the way down to $O(1/\eps^2)$--- 
rather than  $O(1/\eps^4)$. To achieve this, we can first reduce the dimension 
to $d'=O(1/\eps^4)$, as described in the earlier paragraphs, 
and then change the value of $\gamma$ to $1/(\sqrt{d'} \eps)$. 
This way, the next iteration will reduce $d'$ to $O(\sqrt{d'}/\eps)$ 
(which means that we can keep halving the dimension until $d'$ 
becomes a constant multiple of $1/\eps^2$). 
Using this value for $\gamma$ implies 
that each iteration will be using $\exp( O(d'  \gamma^2) ) = \exp(O(1/\eps^2))$ 
samples and time (\Cref{lem:sample_complexity2}). 
Finally, we apply \Cref{thm:brute_force} to the resulting subspace $\cV$.

To improve the sample size dependence on the dimension 
(cf. \Cref{lem:sample_complexity}), we require strong 
concentration bounds, in particular the Matrix Bernstein inequality.
However, to apply this inequality, one needs a universal bound 
for our random variable $xx^\top \exp(-(\|x\|^2-d)\gamma)$, 
which is bad when $\|x\|^2$ is small. Fortunately, this happens with small probability, 
and we are able to show that truncating the values when $\|x\|^2$ 
is much smaller than $d$ will not affect the final mean by much. 
However, taking $\gamma=1/\sqrt{d}$ will prove not quite sufficient for our purposes:
there will be roughly a $1/d$-fraction of samples with 
$\|x\|^2 = d-\Omega(\sqrt{d \log(d)})$, 
leading to terms with norm on the order of $d \exp(\Omega(\sqrt{\log(d)}))$, 
and it will take a slightly super-linear number of samples to average these away. 
To handle this, we need to use a slightly smaller value of $\gamma$, 
such as $1/\sqrt{d\log(d)}$.

\paragraph{Notation}

We use $\mathbb{Z}_+$ for the set of positive integers. We denote $[n]=\{1,\ldots,n\}$. For a vector $x$ we denote by $\Norm{x}$ its  Euclidean norm. Let $I_d$  denote the $d\times d$ identity matrix (omitting the subscript when it is clear from the context). We use  $\top$ for the transpose of matrices and vectors.
For a subspace $\mathcal{V}$ of $\R^d$ of dimension $m$, we denote by $\Pi_{\cV} \in \R^{d \times d}$ the orthogonal projection matrix of $\cV$. 
 That is, if the subspace $\cV$ is spanned by the columns of the matrix $A$, then $ \Pi_{\cV}:=A(A^\top A)^{-1} A^\top$.
For a vector $x \in \R^d$, we use $\Proj_{\cV}(x) = \Pi_{\cV} x$ to denote the orthogonal projection of $x$ onto $\cV$.
We say that a symmetric $d\times d$ matrix $A$ is PSD (positive semidefinite) and write $A\succcurlyeq 0$ if for all $x\in \mathbb{R}^d$ it holds $x^\top A x\ge 0$. We use $\Norm{A}_{\op}$ and $\| A \|_\fr$ for the operator (spectral) and Frobenius norm of a matrix $A$ respectively.  
We write $x\sim D$ for a random variable $x$ following the distribution $D$ and use $\E[x]$ for its expectation. We use $\cN(\mu,\Sigma)$ to denote the Gaussian distribution with mean $\mu$ and covariance matrix $\Sigma$. For a scalar random variable $x$, we define the $L_p$-norm of $x$ to be $\|x\|_{L_p} = \E[|x|^p]^{1/p}$. 
We use $a\lesssim b$ to denote that there exists an absolute universal constant $C>0$ (independent of the variables or parameters on which $a$ and $b$ depend) such that $a\le Cb$.
We use $\polylog()$ to denote a quantity that is polylogarithmic in its arguments and we use $\tilde{O}$ to hide such factors.

\section{Efficient Robust Mean Estimation in the Mean-Shift Model: Proof of~\Cref{thm:main}}\label{sec:analysis}

In this section, we present our %
algorithm (\Cref{alg:mean_estimation}) 
and establish \Cref{thm:main}. 
\Cref{sec:brute-force} provides a sample-efficient but computationally  inefficient estimator, 
which will be employed at the final step of our algorithm (after the dimension has been significantly decreased). 
\Cref{sec:matrix,sec:main_analysis} 
analyze our dimension reduction procedure: \Cref{sec:matrix} records the desired 
properties of the reweighted second moment matrix that our algorithm relies on, 
and shows that they hold with high probability 
with sufficiently many samples. 
Finally, \Cref{sec:main_analysis} provides the core 
analysis of the dimension reduction procedure, 
and combines everything to prove \Cref{thm:main}.

\begin{algorithm}[]
\caption{Robust Mean Estimation under \Cref{def:cont}}
\label{alg:mean_estimation}
\begin{algorithmic}[1]
\State \textbf{Input}: Accuracy $\eps > 0$,  sample access to the model of \Cref{def:cont}.
\State \textbf{Output}: $\widehat{\mu} \in \R^d$ such that $\| \widehat{\mu} - \mu \| \leq \eps$.
\vspace{6pt}

\State Fix $C$ a sufficiently large  constant, $n_0 {=} C d$,  $n_1 {=} (d/\eps^{2+o(1)}) \log^C (d) $, and $n_2 {=} 2^{C/\eps^2} \log^C(d)$. \label{line:samples}

\State /*\textbf{Rough Estimation:}*/
\State Draw $T_0$, a set of $n_0$ corrupted points according to \Cref{def:cont}.
\State \label{line:warm_start}Use $T_0$ to find $\widehat{\mu}_0$ with $\|\widehat{\mu}_0 - \mu\|  = O(1)$. \Comment{e.g., Corollary $2.12$ and Exercise 2.10 in \cite{diakonikolas2023algorithmic}}
    
 \State \textbf{/*Dimension Reduction:*/} 
\State Initialize $t \gets 1$,  $k \gets d$ and $\cV_1 = \R^k$.\Comment{$k$ will denote the dimension of the current subspace}

\While{$k \geq %
1/\eps^2
$}\label{line:while}
    \If{$k \geq C\log^4 (d)/\eps^5$}\label{line:phase1} \space set  $\beta \gets \sqrt{\log (k)}$ and $N \gets n_1$, $\eta_t \gets  \left( {\eps}/{ \log d} \right)^2$.\label{line:eta_t_1}
    \Else \space set $\beta \gets \epsilon $ and $N \gets n_2$, $\eta_t \gets 36 \eps/\sqrt{k}$.\label{line:eta_t_2}
    \EndIf
    \State \label{line:sample_set}Draw a set $T_t'$ of $N$ corrupted points from the model of \Cref{def:cont}.
    \State \label{line:transform_set}  $T_t \gets \{ \Proj_{\cV_t}(x - \widehat{\mu}_0) : x \in T_t'\}$.

    \State\label{line:Ahat}
    $\widehat{A}_t \gets \frac{1}{|T_t|}\sum_{x \in T_t} \left(x x^\top -\frac{\beta \sqrt{k}}{\beta \sqrt{k}+2} \Pi_{\cV_t} \right)e^{-\frac{\norm{x}^2}{\beta\sqrt{k}} } \left( 1+\frac{2}{\beta \sqrt{k}} \right)^{\frac{k}{2}+2}$.
    
    \State Find the eigenvectors of $v_1^{(t+1)},\dots,v_{k'}^{(t+1)}$ of  $\widehat{A}_t$ with eigenvalue at least $\eta_t$ (let $k'$ denote the number of such eigenvectors).\label{line:new_eigenvectors}
    \State\label{line:subspace}Let $\cV_{t}$ be the subspace spanned by $\{ v_1^{(t+1)},\dots,v_{k'}^{(t+1)} \}$.
    
    \State Update $k \gets k'$ and $t \gets t + 1$.
    
\EndWhile

\State \textbf{/*Run Inefficient Algorithm:*/} 
\State\label{line:sample_set_end}Sample a set $T_t'$ of $n_2$ corrupted points from the model of \Cref{def:cont}.
\State $T_t \gets \{ \Proj_{\cV_t}(x) : x \in T_t' \}$.
\State \label{line:brute_force}Use $T_t$ to find $\widehat{\mu}_1 \in \cV_t$ with $\| \widehat{\mu}_1 - \Proj_{\cV_t}(\mu ) \| \leq \eps$. \Comment{Use Algorithm from \Cref{thm:brute_force}} 
\State \textbf{return} $\Proj_{\cV_t^\perp}(\widehat{\mu}_0) + \widehat{\mu}_1$.

\end{algorithmic}
\end{algorithm}

\subsection{Computationally Inefficient Multivariate Robust Estimator} \label{sec:brute-force}

\Cref{thm:brute_force} below provides a robust multivariate mean estimator the mean-shift model 
that uses $n=d \, 2^{O(1/\eps^2)}$ samples and 
$2^{O(d)}\poly(n, d)$ runtime. Although the runtime is exponential in the dimension, it becomes just $\poly(n, d)$ if $d=O(1/\eps^2)$. 
Therefore, this estimator will be useful after 
our dimension reduction technique that will be developed in the next sections 
manages to reduce the dimension to $O(1/\eps^2)$.

\begin{restatable}[Inefficient Estimator]{proposition}{BRUTEFORCE}\label{thm:brute_force}
    Let $d \in \Z_+$ denote the dimension, and $C$ be a sufficiently large absolute constant.
    Let $\alpha \leq 0.49$, $\eps>0$, and $\delta \in (0,1)$ be parameters, and $\mu \in \R^d$ be an (unknown) vector.
    There exists an algorithm that, on input $\eps$ and any set of $n \geq 2^{C/\eps^2}(d+\log(1/\delta))$ $\alpha$-corrupted set of points from $\cN(\mu,I)$ under the mean-shift model (cf. \Cref{def:cont}), outputs a $\widehat{\mu}$ such that $\|\widehat{\mu}-\mu \|  \leq \eps$ with probability at least $1-\delta$. Moreover, it runs in time $2^{O(d)}\poly(n, d)$.
\end{restatable}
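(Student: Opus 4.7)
The plan is to reduce the $d$-dimensional problem to a collection of one-dimensional mean-shift estimation problems along an $\eps$-net of the unit sphere, and apply the polynomial-time one-dimensional estimators of \cite{li2023robust,kotekal2024optimal} to each of them. Fix a sufficiently large constant $C_0$ and let $\cC$ be an $(\eps/C_0)$-cover of the unit sphere in $\R^d$; by a volume argument, $\card{\cC} \leq (3C_0/\eps)^d = 2^{O(d\log(1/\eps))}$. The key observation is that projecting samples preserves the mean-shift structure: if $x\sim \cN(m,I)$ with $m \in \{\mu\}\cup\{z_i\}$, then $v^\top x \sim \cN(v^\top m, 1)$, so $\{v^\top x : x\in T\}$ is a one-dimensional $\alpha$-corrupted mean-shift sample with inlier mean $v^\top \mu$ and arbitrary outlier means $v^\top z_i$. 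The cited one-dimensional estimator, on input $2^{O(1/\eps^2)}$ samples, returns $\hat m$ with $|\hat m - v^\top \mu| \leq \eps/C_0$ with constant probability, and runs in polynomial time.

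To get high confidence uniformly over $\cC$, I would boost the one-dimensional estimator via a standard median-of-means: partitioning the samples into $K = O(\log(1/\delta'))$ groups and taking the coordinate-wise median amplifies the success probability to $1-\delta'$ at the cost of a $\log(1/\delta')$ factor in sample size. Choosing $\delta' = \delta/\card{\cC}$ and union-bounding over $v \in \cC$, the per-projection error bound $|\hat m_v - v^\top \mu| \leq \eps/C_0$ holds simultaneously for all $v\in\cC$ with probability at least $1-\delta$. The total sample size needed is
\[
2^{O(1/\eps^2)}\log(\card{\cC}/\delta) \;=\; 2^{O(1/\eps^2)}\bigl(d\log(1/\eps) + \log(1/\delta)\bigr) \;\leq\; 2^{O(1/\eps^2)}\bigl(d + \log(1/\delta)\bigr),
\]
where the last inequality uses $\log(1/\eps) \leq 2^{1/\eps^2}$ for $\eps \in (0,1)$; crucially, the same sample set $T$ is reused across all projections.

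To reconstruct $\hat\mu$ from the scalar estimates, solve the feasibility linear program: find $\hat\mu\in\R^d$ with $|v^\top \hat\mu - \hat m_v| \leq \eps/C_0$ for every $v\in \cC$. This LP is feasible because $\mu$ itself satisfies all the constraints (on the union-bound event). The final error bound follows from a standard cover argument: writing $u^\star = (\hat\mu - \mu)/\|\hat\mu - \mu\|$ and picking $v\in\cC$ with $\|u^\star - v\| \leq \eps/C_0$, the triangle inequality and Cauchy--Schwarz yield
\[
\|\hat\mu - \mu\| \;=\; u^{\star\top}(\hat\mu - \mu) \;\leq\; \frac{\eps}{C_0}\|\hat\mu - \mu\| + 2\eps/C_0,
\]
which rearranges to $\|\hat\mu - \mu\| \leq \eps$ once $C_0$ is taken large enough. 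The runtime is dominated by running the one-dimensional estimator $\card{\cC} = 2^{O(d)}$ times (each in $\poly(n)$) and solving the LP with $\card{\cC}$ constraints and $d$ variables, giving $2^{O(d)}\poly(n,d)$ overall.

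There is no serious obstacle here; the proof is a textbook reduction to one dimension via covering, and the only nontrivial input is the existence of the cited polynomial-time one-dimensional mean-shift estimator. The mildly delicate bookkeeping is to verify that the union-bound blow-up $\log\card{\cC} = O(d\log(1/\eps))$ gets absorbed into the additive $d$ term of the stated sample bound, which it does because $\log(1/\eps)$ is dwarfed by $2^{C/\eps^2}$.
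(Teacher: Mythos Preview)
Your approach is essentially the same as the paper's: project onto every direction in a cover of the sphere, run the cited one-dimensional mean-shift estimator with boosted confidence, union-bound, and reconstruct via a feasibility LP. The paper's reconstruction step uses the inequality $\|x\| \leq 2\max_{v\in\cC}|v^\top x|$ for a constant-accuracy cover, whereas you argue directly via the direction $u^\star$; both are standard and equivalent.

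One minor slip to fix: you take $\cC$ to be an $(\eps/C_0)$-cover, which has size $(3C_0/\eps)^d = 2^{O(d\log(1/\eps))}$, but then in the runtime paragraph you write $\card{\cC}=2^{O(d)}$. These are not the same, and $2^{O(d\log(1/\eps))}$ does not in general fit inside the claimed $2^{O(d)}\poly(n,d)$ bound. The remedy is trivial: a \emph{constant}-accuracy cover (say a $1/4$-net, of size $\leq 9^d$) already suffices for your own reconstruction argument, since
\[
\|\hat\mu-\mu\| \;\leq\; \tfrac{1}{4}\|\hat\mu-\mu\| + |v^\top\hat\mu - \hat m_v| + |\hat m_v - v^\top\mu| \;\leq\; \tfrac{1}{4}\|\hat\mu-\mu\| + 2\eps/C_0,
\]
giving $\|\hat\mu-\mu\|\leq \eps$ for $C_0$ large enough. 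This is exactly the choice the paper makes and restores the $2^{O(d)}$ runtime.
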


\Cref{thm:brute_force} follows in a relatively standard way by 
taking a fine discretization of the unit sphere, running the one-
dimensional estimator for each of the directions in that cover set, and 
combining the solutions to a vector (see, e.g., Section 1.5 of 
\cite{diakonikolas2023algorithmic}). For completeness, we provide a proof of correctness in \Cref{appendix:omited_main} using the univariate 
estimator of \Cref{fact:onedim} as a black-box. A nice property of the 
prior work is that the robust univariate estimator 
has breakdown point arbitrarily close to $1/2$ and does not need 
to know the contamination parameter (c.f. Section 5.3 of \cite{kotekal2024optimal}).

\begin{fact}[One-dimensional estimator, see, e.g., \cite{li2023robust,kotekal2024optimal}]\label{fact:onedim}
    Let $\mu \in \R$ be an (unknown) mean. Let  $C$ be a sufficiently large constant and $\alpha \leq 0.49$. There is an algorithm that given,  $\eps > 0$, $\delta \in (0,1)$ and a set of $n = 2^{C/\eps^2}\log(1/\delta)$ $\alpha$-corrupted samples from $\cN(\mu,1)$ according to the mean-shift model (\Cref{def:cont}),  finds $\widehat{\mu} \in \R$ such that, with probability at least $1-\delta$, it holds $|\widehat{\mu} - \mu| \leq \eps$. The runtime of the algorithm is $\poly(n)$.
\end{fact}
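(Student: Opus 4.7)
The plan is to prove \Cref{fact:onedim} by a characteristic-function (Fourier) approach that exploits the defining feature of the mean-shift model: every sample, inlier or outlier, is a convolution with the same Gaussian noise $\cN(0,1)$. Writing $\hat{f}_n(t) := \frac{1}{n}\sum_j e^{itx_j}$ for the empirical characteristic function and $\psi(t) := \frac{1}{n}\sum_j e^{itz_j}$ for the (adversarial) characteristic function of the outlier centers, a direct calculation gives
\[
\E[\hat{f}_n(t)]\cdot e^{t^2/2} \;=\; (1-\alpha)\,e^{it\mu} + \alpha\,\psi(t), \qquad |\psi(t)|\le 1.
\]
Thus the scaled empirical characteristic function lies within $\alpha$ of the target $(1-\alpha)e^{it\mu}$, up to sampling error, regardless of how the adversary chose $\{z_j\}$. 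This identity is the structural hook on which the whole argument hangs.

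First I would compute a crude warm-start $\hat{\mu}_0$ with $|\hat{\mu}_0-\mu|=O(1)$ via the sample median: for $\alpha\le 0.49$ the quantile inequalities $(1-\alpha)\Phi(m-\mu)\le \tfrac12\le (1-\alpha)\Phi(m-\mu)+\alpha$ pin the population median of the mixture within an $O(1)$ neighborhood of $\mu$, and the sample median concentrates from $O(\log(1/\delta))$ samples by DKW. This reduces the candidate set to a grid $\mathcal{N}\subset[\hat{\mu}_0-R,\hat{\mu}_0+R]$ of $O(1/\eps)$ points at spacing $\eps/2$ for some absolute $R$.

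Next I would fix a frequency grid $\mathcal{T}\subset[-T,T]$ of spacing $\eps$ with $T=\Theta(1/\eps)$, compute
\[
S(\nu)\;:=\;\max_{t\in\mathcal{T}}\,\bigl| \hat{f}_n(t)\,e^{t^2/2} - (1-\alpha)\,e^{it\nu} \bigr|,
\]
and output $\widehat{\mu}:=\arg\min_{\nu\in\mathcal{N}}S(\nu)$. Correctness reduces to two inequalities. At $\nu=\mu$, the structural identity and concentration yield $S(\mu)\le\alpha+\xi$ for an empirical error $\xi$ defined below. For $|\nu-\mu|\ge\eps$, choosing $t^\ast\in\mathcal{T}$ close to $\pi/(\nu-\mu)$ forces $|e^{it^\ast\mu}-e^{it^\ast\nu}|\ge 2-o(1)$, so by the reverse triangle inequality
\[
S(\nu)\;\ge\;(1-\alpha)(2-o(1))-\alpha-\xi\;=\;(2-3\alpha)-o(1)-\xi\;\ge\;0.53-o(1)-\xi,
\]
which strictly exceeds $\alpha+\xi$ once $\xi$ is a sufficiently small constant. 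Hence $\widehat{\mu}$ must lie within $\eps$ of $\mu$.

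The main technical obstacle is controlling $\xi:=\max_{t\in\mathcal{T}}|(\hat{f}_n(t)-\E\hat{f}_n(t))\,e^{t^2/2}|$, because the inflation factor $e^{t^2/2}$ on $\mathcal{T}$ can be as large as $2^{O(1/\eps^2)}$. A complex-valued Hoeffding bound on the bounded summands $e^{itx_j}$ gives $|\hat{f}_n(t)-\E\hat{f}_n(t)|\lesssim\sqrt{\log(1/\delta')/n}$ at each fixed $t$, and a union bound over the $|\mathcal{N}||\mathcal{T}|=\poly(1/\eps)$ pairs is absorbed into $\log(1/\delta)$. Driving $\xi$ below the available margin $\tfrac12(0.53-\alpha)>0$ therefore forces $n\gtrsim 2^{C/\eps^2}\log(1/\delta)$, which is exactly the advertised bound; note that the margin $2-4\alpha$ shrinks as $\alpha\to 1/2$, consistent with the assumption $\alpha\le 0.49$ being encoded into the constant $C$. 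The runtime is $O(n\,|\mathcal{N}||\mathcal{T}|)=\poly(n)$.
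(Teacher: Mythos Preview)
The paper does not prove \Cref{fact:onedim}; it is invoked as a black-box from \cite{li2023robust,kotekal2024optimal}, whose one-dimensional estimators are indeed built on the empirical characteristic function exactly as you propose. So your high-level strategy is the right one and matches the literature. Two parameter-level details need repair, however.

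First, your score $S(\nu)=\max_{t\in\mathcal T}\bigl|\hat f_n(t)e^{t^2/2}-(1-\alpha)e^{it\nu}\bigr|$ uses $\alpha$, but the statement lists only $\eps,\delta$ and the samples as inputs (and the paper explicitly notes that the cited estimators do not need to know $\alpha$). This is cheap to fix: replace $1-\alpha$ by the worst-case inlier fraction $0.51$, or additionally minimize $S$ over $a\in[0,0.49]$ on a coarse grid. A quick check shows the near/far separation survives with the same margin.

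Second, your $\nu$-grid spacing $\eps/2$ is too coarse relative to $T=\Theta(1/\eps)$. You write ``At $\nu=\mu$, \dots $S(\mu)\le\alpha+\xi$'', but $\mu$ is not a grid point; for the closest grid point $\nu^\ast$ one picks up an extra term
\[
(1-\alpha)\max_{|t|\le T}\bigl|e^{it\mu}-e^{it\nu^\ast}\bigr|\;\le\;(1-\alpha)\,T\,|\mu-\nu^\ast|\;=\;\Theta(1),
\]
which swamps the $0.04$ margin between your ``far'' bound $0.53-\xi$ and the ``near'' bound $0.49+\xi$. Refining the $\nu$-grid to spacing $c\eps$ for a sufficiently small absolute constant $c$ (so that $T\cdot c\eps \le 0.01$, say) makes this term negligible while keeping $|\mathcal N|=O(1/\eps)$. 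After these two adjustments your argument goes through as written.
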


\subsection{Reweighted Second Moment Matrix}\label{sec:matrix}

The structure of this section is as follows. 
\Cref{eq:matrixA}  
defines the reweighted second moment matrix that will be 
used by our dimensionality reduction procedure. Subsequently, 
\Cref{def:deterministic} states the deterministic conditions regarding 
this matrix that will be required for the correctness of 
our dimension reduction algorithm. Finally, in 
\Cref{lem:sample_complexity,lem:sample_complexity2}, we show that the 
matrix satisfies these conditions given sufficiently many samples.

For $\beta \in (0,1]$, a subspace $\cV \subseteq \R^d$ of dimension $k\leq d$, and points $x_1,\ldots,x_n \in \cV$, define the matrix: \looseness=-1
\begin{align}\label{eq:matrixA}
     &\widehat{A} := \frac{1}{n} \sum_{i=1}^n \widehat{A}_{i},  \quad \text{where} \;\;   \widehat{A}_i :=  F_{\beta, k}(x_i)  Z_{\beta, k}, \notag\\  
     &Z_{\beta, k} := \left( 1 +\frac{2}{\beta \sqrt{k}} \right)^{\frac{k}{2}+2},  
     \quad F_{\beta,k}(x):= \left(x x^\top - \frac{\beta\sqrt{k}}{\beta\sqrt{k}+2} \Pi_{\cV}  \right)e^{-\frac{\|x\|^2}{\beta \sqrt{k}}} \;,
\end{align}
where $\Pi_{\cV}$ denotes the orthogonal projection matrix of the subspace $\cV$.
The subspace $\cV$ will be the one that the algorithm maintains in each 
round (and whose dimension decreases in every round). For simplicity, the 
reader can think of $\cV = \R^k$ and $\Pi_{\cV} = I_k$; however, since the 
basis of the subspace may not be aligned with the elements of the standard 
orthonormal basis, using $\Pi_{\cV}$ is required in general.  

The definition of $\widehat{A}$ has been designed so that the expectation of a single sample's deviation, namely $(xx^\top - \frac{\beta\sqrt{k}}{\beta \sqrt{k}+2}I)e^{-\frac{\|x\|^2}{\beta \sqrt{k}}}$
when  $x \sim \cN(z,I)$ is roughly proportional to $zz^\top e^{-\frac{\|z\|^2}{\beta\sqrt{k}}}$. This is shown in \Cref{lemma:expression} below. 
The factor $Z_{\beta,k}$ is the appropriate normalization factor that arises.

\begin{restatable}{lemma}{Expression}\label{lemma:expression}
Let $\cV \subseteq \R^{d}$ be a subspace of $\R^{d}$ 
of dimension $k\leq d$. For any $z \in \cV$, we have that
$\E_{x \sim \cN(z, \Pi_{\cV})}\left[F_{\beta,k}(x)\right] Z_{\beta,k}
    = 
    zz^\top e^{-\frac{\norm{z}^2}{\beta\sqrt{k}+2}  }$, 
where $F_{\beta,k}$ and $Z_{\beta,k}$ are as defined in \Cref{eq:matrixA}.
\end{restatable}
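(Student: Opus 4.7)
The plan is to reduce to the case $\cV = \R^k$ (so that $\Pi_{\cV}$ acts as the identity $I_k$ and $\cN(z,\Pi_{\cV})$ is a genuine $k$-dimensional Gaussian with mean $z$) and then carry out the Gaussian integral by completing the square. Set $\gamma := 1/(\beta\sqrt{k})$ and $a := 1+2\gamma$, so that $a = 1 + 2/(\beta\sqrt{k})$ and $\frac{\beta\sqrt{k}}{\beta\sqrt{k}+2} = \frac{1}{a}$. The claim is then equivalent to
\[
    \E_{x\sim\cN(z,I_k)}\!\Bigl[\bigl(xx^\top - \tfrac{1}{a}I_k\bigr)e^{-\gamma\|x\|^2}\Bigr] = a^{-k/2-2}\, zz^\top\, e^{-\|z\|^2/(\beta\sqrt{k}+2)}.
\]

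First I would combine the two exponential factors in the integrand by expanding
\[
    -\gamma\|x\|^2 - \tfrac12\|x-z\|^2 = -\tfrac{a}{2}\bigl\|x - z/a\bigr\|^2 - \tfrac{\|z\|^2}{\beta\sqrt{k}+2},
\]
using the identity $\tfrac{1}{2}\bigl(1 - \tfrac{1}{a}\bigr) = \tfrac{\gamma}{1+2\gamma} = \tfrac{1}{\beta\sqrt{k}+2}$. This shows that, up to the scalar factor $e^{-\|z\|^2/(\beta\sqrt{k}+2)}$, the integrand is $(xx^\top - \tfrac{1}{a}I_k)$ integrated against an (unnormalized) Gaussian density with mean $z/a$ and covariance $a^{-1}I_k$.

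Next I would change variables to $y := \sqrt{a}\,(x - z/a)$, so that $x = y/\sqrt{a} + z/a$ and $dx = a^{-k/2}\,dy$, and expand $xx^\top = \tfrac{1}{a}yy^\top + \tfrac{1}{a^{3/2}}(yz^\top + zy^\top) + \tfrac{1}{a^2}zz^\top$. Taking expectation under $y\sim\cN(0,I_k)$, the odd terms vanish and $\E[yy^\top]=I_k$, yielding
\[
    \E_{x\sim\cN(z,I_k)}\!\bigl[xx^\top e^{-\gamma\|x\|^2}\bigr] = a^{-k/2}\, e^{-\|z\|^2/(\beta\sqrt{k}+2)} \Bigl(\tfrac{1}{a}I_k + \tfrac{1}{a^2}zz^\top\Bigr),
\]
and similarly $\E[e^{-\gamma\|x\|^2}] = a^{-k/2}e^{-\|z\|^2/(\beta\sqrt{k}+2)}$. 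Subtracting $\tfrac{1}{a}I_k$ times the latter kills the identity piece, leaving exactly $a^{-k/2-2}zz^\top e^{-\|z\|^2/(\beta\sqrt{k}+2)}$. Multiplying by $Z_{\beta,k} = a^{k/2+2}$ yields the claimed identity.

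There is no serious obstacle; the argument is a direct Gaussian computation. The only bookkeeping step that needs care is verifying the algebraic identity $\tfrac{\gamma}{1+2\gamma} = \tfrac{1}{\beta\sqrt{k}+2}$ after completing the square, which is precisely what matches the exponent in the final expression and motivates the definition of the normalizing constant $Z_{\beta,k}$.
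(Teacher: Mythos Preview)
Your proposal is correct and follows essentially the same approach as the paper: reduce to $\cV=\R^k$, complete the square in the combined exponent to recognize a Gaussian with mean $z/a$ and covariance $a^{-1}I_k$, and then observe that the $\tfrac{1}{a}I_k$ subtraction exactly cancels the covariance contribution so that only the rank-one term $a^{-k/2-2}zz^\top$ survives. The paper phrases the middle step as identifying the expectation under $\cN(z/c,\,c^{-1}I)$ rather than performing an explicit substitution $y=\sqrt{a}(x-z/a)$, but this is the same calculation.
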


The deterministic conditions that we require to hold with high probability 
quantify how close $\hat{A}$ needs to be to its 
expectation $A :=\frac{1}{n}\sum_{i=1}^n z_i z_i^\top e^{-\frac{\|z_i\|^2}{\beta \sqrt{k}+2}}$.
For the inlier points, 
we will require closeness of the reweighted second moment 
to its expectation in operator norm (cf. \Cref{def:concentrated}). 
For the outlier samples, it will suffice to have the milder condition 
that the reweighted second moment is not too negative (cf. \Cref{def:positivedef}). Finally, note that the expectation matrix 
$A := \frac{1}{n}\sum_{i=1}^n z_i z_i^\top e^{-\frac{\|z_i\|^2}{\beta \sqrt{k}+2}}$ has 
$\tr(A) = \frac{1}{n}\sum_{i=1}^n \|z_i\|^2 e^{-\frac{\|z_i\|^2}{\beta \sqrt{k}+2}} \leq \beta \sqrt{k}+2$, where we used the elementary 
inequality $y e^{-y/\gamma} \leq \gamma$ for $y > 0$. 
The bounded trace will be a key property for the correctness 
of our dimensionality reduction procedure. 
We will thus require the same property to hold for 
the empirical reweighted moment matrix. 
\Cref{def:deterministic} combines all three of the aforementioned conditions.

\begin{definition}[($\eta,\beta$)-concentrated set] \label{def:concentrated}
Let $\eta,\beta>0$ be parameters,  
 $\cV$ be a $k$-dimensional subspace of $\R^d$ and let $x_1,\ldots, x_n  \in \cV$. We say that the set  $\{ x_1,\ldots, x_n \}$  is $(\eta,\beta)$-\emph{concentrated} with respect to $\cV$ and  $z_1, \ldots, z_{n} \in \cV$, if  the matrix $\widehat{A}$ defined in \Cref{eq:matrixA}
satisfies $\| \widehat{A} -  A \|_{\op} \leq \eta$, where $A :=\frac{1}{n}\sum_{i=1}^n z_i z_i^\top e^{-\frac{\|z_i\|^2}{\beta \sqrt{k}+2}}$.
\end{definition}
\begin{definition}[($\eta,\beta$)-positive definite] \label{def:positivedef}
Let $\eta,\beta>0$ be parameters and 
 $\cV$ be a $k$-dimensional subspace of $\R^d$. We say that the set  of points $x_1,\ldots, x_n  \in \cV$  is $(\eta,\beta)$-\emph{positive definite}, if the matrix $\widehat{A}$ defined in \Cref{eq:matrixA}
satisfies $v^\top\widehat{A}v\ge -\eta$, for all $v\in \cV$ with $\norm{v}=1$.
\end{definition}

\begin{definition}[$(\eta,\beta)$-good set]\label{def:deterministic}
    Let $\eta,\beta>0$ be parameters and
    $\cV$ be a $k$-dimensional subspace of $\R^d$. A set $T$ of points in $\cV$ is called $(\eta,\beta)$-good with respect to $\cV$ and the vectors $\mu,z_1,\ldots,z_{\alpha n} \in \cV$ if there exists  $S \subseteq T$, with $|S| = (1-\alpha)|T|$ such that:
    \begin{enumerate}[leftmargin=*]
        \item (Condition for inliers) \label{it:inliers}$S$ is ($\eta,\beta$)-concentrated with respect to $\cV$ and  the vector sequence $\mu,\ldots,\mu$ (i.e., the sequence that has the $\mu$ vector $(1-\alpha) n$ times).
        \item (Condition for outliers) \label{it:outliers}$T \setminus S$ is $(\eta,\beta)$-positive definite.
        \item (Bounded trace condition) \label{it:trace_bound}The matrix $\widehat{A}$ from \eqref{eq:matrixA} computed over all $x_i \in T$ has $\tr(\widehat{A}) \leq 18 \beta \sqrt{k}$.
    \end{enumerate}

\end{definition}

\Cref{lem:sample_complexity,lem:sample_complexity2} show that a sufficiently large set of samples from our model 
satisfies our $(\eta,\beta)$-goodness conditions with high probability.
We need two lemmata because our algorithm will use $\beta=\sqrt{\log{k}}$ 
for the most part, and $\beta=\eps$ for the last few iterations. 

For the  case $\beta=\sqrt{\log{k}}$ (corresponding to \Cref{lem:sample_complexity}), we need to use strong concentration bounds in order for the sample complexity to scale linearly (up to polylog factors) with the dimension $k$ (see proof sketch at the end of this section and full proof in \Cref{appendix:sample}).

\begin{restatable}{lemma}{SAMPLES}\label{lem:sample_complexity}
    Let $\eta \in (0,1)$  and $k \in \Z_+$ denote the dimension, and assume $k$ is bigger than a sufficiently large constant.
    There exists sample size 
    $n=   k\log^3(k)(1/\eta)^{2+o(1)}\frac{1}{\delta} $ 
    such that the following holds:
    Let $T$ be a set of $n$ $\alpha$-corrupted points from $\cN\left(\mu,I\right)$ according to \Cref{def:cont} with the assumption that $\|\mu\|=O\left(1\right)$.  
    Denote by $z_1,\ldots,z_{\alpha n}$ the adversarial  centers %
    in \Cref{def:cont}.
    Then, with probability at least $1-\delta$, $T$ is $(\eta,\sqrt{\log k})$-good with respect to $\mu,z_1,\ldots,z_{\alpha n}$.
\end{restatable}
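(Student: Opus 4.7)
The plan is to verify the three conditions of Definition~\ref{def:deterministic} separately, each via a matrix concentration argument applied to the random matrices $\widehat{A}_i = F_{\beta,k}(x_i)\, Z_{\beta,k}$. The principal difficulty is that with $\beta = \sqrt{\log k}$ the normalization $Z_{\beta,k} = \exp(\Theta(\sqrt{k/\log k}))$ is exponentially large in $k$; on typical samples it is canceled by $e^{-\|x_i\|^2/(\beta\sqrt{k})}$ since $\|x_i\|^2 \approx k$, but on the rare event that $\|x_i\|^2$ dips significantly below $k$ the operator norm of $\widehat{A}_i$ can blow up, so that a naive application of Matrix Bernstein would give a superlinear sample complexity in $k$.

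To handle this I would first set up a truncation argument. Define the threshold $B := k - C\sqrt{k\log k}$ for a sufficiently large constant $C>0$, and the good event $\cG_i := \{\|x_i\|^2 \ge B\}$. On $\cG_i$, one checks directly that $e^{-B/(\beta\sqrt{k})}\, Z_{\beta,k} = O(1)$, and hence $\|\widehat{A}_i\|_{\op} = O(k)$. By Laurent--Massart chi-square tail bounds applied to $\|x_i - z_i\|^2$, one has $\Pr[\overline{\cG_i}] \le k^{-\Omega(C^2)}$ uniformly over the choice of the mean $z_i$ (outliers with large $\|z_i\|$ only make $\|x_i\|^2$ concentrate around the even larger value $\|z_i\|^2 + k$, so the bound is in fact easier there). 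A union bound then gives $\Pr[\bigcap_i \cG_i] \ge 1 - o(\delta)$ for $n$ polynomial in $k$ with $C$ large enough.

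On the truncated event I would establish the inlier condition via Matrix Bernstein. By Lemma~\ref{lemma:expression}, $\E[\widehat{A}_i] = \mu\mu^\top e^{-\|\mu\|^2/(\beta\sqrt{k}+2)}$ for inlier samples, and the matrix second moment can be evaluated in closed form by reapplying the same lemma with $\beta$ replaced by $\beta/2$, yielding $\|\E[(\widehat{A}_i)^2]\|_{\op} = \tilde{O}(k)$. With per-sample operator bound $R = O(k)$ and per-sample variance $\tilde{O}(k)$, Matrix Bernstein produces $\E\|\widehat{A}_S - A_S\|_{\op} = \tilde{O}(\sqrt{k/n}+k/n)$, and Markov's inequality on the operator norm then yields the linear $1/\delta$ factor appearing in the stated sample complexity. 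The outlier condition reduces to showing $\|\widehat{A}_{T\setminus S} - A_O\|_{\op} \le \eta$, where $A_O := \frac{1}{|T\setminus S|}\sum_i z_i z_i^\top e^{-\|z_i\|^2/(\beta\sqrt{k}+2)}$ is PSD; the key bound here is the scalar inequality $y\, e^{-y/c} \le c/e$, which gives $\|\E[\widehat{A}_i]\|_{\op} \le (\beta\sqrt{k}+2)/e$ for every adversarial $z_i$, so the same Matrix Bernstein argument goes through. Finally, the trace condition follows from a scalar Bernstein inequality applied to $\tr(\widehat{A}_T) = \frac{1}{n}\sum_i \tr(\widehat{A}_i)$, whose expectation is at most $O(\beta\sqrt{k})$ by the same scalar bound.

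The hardest step is calibrating the truncation threshold $B$ and the parameter $\beta$ together so that (a) the chi-square tail probability $k^{-\Omega(C^2)}$ is small enough to absorb the union bound over $n = \mathrm{poly}(k)$ samples, while simultaneously (b) the truncated operator norm and matrix second moment remain $\tilde{O}(k)$ in order to enable the near-linear sample complexity; the choice $\beta = \sqrt{\log k}$ is exactly what makes both requirements balance. A final union bound over the three failure events with suitably chosen constants yields the $(\eta, \sqrt{\log k})$-goodness of $T$ with probability at least $1-\delta$ from $n = k\log^3(k)(1/\eta)^{2+o(1)}/\delta$ samples.
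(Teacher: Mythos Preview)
Your truncation setup and the treatment of the inlier condition are on the right track, but the outlier step has a real gap: the claim that ``the same Matrix Bernstein argument goes through'' for arbitrary $z_i$ is false, because the matrix variance $\|\E[\widehat{A}_i^2]\|_{\op}$ is not $\tilde{O}(k)$ uniformly in $z_i$. Writing $x_i=z_i+g_i$, the term $\|g_i\|^2\, z_iz_i^\top\, w_i^2$ inside $\widehat{A}_i^2=(x_ix_i^\top-cI)^2w_i^2$ contributes to the operator norm an amount of order $k\,\|z_i\|^2\, e^{-2\|z_i\|^2/\sqrt{k\log k}}$, which is maximized at $\|z_i\|^2=\Theta(\sqrt{k\log k})$ with value $\Theta(k^{3/2}\sqrt{\log k})$. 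With this variance, Matrix Bernstein yields $n\gtrsim k^{3/2}/\eta^2$, not the near-linear $k\log^3(k)/\eta^{2+o(1)}$ the lemma asserts. (Your remark that $\|\E[\widehat{A}_i]\|_{\op}\le(\beta\sqrt{k}+2)/e$ controls only the mean, not the second moment; and ``reapplying Lemma~\ref{lemma:expression} with $\beta\to\beta/2$'' does not compute $\E[(xx^\top-cI)^2 w^2]$, since the matrix part is no longer of the form $xx^\top-c'I$.)

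This is exactly why the paper decomposes $\widehat{A}_i$ into three pieces via $x_i=z_i+g_i$: the pure $z_iz_i^\top$ piece has matrix variance $\|z_i\|^4 e^{-2\|z_i\|^2/\sqrt{k\log k}}=\tilde{O}(k)$ and is handled by Matrix Bernstein; the pure-Gaussian piece $(g_ig_i^\top-cI)$ is subexponential and handled by a scalar Bernstein plus cover; and the cross piece $z_ig_i^\top+g_iz_i^\top$ is controlled \emph{multiplicatively} relative to $v^\top A v$ (its subgaussian proxy variance along $v$ is $\lesssim(v^\top z_i)^2 e^{-\|z_i\|^2/\sqrt{k\log k}}$), so that after summing one gets an error of size $O(\eta)\sqrt{v^\top A v}$. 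Since $A\succeq 0$, this multiplicative slack is exactly what is needed to certify $v^\top\widehat{A}v\ge-\eta$ without paying the $k^{3/2}$ price. Your monolithic application of Matrix Bernstein to the whole $\widehat{A}_i$ cannot exploit this structure. A smaller point: the trace bound actually holds deterministically (each summand $(\|x_i\|^2-ck)\,e^{-(\|x_i\|^2-k)/\sqrt{k\log k}}$ is at most $O(\sqrt{k\log k})$ for every $x_i$), so no scalar Bernstein is needed there.
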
 

For the case $\beta=\eps$, we can resort to a simpler proof which consists of 
calculating the variance of each entry of our random matrix and using Chebyshev's 
inequality entry-wise. This is done in \Cref{lem:sample_complexity2}. Although the 
resulting dependence on the dimension $k$ scales as $k^5$, we will use this bound 
only after the dimension has decreased to roughly $1/\eps^5$; 
meaning that the sample complexity will be dominated by the factor $2^{C/\eps^2}$. The 
full proof can also be found in \Cref{appendix:sample}.

\begin{restatable}{lemma}{SAMPLESTWO}\label{lem:sample_complexity2}
    Let $T$ be a set of $n$ $\alpha$-corrupted set of points from $\cN(\mu,I)$ according to \Cref{def:cont} for some $\mu$ with $\|\mu\|=O(1)$.  Assume
    $n=  \frac{k^5}{\eta^2 \delta}  2^{C/\eps^2}$ for a sufficiently large absolute constant $C$, where  $k\ge 1/\eps^2$ and $\eta\le \eps$.
    Denote by $z_1,\ldots,z_{\alpha n}$ the adversarial points used in \Cref{def:cont}.   Then, with  probability at least $1-\delta$, $T$ is $(\eta,\eps)$-good with respect to $\mu,z_1,\ldots,z_{\alpha n}$.
\end{restatable}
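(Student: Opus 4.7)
The plan follows the same template as the proof of \Cref{lem:sample_complexity}, but with the simpler concentration tool of entrywise Chebyshev's inequality replacing Matrix Bernstein (allowed because $\beta = \eps$ keeps the single-sample variance at $\mathrm{poly}(k)\cdot 2^{O(1/\eps^2)}$ once $\eps\sqrt{k}\ge 1$). We partition $T$ into an inlier set $S\subseteq T$ of size $(1-\alpha)|T|$ (adjusting by $O(\sqrt{n\log(1/\delta)})$ from the random true inlier count via Chernoff) and its complement $T\setminus S$, and separately verify the three $(\eta,\eps)$-goodness conditions for $\widehat{A}_S$, $\widehat{A}_{T\setminus S}$, and $\widehat{A}_T$ (the matrices from \eqref{eq:matrixA} built from the respective subsets of $T$). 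By \Cref{lemma:expression}, their expectations are $\E[\widehat{A}_S] = \mu\mu^\top e^{-\|\mu\|^2/(\eps\sqrt{k}+2)}$ and $\E[\widehat{A}_{T\setminus S}] = \tfrac{1}{|T\setminus S|}\sum_{i\in T\setminus S} z_iz_i^\top e^{-\|z_i\|^2/(\eps\sqrt{k}+2)}$, the latter being PSD.

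The key technical step is a uniform-in-$z_i$ bound $\E[(\widehat{A}_i)_{j,l}^2] \le k \cdot 2^{C_0/\eps^2}$ for some absolute constant $C_0$, when $x_i\sim \cN(z_i,\Pi_{\cV})$. Working in an orthonormal basis for $\cV$ so that effectively $x_i\sim\cN(z_i,I_k)$, and completing the square in the exponent $-\|x-z_i\|^2/2 - 2\|x\|^2/(\eps\sqrt{k})$, one rewrites the reweighted density as a $\lambda^{-k/2} e^{-\|z_i\|^2(1-1/\lambda)/2}$-multiple of the $\cN(z_i/\lambda,\lambda^{-1}I_k)$ density, where $\lambda := 1 + 4/(\eps\sqrt{k})$. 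Integrating $(x_jx_l - c\,\delta_{j,l})^2$ against this density (with $c := \eps\sqrt{k}/(\eps\sqrt{k}+2)$) and multiplying by $Z_{\eps,k}^2$ yields
\[
\E[(\widehat{A}_i)_{j,l}^2] \;\le\; O(1+\|z_i\|^4)\cdot e^{-\|z_i\|^2(1-1/\lambda)/2}\cdot Z_{\eps,k}^2\,\lambda^{-k/2}.
\]
The hypothesis $k\ge 1/\eps^2$ gives $a := 2/(\eps\sqrt{k})\in(0,2]$, whence a direct Taylor-series check shows $Z_{\eps,k}^2\,\lambda^{-k/2} = (1 + a^2/(1+2a))^{k/2}(1+a)^4 \le e^{O(1/\eps^2)}$. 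Maximizing the outlier-dependent factor $\|z_i\|^4 e^{-\|z_i\|^2(1-1/\lambda)/2}$ over $z_i$ contributes $O(\eps^2 k) \le O(k)$, establishing the variance bound.

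Applying Chebyshev's inequality entrywise with target precision $\eta/k$ to each of the three averaged matrices, every entry deviates from its expectation by more than $\eta/k$ with probability at most $k^2\cdot k\cdot 2^{C_0/\eps^2}/(n\eta^2)$. A union bound over $k^2$ entries and three matrices yields total failure probability $O(k^5 \cdot 2^{C_0/\eps^2}/(n\eta^2))$, which is at most $\delta$ for the stated $n = (k^5/(\eta^2\delta)) \cdot 2^{C/\eps^2}$ with $C$ sufficiently large compared to $C_0$. On this good event, $\|\widehat{A}_S - \E\widehat{A}_S\|_{\mathrm{op}} \le \|\widehat{A}_S - \E\widehat{A}_S\|_F \le \eta$ verifies condition \ref{it:inliers}; the inequality $v^\top\widehat{A}_{T\setminus S}v \ge v^\top \E\widehat{A}_{T\setminus S} v - \eta \ge -\eta$ (using PSD-ness of the expectation) verifies condition \ref{it:outliers}; and $\tr(\widehat{A}_T) \le \tr(\E\widehat{A}_T) + \eta \le \eps\sqrt{k}+2+\eta \le 18\eps\sqrt{k}$ verifies condition \ref{it:trace_bound}, using the elementary inequality $ye^{-y/\gamma}\le\gamma$ to bound the trace of the expectation.

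The main obstacle will be the variance calculation, specifically verifying that $Z_{\eps,k}^2\,\lambda^{-k/2}$ remains uniformly bounded by $e^{O(1/\eps^2)}$ as $k$ grows. This is where the hypothesis $k \ge 1/\eps^2$ is essential: without it, $Z_{\eps,k}$ blows up faster than the reweighted Gaussian integral can absorb, and Chebyshev would give no useful bound. Everything else (the matching of moments via \Cref{lemma:expression}, the PSD-ness of the outlier expectation, and the trace bound on the expectation) is routine book-keeping.
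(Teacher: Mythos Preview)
Your proposal is correct and follows essentially the same approach as the paper: entrywise Chebyshev with target precision $\eta/k$, the same completing-the-square variance bound giving $\E[(\widehat A_i)_{j,l}^2]\le e^{O(1/\eps^2)}\cdot O(k)$ (the paper isolates this as \Cref{cl:var_bound}), and the same bookkeeping for the three goodness conditions. The only cosmetic difference is that the paper derives the trace bound by combining the already-established Frobenius concentration of the inlier and outlier blocks rather than treating $\widehat A_T$ as a third matrix, and it bounds $|\tr(\widehat A_T-A)|\le \sqrt{k}\,\|\widehat A_T-A\|_\fr$ rather than summing diagonal entries directly.
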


We conclude this section with a brief proof sketch of the first of these lemmata.

\begin{proof}[Proof Sketch of \Cref{lem:sample_complexity}] 
We focus on showing \Cref{it:outliers} ($\eta$-positive definiteness) 
for the set of outliers, as the other parts can be proved similarly. 
Let $\widehat{A}=\sum_{i:x_i\in T\setminus S} \widehat{A}_i$, 
where $\widehat{A}_i$ as defined in \Cref{eq:matrixA}. 
First note that, using \Cref{lemma:expression}, 
we have that $\E[\widehat{A}]=A:=\frac{1}{\alpha n}\sum_i z_i z_i^\top e^{-\frac{\|z_i\|^2}{\sqrt{k \beta}+ 2}}$ 
is positive semidefinite. Hence, in order to prove that $\widehat{A}$ 
is $\eta$-positive definite, 
it suffices to prove  that $\widehat{A}$ cannot be much smaller 
than $A$ in every direction.

First note that the $e^{-\frac{\|x_i\|^2}{\sqrt{k \beta}}}Z_{\beta,k}$  factor 
appearing in the definition of $\widehat{A}$
behaves roughly like $e^{-\frac{\|x_i\|^2-k}{\sqrt{k \log k}}}$, as $Z_{\beta,k}=\Theta(e^{\sqrt{k/\log k}})$ for $\beta=\sqrt{\log k}$.
Towards showing concentration with a linear number of samples, 
we  first note that the $\|x_i\|^2-k$ (appearing before) is bounded, 
except with tiny probability. 
That is, we argue that, without loss of generality, 
we can work with the matrix of the form 
$\widehat{A}'=\sum_{x_i\in T\setminus S} \widehat{A}_i\Ind(\cE_i)$ 
instead of $\widehat{A}$. 
Here $\cE_i$ is defined to be the following \emph{good event}:
    $ 
         | \Norm{x_i}^2 -(\Norm{z_i}^2 +k) | 
         \lesssim \log (1/\tau)+ (\sqrt{k}+\Norm{z_i}) \sqrt{\log (1/\tau)}
   $,
    where $\tau := (1/n)^4$. 
    This is indeed without loss of generality because: 
    (i) $\cE_i$ hold with probability $1-\delta$ for all $i$ simultaneously 
    (each $\cE_i$ fails with probability $\tau$ by \Cref{fact:normConcetration}, 
    and thus the probability that there exists $\cE_i$ that fails 
    is at most $n\tau \leq  1/n^3 \le \delta$ by union bound); 
    and (ii) we can prove that \( \norm{\mathbb{E}[\widehat{A}'] - A} \leq \eta/2\), i.e., the truncation shifts the mean by a small amount. 

    Then we can apply standard concentration results on $\widehat{A}'$, 
    as our expressions are appropriately bounded.  
    We achieve this by carefully decomposing $\widehat{A}'$ 
    into three  terms by writing $x_i=z_i+g_i$, where $g_i\sim \cN(0,I)$:
    $
        \widehat{A}_i= ( (g_ig_i^\top  - \frac{\sqrt{k\log(k)}}{\sqrt{k\log(k)}+2} I  )+(z_ig_i^\top +g_i z_i^\top) +z_iz_i^\top )e^{-\frac{\|x_i\|^2}{\sqrt{k\log(k)}}}Z_{\beta,k}
        =\vcentcolon \widehat{T}_{1i}+\widehat{T}_{2i}+\widehat{T}_{3i}.
    $
For the first and second term, we use Hoeffding's inequality 
for subgaussian random variables along with a cover argument. 
For the third term, we use  the Matrix Bernstein \Cref{fact:matrixBernstein} inequality. 
This application requires an absolute bound on $\|T_{3i}\| \leq K$ 
that holds almost surely, as well as a bound for the matrix 
variance $\norm{ \E[\widehat{T}_{3i}^2] }$. 
Given the ``good events'' from above, we can use that 
$\exp(-\norm{x}^2/\sqrt{k \log k})Z_{\sqrt{\log k},k}$ is roughly 
less than $\exp(-\norm{z_i}^2/\sqrt{k\log k})$. We can thus obtain that 
$\|T_{3i}\| \leq \|z_i\|^2 \exp(-\norm{z_i}^2/\sqrt{k\log k}) \lesssim \sqrt{k\log k}$ and $\norm{z_i}^4\exp(-2\norm{z_i}^2/\sqrt{k\log k})\lesssim k \log k$, 
which results in near-linear in $k$ sample complexity.\qedhere

\end{proof}

\subsection{Analysis of  \Cref{alg:mean_estimation}}\label{sec:main_analysis}

The lemma below is the core of the analysis for a single iteration of our 
dimensionality reduction technique. Since the sample version $\widehat{A}_t$ of the matrix, that we use in  \Cref{line:Ahat} of \Cref{alg:mean_estimation}, 
is close to its expectation, we show that our na\"ive estimator 
from the first step of the algorithm is accurate inside the subspace 
of small eigenvalues of $\widehat{A}_t$ (see \Cref{lemma:dimerror}). Combined with \Cref{it:trace_bound} of \Cref{def:deterministic}, which  
bounds from above the number of high eigenvalues,
we conclude that a single round of our 
dimensionality reduction loop halves the dimension without accumulating error.

\begin{lemma} \label{lemma:dimerror}
Let $\beta >0$ and $\cV$ be a subspace of $\R^d$ of dimension $k \leq d$.
Let $T$ be an $(\eta,\beta)$-good set of $n$ samples (cf. \Cref{def:deterministic}) with respect to the subspace $\cV$ and the vectors $\mu,z_1,\ldots,z_{\alpha n} \in \cV$, and assume that $\| \mu \| = O(1)$.
Define $\widehat{A}$ as in \eqref{eq:matrixA}.
If $\cU$ is a subspace of $\cV$ such that $v^\top \widehat{A}v \le \eta$ 
for all unit vectors $v \in \cU$, then we have that $\norm{\Proj_{\cU}(\mu)} = O(\sqrt{\eta})$, where $\Proj_{\cU}(\cdot)$ denotes the projection to subspace $\cU$.
\end{lemma}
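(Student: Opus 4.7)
The plan is to derive a lower bound on $v^\top \widehat{A} v$ for an arbitrary unit vector $v \in \cU$ in terms of $(v^\top \mu)^2$, and then combine with the hypothesis $v^\top \widehat{A} v \le \eta$ to read off the desired bound on $|v^\top \mu|$. Writing $n = |T|$ and letting $S \subseteq T$ denote the inlier subset of size $(1-\alpha)n$ guaranteed by \Cref{def:deterministic}, I would split
\[
\widehat{A} \;=\; (1-\alpha)\widehat{A}_S \;+\; \alpha \widehat{A}_{T \setminus S},
\]
where $\widehat{A}_S$ and $\widehat{A}_{T \setminus S}$ are the empirical reweighted moment matrices defined as in \Cref{eq:matrixA} but with the average taken over $S$ and $T \setminus S$, respectively.

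Next, I would invoke the first two items of \Cref{def:deterministic}. The inlier condition, together with \Cref{lemma:expression} applied at $z = \mu$, gives
\[
\bigl\|\, \widehat{A}_S - \mu \mu^\top e^{-\|\mu\|^2/(\beta\sqrt{k}+2)} \,\bigr\|_{\op} \;\le\; \eta,
\]
while the outlier condition gives $v^\top \widehat{A}_{T \setminus S} v \ge -\eta$ for every unit $v \in \cV$. Combining, for every unit $v \in \cU$,
\[
v^\top \widehat{A} v \;\ge\; (1-\alpha)\bigl[\, c\,(v^\top \mu)^2 - \eta \,\bigr] - \alpha\eta \;\ge\; (1-\alpha)\, c\,(v^\top \mu)^2 - \eta,
\]
where $c := e^{-\|\mu\|^2/(\beta\sqrt{k}+2)}$. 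The crucial observation is that the hypothesis $\|\mu\| = O(1)$, combined with $\beta\sqrt{k} \ge 0$, forces $c \ge c_0$ for some universal $c_0 > 0$. Using the assumption $v^\top \widehat{A} v \le \eta$ then yields $(v^\top \mu)^2 \le 2\eta / ((1-\alpha) c_0) = O(\eta)$, since $\alpha \le 1/2$ is bounded away from $1$ throughout the algorithm.

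To conclude, I take the supremum over unit $v \in \cU$, which is attained at $v = \Proj_\cU(\mu)/\|\Proj_\cU(\mu)\|$ (the zero-projection case being trivial); this gives $|v^\top \mu| = \|\Proj_\cU(\mu)\|$ and hence $\|\Proj_\cU(\mu)\| = O(\sqrt{\eta})$. I do not expect a real obstacle: the statement is a direct linear-algebraic consequence of the three conditions bundled into \Cref{def:deterministic} (the trace condition is not needed here; it is used elsewhere to bound the dimension of the surviving subspace). The only point requiring care is ensuring that the exponential factor $c$ is bounded below by an absolute constant, which is precisely why the hypothesis $\|\mu\| = O(1)$ is imposed---this bound on $\mu$ is in turn maintained throughout \Cref{alg:mean_estimation} by the warm-start computed in \Cref{line:warm_start}, which first centers the data via $\widehat{\mu}_0$ before the dimension-reduction loop begins.
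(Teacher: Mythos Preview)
Your proposal is correct and follows essentially the same approach as the paper: decompose $\widehat{A}$ into its inlier and outlier parts, use the $(\eta,\beta)$-concentration condition on $S$ to compare $\widehat{A}_S$ with $\mu\mu^\top e^{-\|\mu\|^2/(\beta\sqrt{k}+2)}$, use the $(\eta,\beta)$-positive-definiteness condition on $T\setminus S$, and combine with $\|\mu\|=O(1)$ to lower-bound the exponential factor. One minor remark: you do not actually need to invoke \Cref{lemma:expression} here, since \Cref{def:concentrated} already specifies $A=\frac{1}{|S|}\sum_i z_iz_i^\top e^{-\|z_i\|^2/(\beta\sqrt{k}+2)}$, which for the inlier sequence $z_i\equiv\mu$ is directly $\mu\mu^\top e^{-\|\mu\|^2/(\beta\sqrt{k}+2)}$.
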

\begin{proof}
     
Without loss of generality, we will prove the lemma for the case where $\cV=\R^k$.
Since $T$ is $(\eta,\beta)$-good, it can be partitioned into $S$ (inliers) 
and $T\setminus S$ (outliers), where the two sets satisfy the properties of \Cref{def:deterministic}. 
Consider the matrices
\begin{alignat*}{2}
    &\widehat{A}_{\mathrm{inliers}} = \frac{Z_{\beta,k}}{n}\sum_{x \in S}F_{\beta,k}(x)  ,   \quad\quad &&A_{\mathrm{inliers}} =  \mu \mu^\top e^{-\frac{\|\mu\|^2}{\beta\sqrt{k}+2}}, \\
    &\widehat{A}_{\mathrm{outliers}} = \frac{Z_{\beta,k}}{\alpha n}\sum_{x \in T \setminus S}  F_{\beta,k}(x),   \quad\quad  &&A_{\mathrm{outliers}} =  \frac{1}{\alpha n}\sum_{i: x_i \in  T \setminus S} z_i z_i^\top e^{-\frac{\|z_i\|^2}{\beta\sqrt{k}+2}} \;,
\end{alignat*}
where $Z_{\beta,k}$ and $F_{\beta,k}$ are 
as defined in \Cref{eq:matrixA}. 
We can decompose $\widehat{A}$ into inliers and outliers, 
i.e., $\widehat{A} = (1-\alpha) \widehat{A}_{\mathrm{inliers}} + \alpha \widehat{A}_{\mathrm{outliers}}$.
By the assumption that $T$ is $(\eta,\beta)$-good (cf. \Cref{def:deterministic}), we have that $T \setminus S$ is $\eta$-positive definite (cf. \Cref{def:positivedef}); hence $v^\top \widehat{A}_{\mathrm{outliers}} v 
    \geq -\eta$.
    
By assumption, $S$ is $\eta$-concentrated (\Cref{it:inliers} of  \Cref{def:deterministic}); thus $|  v^\top \widehat{A}_{\mathrm{inliers}} v - v^\top A_{\mathrm{inliers}} v | \leq \eta $.

Putting everything together, for every unit vector $v$ for which $v^\top \widehat{A} v \leq \eta$, we have that:
\begin{align}
    (1-\alpha) v^\top A_{\mathrm{inliers}} v 
    &= (1-\alpha) v^\top \widehat{A}_{\mathrm{inliers}} v + (1-\alpha) v^\top (A_{\mathrm{inliers}} - \widehat{A}_{\mathrm{inliers}}) v \notag \\
    &= v^\top \widehat{A}  v - \alpha v^\top \widehat{A}_{\mathrm{outliers}} v + (1-\alpha) v^\top (A_{\mathrm{inliers}} - \widehat{A}_{\mathrm{inliers}}) v \notag \\
    &\leq \eta + \alpha \eta + (1-\alpha) \eta \leq 2 \eta \;. \notag
\end{align}
Using that $A_{\mathrm{inliers}} =   \mu \mu^\top e^{-\frac{\|\mu\|^2}{\beta \sqrt{k}+2}}$, the above implies that $(v^\top \mu)^2 \leq 2 \eta \, e^{\frac{\|\mu\|^2}{\beta \sqrt{k}+2}}
    \lesssim \eta$,
where the last inequality is because  $\|\mu\| = O(1)$. 
Finally, since  $\norm{\Proj_{\cU}(\mu)}=\max_{v\in \cU: \norm{v}=1}\abs{v^\top \mu}$,
it follows that $\norm{\Proj_{\cU}(\mu)}\le O( \sqrt{\eta})$. \qedhere
\end{proof}

\subsubsection{Proof of \Cref{thm:main}}
We are now ready to prove our main theorem. The full proof is deferred to \Cref{appendix:fullproof}.
We use the same notation as in the pseudocode provided in \Cref{alg:mean_estimation}:
the $t$-th iteration of the while loop maintains a subspace $\cV_t$, whose dimension, $k$, starts from $d$ and can only decrease from a round to the next one. This while loop has two distinct phases: %
Phase 1 will refer to all the iterations during which $C\log^4 (d)/\eps^5\leq k \leq d$, and Phase 2 will refer to all the iterations with $1/\eps^2 \leq k < C\log^4 (d)/\eps^5$. The analysis of the algorithm consists of the claims stated below. 
The claims are that each of the following holds with high constant probability:
\begin{enumerate}[leftmargin=*]
    \item \textbf{Warm start}: If $\widehat{\mu}_0$ is the estimator from line \ref{line:warm_start} of \Cref{alg:mean_estimation}, then $\| \widehat{\mu}_0 - \mu\| = O(1)$.\label{it:warmstart_main}
   
    \item \textbf{Dimension Reduction}: If $T_1, T_2$ denote the number of iterations of Phase $1$ and Phase $2$ respectively, 
    then $T_1\le \log(d)$, $T_2\le 100\log(\log(d)/\eps)$. 
    Moreover, for all $t=1,\ldots,T_1+T_2$, we have 
    $\| \Proj_{\cV_{t+1}^\perp}(\widehat{\mu}_0 - \mu)\| \lesssim \sum_{t' = 1}^t \sqrt{\eta_{t'}}$, where $\eta_{t'}$ are the values set 
    in lines \ref{line:eta_t_1} and \ref{line:eta_t_2}.
    \label{it:sketch_phase_x_main}
   
    \item \textbf{Estimator for the remaining subspace}: The lines \ref{line:sample_set_end}-\ref{line:brute_force} of \Cref{alg:mean_estimation} find a vector $\widehat{\mu}_1 \in \cV_{t}$ such that  $\|\widehat{\mu}_1 - \Proj_{\cV_t}(\mu)\| \leq \eps$.\label{it:brute_force_sketch_main} 
    
\end{enumerate}
Here we will sketch how each claim can be proved, with the full details in 
\Cref{appendix:fullproof}. We start by showing how the claims imply that 
$\|\widehat{\mu} - \mu\| = O(\eps)$ for 
$\widehat{\mu} := \widehat{\mu}_0 + \widehat{\mu}_1$.\footnote{The 
guarantee of \Cref{thm:main} is $\|\widehat{\mu} - \mu\| \leq \eps$. Here 
we show $O(\eps)$ in order to keep the constants that appear simple. This is 
w.l.o.g. as one can later replace $\eps$ by $\eps/C$ for a large enough 
constant $C$ to obtain \Cref{thm:main}.}

Consider $t=T_1 + T_2$, so that $\cV_t$ denotes the subspace after exiting the while loop.   By decomposing the true mean into the projections onto the two orthogonal subspaces we have that $\mu = \Proj_{\cV_t^\perp}(\mu) + \Proj_{\cV_t}(\mu)$. By the Pythagorean theorem, we can write 
\begin{align*}
    \| \widehat{\mu} - \mu \|^2
    &= \| \Proj_{\cV_{t}^\perp}(\widehat{\mu}_0 - \mu) + \Proj_{\cV_{t}}(\widehat{\mu}_1 - \mu)\|^2
    = \| \Proj_{\cV_{t}^\perp}(\widehat{\mu}_0 - \mu)\|^2 +  \| \Proj_{\cV_{t}}(\widehat{\mu}_1 - \mu)\|^2 . \end{align*}
The last term is $\left\| \Proj_{\cV_{t}}(\widehat{\mu}_1 - \mu)\right\| \leq \eps$ by \Cref{it:brute_force_sketch_main}. It suffices to bound the first term by $O(\eps)$. Towards this end, denote $D:=C\log^4 (d)/\eps^5$, which is the dimension during the first iteration of Phase~2. 
Then,\looseness=-1
\begin{align*}
    &\| \Proj_{\cV_{t}^\perp}(\widehat{\mu}_0 - \mu)\|
    \lesssim\sum_{t' = 1}^{T_1 + T_2} \sqrt{\eta_{t'}} \tag{using  \Cref{it:sketch_phase_x_main}} = \sum_{t' = 1}^{T_1} \sqrt{\eta_{t'}}  + \sum_{t' = T_1 + 1}^{T_1+T_2} \sqrt{\eta_{t'}} \\
    &\leq T_1 \frac{\eps}{ \log d}  + \sum_{t' = T_1 + 1}^{T_1+T_2} \sqrt{\eta_{t'}} 
    \leq \eps  + \sum_{t' = T_1 + 1}^{T_1+T_2} \sqrt{\eta_{t'}} \\
    &\leq  \eps + \left( \sqrt{\frac{36\eps}{\sqrt{D}}} +  \sqrt{\frac{36\eps}{\sqrt{D/2}}} + \cdots + \sqrt{\frac{36\eps}{\sqrt{1/\eps^2}}} \right) 
    \leq \eps  +  \frac{6\sqrt{\eps}}{D^{1/4}} \sum_{i=0}^{\lg(D \eps^2)} 2^{i/4}   \\
    &\le  \eps +  \frac{6\sqrt{\eps}}{(2^{1/4}-1)D^{1/4}} 2^{\frac{\lg(D \eps^2)}{4}} 
    =  \eps +  \frac{6\sqrt{\eps}}{(2^{1/4}-1)D^{1/4}} (D \eps^2)^{1/4}\lesssim \eps \;, 
\end{align*}
where we used the definition of $\eta_{t'}$ from lines \ref{line:eta_t_1},\ref{line:eta_t_2}, and direct calculations 
for the series ($\lg(\cdot)$ denotes the logarithm with base $2$). 

Finally, we discuss briefly the proofs of the claims in \Cref{it:warmstart_main,it:sketch_phase_x_main,it:brute_force_sketch_main}. 
The first and the last follow immediately by \cite{diakonikolas2023algorithmic} (Corollary $2.12$  and Exercise $2.10$) and \Cref{thm:brute_force} respectively. 
Regarding \Cref{it:sketch_phase_x_main}, let us first consider Phase 1 
(iterations during which $k \geq C\log^4 (d)/\eps^5$). By an application of 
\Cref{lem:sample_complexity} with $\eta=\eta_t=(\eps/\log d)^2$ and a union 
bound, all the sets $T_t$ drawn in line \ref{line:transform_set} will be 
($\eta_t,\sqrt{\log k}$)-good with high constant probability. Note that 
because of the warm start in line \ref{line:warm_start} and the 
transformation subtracting $\widehat{\mu}_0$ in \Cref{line:transform_set}, 
every sample essentially comes from the mean-shift model with mean $O(1)$, 
which makes \Cref{lem:sample_complexity} applicable. The number of 
eigenvectors with eigenvalue larger than $\eta_t$ is at most 
$\tr(\widehat{A}_t)/\eta_t \leq 18 \sqrt{k \log k}/\eta_t = O((\log d)^2 \eps^{-2} \sqrt{\log k}) \leq O(\log^{2.5}(d) \eps^{-2})$, 
where we first used the definition of  ($\eta_t,\sqrt{\log k}$)-goodness (\Cref{def:deterministic}) and then that $\eta_t = (\eps/\log d)^2$. 
Thus, as long as $k > C \log^5(d)/\eps^4$, the dimension during 
each round gets halved. After $T_1 = O(\log d)$ such iterations, 
the dimension reaches $C \log^5(d)/\eps^4$, 
after which the aforementioned no longer guarantees 
that it will continue to drop. This is the reason why we need 
to change the value of $\beta$ to $\eps$, 
and set $\eta_t:= 36 \eps/\sqrt{k}$ in \Cref{line:eta_t_2} 
for the remaining iterations (Phase 2). 
The argument for Phase 2 is similar, but uses different parameters: 
our datasets are now $(\eta_t,\eps)$-good by \Cref{lem:sample_complexity2}, and the number of eigenvectors larger than $\eta_t$ is 
now at most $\tr(\widehat{A}_t)/\eta_t \leq 18 \beta\sqrt{k}/\eta_t= k/2$. 
Thus, Phase 2 will continue to halve the dimension.  
Finally, the claim that $\| \Proj_{\cV_{t+1}^\perp}(\widehat{\mu}_0 - \mu)\| \lesssim \sum_{t' = 1}^t \sqrt{\eta_{t'}}$ 
follows immediately from \Cref{lemma:dimerror}.

\paragraph{Runtime and Sample Complexity of \Cref{alg:mean_estimation}}
It can be readily verified that the number of samples 
$n_0,n_1,n_2$ defined in \Cref{line:samples} of the algorithm 
suffice for the aforementioned applications of \Cref{lem:sample_complexity,lem:sample_complexity2}.
Thus, the overall sample complexity of the algorithm is
    $n = n_0 + n_1 \cdot T_1 + n_2\cdot (T_2 + 1)
     = O(d) + d \polylog (d)\eps^{-(2+o(1))}  +  2^{O(1/\eps^2)} \polylog(d) $. 
Regarding runtime, the most computationally intensive part 
is the last step of the algorithm (\Cref{line:brute_force}), 
which has runtime $\tau = 2^{O(k)}\poly(n_2, d)$, 
where the $k = 1/\eps^2$ here denotes the dimension 
of the final subspace. 
Since $n_2 = 2^{\Theta(1/\eps^2)}\polylog(d)$, 
that runtime is sample-polynomial, i.e., $\poly(n_2, d)$. 
It is easy to check that the other parts of the algorithm are also polynomial in the size of the input.

This completes the proof of \Cref{thm:main}. \qed

\section{Conclusions and Open Problems}\label{sec:open-problems}
In this paper, we provide the first polynomial-time 
algorithm for high-dimensional mean estimation under mean-shift 
outliers. In this model, the outliers are not completely 
adversarial as they include a randomized component. This 
additional structure allows for consistent estimation, which is 
unattainable in purely adversarial models. 

Our work takes a first algorithmic step in understanding the 
complexity of high-dimensional estimation 
for the known covariance mean-shift model (\Cref{def:cont}).
A number of concrete open problems and broader 
directions suggest themselves. Concretely, is there a 
computationally efficient robust learner for the unknown covariance Gaussian case? Can we design faster 
(aka near-linear time) learning algorithms for the Gaussian mean 
case? What is the complexity of robust mean estimation for the
known covariance case when the underlying distribution is not 
Gaussian? Perhaps surprisingly, this question is not understood 
even information-theoretically for broader families of 
distributions. We hope that this work will inspire further 
algorithmic progress in robust estimation under mean shift 
contamination and related structured contamination models.

\printbibliography

\newpage
\appendix

\section*{Appendix}
The appendix is structured as follows:
First, \Cref{sec:related_work} includes a more detailed 
summary of related work.
\Cref{appendix:prelim} includes additional preliminaries required in subsequent technical sections. \Cref{appendix:omited_main} gives the correctness analysis of 
\Cref{thm:brute_force}. \Cref{appendix:sample} contains the proofs 
of the technical lemmas involving  
the reweighted matrix used in \Cref{sec:matrix}.
\Cref{appendix:fullproof} contains the full proof of our 
main result (\Cref{thm:main}). Finally, \Cref{appendix:breakdownpoint} 
establishes  \Cref{thm:higher_breakdown}, which demonstrates 
adaptivity to unknown contamination parameter $\alpha$ 
arbitrarily close to $1/2$. 

\section{Additional Related Work} \label{sec:related_work}

\paragraph{Additional Related Work on Mean Shift Contamination}
The most closely related work to ours is 
\cite{kotekal2024optimal}. They study the model of 
\Cref{def:cont} in one dimension and derive matching 
information theoretic upper and lower bounds for mean 
estimation. \cite{kotekal2024optimal} also consider 
estimating the variance in the case when it is unknown 
to the algorithm. A similar upper bound for mean estimation in 
the known variance case was given in \cite{li2023robust}. In 
even earlier work, \cite{carpentier2021estimating} studied 
the sample complexity of robust mean estimation for the special 
case of \Cref{def:cont} in one dimension where $z_i - \mu > 0$.

More broadly, the problem of robust mean estimation with mean shift outliers has its roots in influential work by Efron \cite{efron2004large,efron2007correlation,efron2008microarrays} 
in the context of multiple hypothesis testing. In these works,  
Efron noticed through empirical evidence that, because the 
parameters of the null distribution are unknown, testing should 
be done in two stages: (i) estimation of the null parameters, 
and (ii) testing of null vs alternative hypothesis using 
standard multiple testing procedures. Although  the focus of 
our paper theoretical, we refer to the discussions in 
\cite{carpentier2021estimating,kotekal2024optimal}, the CIRM 
talk in \cite{chaogaotalk} and the references therein, for a 
discussion of the connection between the mean-shift noise model 
and Efron's work. 

\paragraph{Unknown Variance/Covariance}
We highlight two points regarding the variance of samples in 
the mean-shift model. First, the variance may be unknown to the 
algorithm, unlike the known covariance case in 
\Cref{def:cont}. Specifically, consider a variant of the model 
(in one dimension for simplicity) where inliers are drawn from 
$\mathcal{N}(\mu, \sigma^2)$ and outliers from 
$\mathcal{N}(z_i, \sigma^2)$. This setting has been studied in 
prior work (see, e.g., \cite{cai2010optimal, 
carpentier2021estimating, kotekal2024optimal}). Notably, the 
last two works first analyze the case where the variance is 
known (and equal to one), and then extend their analysis 
to the more general setting where it is unknown. 
Even when the goal is solely mean estimation, their approach 
requires variance estimation as a first step. \cite{li2023robust} considers only the known variance case.

The second point concerns what happens to the optimal error if outliers have different variances than inliers.
Concretely, consider the model where inliers are drawn from 
$\mathcal{N}(\mu, \sigma^2)$ and outliers from 
$\mathcal{N}(z_i, \sigma_i^2)$. If outliers can have smaller variance than the inliers ($\sigma_i < \sigma$), then 
estimation is intrinsically harder than in the case where 
$\sigma_i \geq \sigma$. The reason is that estimation under 
Huber contamination can be reduced to estimation in the mean-
shift model with $\sigma_i < \sigma$. 
Specifically, consider the Huber contaminated sample with 
inliers that follow $\cN(\mu,\sigma^2)$ and outliers $z_i$. By 
adding $\cN(0,\alpha^2)$ noise (with $\alpha>0$) to both 
inliers and outliers, we have that the inliers after the 
transformation follow $\cN(\mu, \sigma^2+\alpha^2)$ and the 
outliers follow $\cN(z_i,\alpha^2)$, which is an instance of 
mean-shift contamination with $\sigma_i<\sigma$. 
Consequently, consistent estimation is not possible 
in the mean-shift model where outliers have smaller variance 
than the inliers. While there exists experimental work in the 
literature \cite{cai2009simultaneous} for both the cases of
smaller and larger variances, 
provable consistent estimation is impossible in such cases.

Finally, we note that in the setting where 
$\sigma_i\ge \sigma$, we can assume without loss of generality 
that all $\sigma_i$ are exactly equal to $\sigma$.
To see why, consider the random variable for a single outlier sample: $x_i=z_i+\cN(0,\sigma_i^2)$. 
Observe that we can rewrite \(x_i\) as 
$x_i=(z_i+\cN(0,\sigma_i^2-\sigma^2))+\cN(0,\sigma^2)$. 
Hence, by treating the outlier centers themselves 
as random variables, one can run an algorithm 
that solves the mean estimation problem 
in the mean-shift model with unknown variance.
The same argument applies in higher dimensions, 
in the setting where $\Sigma_i \;\succeq\; \Sigma$.
Let $\Sigma,\Sigma_i$ be PSD matrices with 
$\Sigma_i \;\succeq\; \Sigma$.
Then we can rewrite an outlier sample of the form 
$x_i=z_i+\cN(0,\Sigma_i)$ as 
$x_i=(z_i+\cN(0,\Sigma_i-\Sigma))+\cN(0,\Sigma)$, 
since $\Sigma_i-\Sigma$ is PSD and therefore a valid covariance matrix.

\paragraph{Mean Shift Models for Other Problems}
The concept of more structured corruptions in the form of mean shifts has also been studied in the regression setting \cite{sardy2001robust,gannaz2007robust,mccann2007robust,she2011outlier}. In this model, it is assumed that 
$y_i = \beta^\top x_i + \gamma_i + \xi_i$, where $\gamma_i$ are adversarial mean shifts on top of the standard Gaussian additive noise $\xi_i \sim \cN(0,\sigma^2)$.

\paragraph{Comparison with \cite{LaiRV16}}
The approach of dimension reduction until a low-dimensional  
inefficient estimator can be employed has also appeared in previous 
robust statistics work \cite{LaiRV16}, however that similarity is rather superficial and the underlying techniques are significantly different. Each iteration of dimension reduction in \cite{LaiRV16} uses the centered second moment matrix to identify a sizable ``good'' subspace (spanned by the bottom eigenvectors) where the empirical mean achieves $O(\alpha)$ accuracy, and then it iterates on the remaining small subspace. Even in our more relaxed outlier model, the accuracy within the good subspace might  be as bad as $\Omega(\alpha)$, which is insufficient for our purposes since we aim for $\eps$-error. Thus our \Cref{thm:main} cannot be obtained by the technique of \cite{LaiRV16} and we instead we have to look at the non-centered moment matrix after appropriate reweighting, as described in \Cref{sec:techniques}.

We conclude this section with two additional points of comparison.

\paragraph{Connection to Mixture Models} The mean-shift model is 
related to the classical task of parameter learning for mixture 
models, albeit in a regime that is qualitatively different from 
the one commonly studied. In the canonical setting 
(see~\cite{Das99,AroKan01,AchMcs05,KanSV05} for classic 
references 
and~\cite{belkin2015polynomial,moitra2010settling,ChaSV17,HopLi18,KotSS18,DiaKS18-list,kong2020meta,DHKK20, BDHKKK20, DiakonikolasKKP22, DiaKKLT22-cluster,BakDJKKV22,LiuLi22,DiaKKLT22-cluster,diakonikolas2023sq, DiakonikolasKLP23, DK24,DKLP25} for more recent work), 
it is typically assumed that there is a small (constant) 
number of components, $k \ll n$, each with a distinct mean. 
In contrast, in the mean-shift model, all inlier samples 
come from the same component, while each outlier is drawn from 
its own component. Consequently, parameter estimation of the 
outlier means is information-theoretically impossible in the 
mean-shift model. 

\paragraph{Connection to Entangled Mean Estimation} Another related contamination model in the context of mean estimation is the heteroskedasticity model. 
In heteroskedastic mean estimation, 
each datapoint is drawn independently from a potentially different distribution within a (known) family that shares a \emph{common} mean. Such distributions are also referred to as \emph{entangled}. For the Gaussian family, this model involves each sample having potentially different covariance \cite{ChiDKL14,PenJL19,PenJL19-isit,pensia_estimating_2021,xia2019non,yuan2020learning,LiaYua20,DevLLZ23,compton2024near,diakonikolas2025entangled}. This contamination model can also be viewed as a Gaussian mixture model. However, here each sample originates from its own component ($k = N$). Importantly, the shared mean assumption enables meaningful results despite the large number of components.

\section{Additional Preliminaries}\label{appendix:prelim}

\paragraph{Cover set of the unit sphere} For some of our proofs, we will need the following standard facts about cover sets of the unit sphere:

\begin{fact}[see, e.g., Corollary 4.2.13 in \cite{Ver18}]\label{fact:cover_cor_ver}
    Let $\xi>0$. There exists a set $\cC$ of unit vectors of $\R^d$ such that $|\cC| < (1+2/\xi)^d$ and for every $u \in \R^{d}$ with $\|u\|=1$ it holds $\min_{y \in \cC} \|y-u\| \leq \xi$.
\end{fact}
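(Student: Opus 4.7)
\begin{proofsketch}
The plan is the classical volume-packing argument for constructing $\xi$-nets on the unit sphere. First I would construct $\cC$ as a maximal $\xi$-separated subset of the unit sphere $S^{d-1}$; that is, start from any unit vector and greedily add unit vectors that lie at Euclidean distance strictly greater than $\xi$ from all previously chosen points, stopping only when no further addition is possible. Such a maximal set exists by Zorn's lemma (or, since we will bound its cardinality finitely, by a simple inductive construction). Maximality immediately yields the covering property: for every unit vector $u$, if $\min_{y\in\cC}\|y-u\|>\xi$ then $u$ could be added to $\cC$, contradicting maximality; hence $\min_{y\in\cC}\|y-u\|\le\xi$.

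It then remains to bound $|\cC|$. The key observation is that the open Euclidean balls of radius $\xi/2$ centered at the points of $\cC$ are pairwise disjoint, because any two points in $\cC$ are at distance greater than $\xi$. On the other hand, every such ball is contained in the ball of radius $1+\xi/2$ centered at the origin, since the centers lie on the unit sphere. Comparing Lebesgue volumes in $\R^d$ gives
\[
    |\cC|\cdot\left(\frac{\xi}{2}\right)^d \mathrm{Vol}(B_1)
    \;\le\; \left(1+\frac{\xi}{2}\right)^d \mathrm{Vol}(B_1),
\]
where $B_1$ is the unit ball in $\R^d$. Dividing through by $(\xi/2)^d\mathrm{Vol}(B_1)$ yields $|\cC|\le(1+2/\xi)^d$, and strict inequality follows from the fact that the union of the small balls has strictly smaller volume than the enclosing ball (e.g., it misses a neighborhood of the origin).

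There is no real obstacle here; the only point requiring a bit of care is confirming the containment $\bigcup_{y\in\cC}B(y,\xi/2)\subseteq B(0,1+\xi/2)$, which follows from the triangle inequality, and checking that the separation is strict so the open balls are genuinely disjoint. Since the fact is invoked only as a black box, no further refinement is needed.
\end{proofsketch}
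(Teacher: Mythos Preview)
Your proof sketch is correct and is precisely the standard volume-packing argument. The paper does not give its own proof of this fact; it is stated as a preliminary with a citation to Vershynin's book (Corollary 4.2.13), where the same argument you outline appears.
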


\begin{corollary}[see, e.g., Exercise 4.4.3 (b) in \cite{Ver18}]\label{fact:cover}
    There exists a subset $\cC$ of the $d$-dimensional unit ball with $\abs{\cC}\leq 7^{d}$  such that   $\norm{x} \le 2\max_{v\in \cC}\abs{v^\top x}$ for all $x\in \mathbb{R}^d$ and  $\|A\|_{\op} \leq 3 \max_{x \in \cC} x^\top A x$ for every symmetric $A \in \R^{d \times d}$.
\end{corollary}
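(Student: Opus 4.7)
The plan is to obtain $\cC$ as a standard $\xi$-net of the unit sphere and then verify each of the two claimed inequalities by a straightforward approximation argument. Concretely, I would apply \Cref{fact:cover_cor_ver} with $\xi = 1/3$ to obtain a set $\cC$ of unit vectors with $|\cC| \le (1 + 2/\xi)^d = 7^d$ such that every unit vector $u \in \R^d$ admits some $v \in \cC$ with $\|u - v\| \le 1/3$. Since the unit sphere is contained in the unit ball, $\cC$ itself is a subset of the unit ball as required.

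For the first inequality, I would take an arbitrary nonzero $x \in \R^d$, set $u = x/\|x\|$, and pick $v \in \cC$ with $\|u - v\| \le 1/3$. Then by Cauchy--Schwarz,
\[
|v^\top x| \;\ge\; |u^\top x| - |(u - v)^\top x| \;\ge\; \|x\| - \tfrac{1}{3}\|x\| \;=\; \tfrac{2}{3}\|x\|,
\]
which rearranges to $\|x\| \le \tfrac{3}{2} |v^\top x| \le 2 \max_{v \in \cC}|v^\top x|$.

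For the operator norm bound, I would let $u^*$ be a unit vector with $|u^{*\top} A u^*| = \|A\|_{\op}$ (which exists by symmetry of $A$), and again pick $v \in \cC$ with $\|u^* - v\| \le 1/3$. Writing
\[
u^{*\top} A u^* - v^\top A v \;=\; (u^* - v)^\top A u^* + v^\top A (u^* - v),
\]
Cauchy--Schwarz and the fact that $\|u^*\| = \|v\| = 1$ give $|u^{*\top} A u^* - v^\top A v| \le 2\|u^* - v\|\,\|A\|_{\op} \le \tfrac{2}{3}\|A\|_{\op}$. Hence $|v^\top A v| \ge \tfrac{1}{3}\|A\|_{\op}$, i.e.\ $\|A\|_{\op} \le 3 \max_{x \in \cC}|x^\top A x|$ (treating the statement's $x^\top A x$ as $|x^\top A x|$, which is the standard formulation; alternatively, replacing $\cC$ by $\cC \cup (-\cC)$ of the same asymptotic size handles the sign).

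There is no real obstacle here: this is a textbook $\eps$-net computation, and the only thing to be careful about is picking the right $\xi$ so that the two resulting constants come out to $2$ and $3$ respectively ($\xi = 1/3$ works for both simultaneously, and simultaneously gives the claimed $7^d$ bound on $|\cC|$).
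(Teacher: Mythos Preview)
Your proposal is correct and is exactly the standard $\eps$-net argument that the paper defers to by citing Vershynin; the paper does not supply its own proof of this corollary. Your observation about the missing absolute value in the operator-norm inequality is also well taken: as stated, the bound fails for negative-definite $A$, and in the paper's applications (e.g., the cover argument in the proof of \Cref{lem:sample_complexity}) the absolute value is indeed used.
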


\noindent We will use the notions of subgaussian and subexponential random variables in the following standard way that we briefly review below.

\paragraph{Subgaussian and subexponential random variables} The following three statements are equivalent: (i) the random variable $x-\E[x]$ is subgaussian with ``variance proxy'' $\sigma^2$ (ii) $\|x-\E[x]\|_{L_p} \lesssim \sigma \sqrt{p}$ for all $p\geq 1$ and (iii) $\Pr[|x-\E[x]| > t]\leq e^{-\Omega(t^2/\sigma^2)}$. The following three are also equivalent: (i) the random variable $x-\E[x]$ is subexponential with parameter $\lambda$ (ii) $\|x-\E[x]\|_{L_p} \lesssim p \lambda$ for all $p\geq 1$ and (iii) $\Pr[|x-\E[x]| > t]\leq e^{-\Omega(t/\lambda)}$.  If $X_1,\ldots,X_n$ are zero-mean independent subexponential random variables with parameter $\lambda$, and $\bar{X}=\sum_{i \in [n]}X_i/n$ then $\Pr[|\bar{X}| > t] \leq \exp(-\Omega(n)\min(t^2/\lambda^2,t/\lambda))$ (Bernstein's inequality).

\begin{fact}[see, e.g., Exercise 2.5.1 in \cite{Ver18}]\label{fact:gaussian_moments}
    If $X\sim N(0,1)$ then for any $p\geq 1$, $\norm{X}_{L_p}=(\E[\abs{X}^p])^\frac{1}{p}=\sqrt{2}\left[\frac{\Gamma((1+p)/2)}{\Gamma(1/2)}\right]^{1/p}$, and thus by Stirling's approximation $\norm{X}_{L_p}\lesssim \sqrt{p/e}$.
\end{fact}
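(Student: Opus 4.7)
The plan is to reduce the absolute moment $\E[|X|^p]$ to a Gamma function evaluation by a standard substitution, and then apply Stirling's formula to extract the stated asymptotic.

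First, I would exploit the symmetry of the standard Gaussian density to write $\E[|X|^p] = \frac{2}{\sqrt{2\pi}}\int_0^\infty x^p e^{-x^2/2}\,\mathrm{d}x$. Substituting $u = x^2/2$ (so that $\mathrm{d}x = \mathrm{d}u/\sqrt{2u}$ and $x^p = (2u)^{p/2}$) converts the integrand into a Gamma integral, and a short calculation yields $\E[|X|^p] = \frac{2^{p/2}}{\sqrt{\pi}}\,\Gamma((p+1)/2)$. Taking the $p$-th root and using $\Gamma(1/2) = \sqrt{\pi}$ then immediately gives the closed-form identity $\|X\|_{L_p} = \sqrt{2}\,[\Gamma((1+p)/2)/\Gamma(1/2)]^{1/p}$ in the statement.

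For the Stirling bound, I would invoke a non-asymptotic version of Stirling's formula, $\Gamma(t) \le C\,(t/e)^t\sqrt{2\pi/t}$ valid for $t \geq 1/2$, with $t = (p+1)/2$. Raising to the $1/p$ power, the dominant contribution is $((p+1)/(2e))^{(p+1)/(2p)}$, which is bounded by $\sqrt{(p+1)/(2e)}$ up to a factor $(p+1)^{1/(2p)} = O(1)$; the $\sqrt{2\pi/t}$ prefactor similarly contributes $O(1)$ after the $1/p$-th root. Multiplying by the $\sqrt{2}$ in front and using $p \ge 1$ yields $\|X\|_{L_p} \lesssim \sqrt{p/e}$, as claimed.

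There is no substantive obstacle; the only subtlety is uniformity of the constant in $p \ge 1$, since Stirling's formula is most naturally stated asymptotically. I would handle this either by using an explicit non-asymptotic Stirling estimate with known constants or, equivalently, by treating a bounded range $p \in [1, p_0]$ separately using the explicit values of $\Gamma((p+1)/2)$ and absorbing the resulting ratio into the $\lesssim$ constant.
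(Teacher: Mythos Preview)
Your proposal is correct and is precisely the standard argument one would use for this exercise; the paper itself does not supply a proof but simply cites it as a fact from \cite{Ver18}, so there is nothing to compare against beyond noting that your derivation is the expected one.
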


\begin{fact}[Matrix Bernstein Inequality; see, e.g., Theorem 5.4.1 in \cite{Ver18} ]\label{fact:matrixBernstein}
   Let $X_1, \ldots, X_N$ be independent, mean zero, $n \times n$ symmetric random matrices, such that $\left\|X_i\right\|_{\op} \leq K$ almost surely for all $i$. Then, for every $t \geq 0$, we have
$$
\mathbb{P}\left[ \left\|\sum_{i=1}^N X_i\right\|_{\op} \geq t\right] \leq 2 n \exp \left(-c \cdot \min \left(\frac{t^2}{\sigma^2}, \frac{t}{K}\right)\right) ,
$$
where $\sigma^2=\left\|\sum_{i=1}^N \mathbb{E} [X_i^2] \right\|_{\op}$ is the operator norm of the matrix variance of the sum.
\end{fact}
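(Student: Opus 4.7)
The plan is to follow the matrix Laplace transform approach (Ahlswede--Winter, sharpened by Tropp), which lifts the scalar Chernoff/Bernstein method to the matrix setting via Lieb's concavity theorem. First, since for symmetric $M$ we have $\|M\|_{\op} = \max(\lambda_{\max}(M),\lambda_{\max}(-M))$, applying the same argument to $\{-X_i\}$ costs only a factor of $2$ in the final probability, so it suffices to bound $\mathbb{P}[\lambda_{\max}(\sum_i X_i) \geq t]$. By Markov applied to $\operatorname{tr}\exp(\theta\,\cdot)$ (which pointwise dominates $e^{\theta \lambda_{\max}(\cdot)}$ on symmetric matrices), for any $\theta > 0$,
\[
\mathbb{P}\Bigl[\lambda_{\max}\Bigl(\sum_i X_i\Bigr) \geq t\Bigr] \;\leq\; e^{-\theta t}\,\mathbb{E}\Bigl[\operatorname{tr}\exp\Bigl(\theta\sum_i X_i\Bigr)\Bigr].
\]

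The main obstacle is that $\exp(A+B) \neq \exp(A)\exp(B)$ for non-commuting symmetric $A,B$, so the matrix MGF cannot be factored across independent $X_i$ in the naive way one uses in scalar Bernstein. The resolution is Lieb's concavity theorem, which states that $A \mapsto \operatorname{tr}\exp(H + \log A)$ is concave on the positive-definite cone for any fixed symmetric $H$. Applied inductively, conditioning on $X_1,\dots,X_{i-1}$ and pulling $\mathbb{E}[\,\cdot\mid X_{<i}]$ inside $\operatorname{tr}\exp$ via Jensen, this yields the crucial subadditivity of matrix cumulants,
\[
\mathbb{E}\Bigl[\operatorname{tr}\exp\Bigl(\theta\sum_i X_i\Bigr)\Bigr] \;\leq\; \operatorname{tr}\exp\Bigl(\sum_i \log \mathbb{E}\bigl[e^{\theta X_i}\bigr]\Bigr).
\]

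Next I would bound each cumulant. Since $X_i$ is mean-zero with $\|X_i\|_{\op}\leq K$, the operator inequality $\pm X_i^k \preceq K^{k-2} X_i^2$ for $k\geq 2$, substituted into the Taylor series of $e^{\theta X_i}$, gives $\mathbb{E}[e^{\theta X_i}] \preceq \exp\bigl(g(\theta)\,\mathbb{E}[X_i^2]\bigr)$ with $g(\theta)=\theta^2/(2(1-\theta K/3))$, valid for $0<\theta<3/K$. Operator monotonicity of $\log$ then upgrades this to $\log\mathbb{E}[e^{\theta X_i}] \preceq g(\theta)\,\mathbb{E}[X_i^2]$. Summing and using monotonicity of $\operatorname{tr}\exp$ on the PSD cone together with $\operatorname{tr}\exp(M)\leq n\exp(\lambda_{\max}(M))$, I obtain
\[
\mathbb{P}\Bigl[\lambda_{\max}\Bigl(\sum_i X_i\Bigr)\geq t\Bigr] \;\leq\; n\,\exp\bigl(-\theta t + g(\theta)\,\sigma^2\bigr),
\]
where $\sigma^2 = \|\sum_i \mathbb{E}[X_i^2]\|_{\op}$.

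Finally I would optimize in $\theta\in(0,3/K)$. In the subgaussian regime $t\lesssim \sigma^2/K$, choosing $\theta \approx t/\sigma^2$ yields an exponent of order $-ct^2/\sigma^2$; in the subexponential regime $t\gtrsim \sigma^2/K$ the constraint $\theta<3/K$ saturates and $\theta\approx 1/K$ yields an exponent of order $-ct/K$. Taking the worse of the two reproduces the $\min(t^2/\sigma^2,\,t/K)$ inside the exponential, and multiplying by the factor $2$ from the sign-reduction step gives the stated $2n$ prefactor. The entire difficulty is concentrated in Lieb's theorem (and the ensuing operator-monotonicity manipulations); once that noncommutative step is granted, every other step is the direct matrix analog of the classical scalar Bernstein derivation.
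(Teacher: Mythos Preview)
The paper does not prove this statement: it is stated as a \textbf{Fact} with an explicit citation to Theorem 5.4.1 in Vershynin's book, and is used as a black box in the proof of \Cref{lem:sample_complexity}. So there is no ``paper's own proof'' to compare against.

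Your proposal is a correct outline of the standard proof due to Tropp (the matrix Laplace transform method combined with Lieb's concavity theorem), which is exactly what the cited reference presents. The key steps---reduction to $\lambda_{\max}$, the trace-MGF Markov bound, Lieb's theorem yielding subadditivity of matrix cumulants, the Bernstein-type bound $\log\mathbb{E}[e^{\theta X_i}]\preceq g(\theta)\,\mathbb{E}[X_i^2]$ with $g(\theta)=\theta^2/(2(1-\theta K/3))$, and the two-regime optimization in $\theta$---are all correctly identified. One minor remark: the passage from $\mathbb{E}[e^{\theta X_i}]\preceq\exp(g(\theta)\,\mathbb{E}[X_i^2])$ to the log inequality needs a touch more than operator monotonicity of $\log$ alone, since the argument on the left need not be PSD-ordered against $I$; the clean route is to bound $e^{\theta X_i}\preceq I+\theta X_i + g(\theta)X_i^2$ first (via $e^x\leq 1+x+\frac{x^2/2}{1-|x|/3}$ on $|x|<3$), take expectations, and then use $\log(I+A)\preceq A$ for $A\succeq 0$. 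With that adjustment your sketch is complete.
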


The following is a standard fact regarding concentration of norms for Gaussian vectors (see, e.g., \cite{Ver18} for a version of the fact for zero-mean Gaussians). For completeness, we provide a proof below.
\begin{fact}[Gaussian Norm Concentration]\label{fact:normConcetration}
    If $x\sim \cN\left(\mu,I\right)$, with probability $1-\tau$ we have that 
    \begin{align*}
        \left| \Norm{x}^2 -\left(\Norm{\mu}^2 +d\right) \right| \lesssim \log \frac{1}{\tau}+ \left(\sqrt{d}+\Norm{\mu}\right) \sqrt{\log \frac{1}{\tau}} \;.
    \end{align*}
\end{fact}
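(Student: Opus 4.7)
\begin{proofsketch}
The plan is to decompose $\Norm{x}^2$ into three pieces whose concentration is standard and combine them with a union bound. Write $x = \mu + g$ where $g \sim \cN(0, I_d)$. Expanding the inner product we get
\[
\Norm{x}^2 - (\Norm{\mu}^2 + d) \;=\; 2\mu^\top g \;+\; \bigl(\Norm{g}^2 - d\bigr),
\]
so it suffices to control each summand with failure probability at most $\tau/2$.

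For the first term, $2\mu^\top g \sim \cN(0, 4\Norm{\mu}^2)$, so the standard subgaussian tail yields $|2\mu^\top g| \lesssim \Norm{\mu}\sqrt{\log(1/\tau)}$ with probability at least $1 - \tau/2$. For the second term, $\Norm{g}^2$ is a chi-square random variable with $d$ degrees of freedom. I would invoke the Laurent--Massart bounds
\[
\Pr\!\bigl[\Norm{g}^2 - d \;\ge\; 2\sqrt{d\, t} + 2t\bigr] \;\le\; e^{-t},
\qquad
\Pr\!\bigl[\Norm{g}^2 - d \;\le\; -2\sqrt{d\, t}\bigr] \;\le\; e^{-t},
\]
applied at $t = \log(2/\tau)$, to conclude $\bigl|\Norm{g}^2 - d\bigr| \lesssim \sqrt{d\log(1/\tau)} + \log(1/\tau)$ with probability at least $1 - \tau/2$.

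A union bound then gives, with probability at least $1-\tau$,
\[
\bigl|\Norm{x}^2 - (\Norm{\mu}^2 + d)\bigr| \;\lesssim\; \log(1/\tau) + \bigl(\sqrt{d} + \Norm{\mu}\bigr)\sqrt{\log(1/\tau)},
\]
which is the claimed bound. There is no real obstacle here; the only minor point worth flagging is that one must handle both tails of the chi-square (upper tail contributes the $\log(1/\tau)$ term, lower tail is purely $\sqrt{d\log(1/\tau)}$), which is why the Laurent--Massart formulation is convenient. If one wanted to avoid citing Laurent--Massart, the same bound follows from Bernstein's inequality applied to the i.i.d.\ subexponential summands $g_i^2 - 1$, each with subexponential parameter $O(1)$.
\end{proofsketch}
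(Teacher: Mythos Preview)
Your proof is correct and essentially identical to the paper's: the paper also writes $x=\mu+z$, splits into $|\|z\|^2-d|+2|z^\top\mu|$, handles the first term via Bernstein's inequality and the second via the Gaussian tail of $z^\top\mu\sim\cN(0,\|\mu\|^2)$. Your use of Laurent--Massart in place of Bernstein is an inessential variation (and you already note their equivalence here).
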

\begin{proof}
    Write $x=z+\mu$ for $z\sim \cN\left(0,I\right)$ then 
    \begin{align*}
        \Abs{\Norm{z+\mu}^2-\Norm{\mu}^2-d}&=\Abs{\Norm{z}^2-d+2z^\top\mu}\\
        &\le\Abs{\Norm{z}^2-d}+2\Abs{z^\top\mu}
    \end{align*}
    First, using Bernstein's inequality, we have that
    \begin{align*}
        \Abs{\Norm{z}^2-d}\lesssim \log\frac{1}{\tau}+\sqrt{d\log \frac{1}{\tau}}.
    \end{align*}
     with probability at least $1-\tau$.
    Also as $z^\top\mu\sim \cN\left(0,\Norm{\mu}^2\right)$ it satisfies the Gaussian tails
    \begin{align*}
        \Pr\left[\Abs{z^\top\mu}>t\right]\le e^{-\Omega(\frac{t^2}{ \Norm{\mu}^2})} \;,
    \end{align*}
    or equivalently $\Abs{z^\top\mu}\lesssim \Norm{\mu}\sqrt{\log\frac{1}{\tau}}$ with probability $1-\tau$.
\end{proof}

\section{Omitted Proofs from \Cref{sec:brute-force}}\label{appendix:omited_main}

We restate and prove the following result.

\BRUTEFORCE*

\begin{proof}[Proof of \Cref{thm:brute_force}]
Denote by $T=\{x_i\}_{i=1}^n, x_i\in \mathbb{R}^d$ an $\alpha$-corrupted set of points from $\cN(\mu,I)$ under the model of \Cref{def:cont} and denote by $\cC$ the the cover set of \Cref{fact:cover}.
    The algorithm is the following: First, using the algorithm from \Cref{fact:onedim},  calculate a $m_v$ for each $v \in \cC$  such that $\abs{m_v-v^\top \mu}\le \varepsilon/8$ (see next paragraph for more details on this step). Then, output the solution of the following linear program (note that the program always has a solution, as it is satisfied by $\widehat{\mu}=\mu$):
    \begin{align*}
        &\text{ Find }\widehat{\mu}\in \mathbb{R}^d \text{ s.t.}\\
        &\abs{v\cdot\widehat{\mu}-m_v}\le \varepsilon/4, \forall v\in \mathcal{C} \;.
    \end{align*}
    The claim is that this solution $\widehat{\mu}$ is indeed close to the target $\mu$, since 
   \begin{align}
        \norm{\mu-\widehat{\mu}} &\leq 2\max_{v\in \mathcal{C}}\abs{v^\top(\mu-\widehat{\mu})}  \tag{using \Cref{fact:cover}}\\
        &\le 2\max_{v\in \mathcal{C}}(\abs{v^\top \mu -m_v}+\abs{m_v -v^\top\widehat{\mu} }) \notag\\
        &\le 2(\eps/8 + \eps/4) < \eps \;. \label{eq:temp4234}
    \end{align}
    We now explain how to obtain the approximations $m_v$ with the guarantee $\abs{m_v-v^\top \mu}\le \varepsilon/8$.
    Fixing a direction $v \in \mathcal{C}$, we note that  $v^\top x\sim \cN(v^\top \mu,1)$ thus $\{v^\top x_i\}_{i=1}^m$ is a set of $\alpha$-corrupted samples  of $ \cN(v^\top \mu,1)$. Thus, if we apply algorithm  from \Cref{fact:onedim} with probability of failure $\delta' = \delta/|\cC|$, the event $\abs{m_v-v^\top \mu}\le \varepsilon/8$ will hold with probability at least $1-\delta/|\mathcal{C}|$. By union bound, the probability all the events for $v \in \cC$  hold simultaneously is at least $1-\delta$. The number of samples for this application of \Cref{fact:onedim} is $2^{O(1/\eps^2)}\log(1/\delta') = 2^{O(1/\eps^2)}\log(|\cC|/\delta) = 2^{O(1/\eps^2)}(d+ \log(1/\delta))$. 
    
    We conclude with the runtime analysis.
    The runtime to find the $m_v$'s is $O(|\cC|\poly(nd)) = 2^{O(d)}\poly(nd)$ since for each fixed $v \in \cC$ we need $\poly(nd)$ time to calculate the projection $\{x_i^\top v\}$ of our dataset onto  $v$ and $\poly(n)$ time to run the one-dimensional estimator.
    The linear program can be solved using the ellipsoid algorithm. Consider the separation oracle that exhaustively checks all $2^{O(d)}$ constraints. We need $\poly(d)\log(\frac{R}{r})$ calls to that separation oracle, where $R,r$ are the radii of the bounding spheres of the feasible region. First, $R\leq \eps$, because we have already shown in \eqref{eq:temp4234}  that the feasible set belongs in a ball of radius $\eps$ around $\mu$. Regarding the upper bound $r$, note that all $\widehat{\mu}$ inside a ball of radius $\varepsilon/8$ around $\mu$ are feasible since $\abs{v^\top\widehat{\mu}-m_v}\le \abs{v^\top\widehat{\mu}-v^\top \mu}+\abs{v^\top\mu-m_v}\le \norm{\widehat{\mu}-\mu}+ \varepsilon/8 \leq \eps/4$. This means that $r=\eps/4$. Hence the total runtime for solving the LP is $2^{O(d)} \poly(d)$ or simply $2^{O(d)}$.

\end{proof}

\section{Omitted Proofs from \Cref{sec:matrix}}
\label{appendix:sample}
In this section, we restate and prove \Cref{lemma:expression,lem:sample_complexity,lem:sample_complexity2}.

\Expression*
\begin{proof}
We prove the lemma for $\cV = \R^{k}$ which has $\Pi_{\cV} = I$. The same proof generalizes to arbitrary subspaces.
First, for any function $f:\mathbb{R}^k\to \mathbb{R}^{k\times k}$, we have that
    \begin{align}
    \E_{x \sim \cN(z,I)}\left[f(x)e^{-\frac{\norm{x}^2}{\beta\sqrt{k}}} \right]&=\frac{1}{Z}\int_{\mathbb{R}^k}f(x)e^{-\frac{\norm{x}^2}{\beta\sqrt{k}}}e^{-\frac{\norm{x-z}^2}{2}}dx  \tag{where $Z := (2\pi)^{k/2}$} \\
    &=\frac{1}{Z}\int_{\mathbb{R}^k}f(x)e^{-\frac{\norm{x}^2}{\beta\sqrt{k}} -\frac{\norm{x-z}^2}{2}}dx \notag \\
    &=\frac{1}{Z}\int_{\mathbb{R}^k}f(x)e^{-\frac{1}{2}(\norm{x}^2(\frac{2}{\beta\sqrt{k}}+1)-2z^\top x+\norm{z}^2)}dx \notag \\
    &=\frac{1}{Z}\int_{\mathbb{R}^k}f(x)e^{-\frac{1}{2}(\norm{x}^2c-2z^\top x+\norm{z}^2)}dx  \tag{where $c:=\frac{2}{\beta\sqrt{k}}+1$} \\
    &=\frac{1}{Z}\int_{\mathbb{R}^k}f(x)e^{-\frac{c}{2}(\norm{x}^2-\frac{2}{c}z^\top x+\frac{1}{c}\norm{z}^2)}dx \notag \\
    &=\frac{1}{Z}\int_{\mathbb{R}^k}f(x)e^{-\frac{c}{2}\left(\norm{x-\frac{1}{c}z}^2 +(\frac{1}{c}-\frac{1}{c^2})\norm{z}^2\right)}dx \notag \\
    &=\frac{e^{-\frac{1}{2}(1-\frac{1}{c})\norm{z}^2 } }{Z}\int_{\mathbb{R}^k}f(x)e^{-\frac{c}{2}\norm{x-\frac{1}{c}z}^2 }dx \notag \\
     &=\frac{e^{-\frac{\norm{z}^2}{\beta\sqrt{k}+2} } }{Z}\int_{\mathbb{R}^k}f(x)e^{-\frac{c}{2}\norm{x-\frac{1}{c}z}^2 }dx \notag \\
     &=\frac{e^{-\frac{\norm{z}^2}{\beta\sqrt{k}+2} }Z' }{Z}\E_{x\sim \cN(\frac{1}{c}z,\frac{1}{c}I)}[f(x)] \tag{where $Z':=(2\pi)^{k/2}/c^{k/2}$} \\
     &=b\E_{x\sim \cN(\frac{1}{c}z,\frac{1}{c}I)}[f(x)]\;. \tag{where $b:=e^{-\frac{\norm{z}^2}{\beta\sqrt{k}+2}  }\paren{1/c}^{k/2}$}  
\end{align}
Therefore, for $f(x)=xx^\top -(1/c)I$, we have that
\begin{align*}
      \E_{x \sim \cN(z,I)}\left[\left(xx^\top -\frac{1}{c}I\right)e^{-\frac{\norm{x}^2}{\beta\sqrt{k}}}\right]
      &= \E_{x \sim \cN(z,I)}\left[f(x)e^{-\frac{\norm{x}^2}{\beta\sqrt{k}}}\right]
      = b  \E_{x\sim \cN(\frac{1}{c}z,\frac{1}{c}I)}[f(x)]\\
      &=b\left(\E_{x\sim \cN(\frac{1}{c}z,\frac{1}{c}I)}[xx^\top ] -\frac{1}{c}I\right) 
      =b\left( \frac{1}{c}I+\frac{1}{c^2}zz^\top - \frac{1}{c}I\right) \\
      &= b \left(\frac{1}{c^2}zz^\top  \right)=e^{-\frac{\norm{z}^2}{\beta\sqrt{k}+2}  }\left(\frac{1}{1 + \frac{2}{\beta \sqrt{k} }  }\right)^{\frac{k}{2}+2}zz^\top\;. 
\end{align*}
\end{proof}

\SAMPLES*
\begin{proof}  We will show that \Cref{it:outliers} and \Cref{it:inliers} of \Cref{def:deterministic} hold. Without loss of generality, we use $\cV=\R^k$ in \Cref{def:deterministic}, i.e., our subspace is the entire $\R^k$. We start with \Cref{it:outliers} since it is more general. The other will be similar.

 \item
    \paragraph{Proof of \Cref{it:outliers}}

    We remind the reader that $T=\{x_i,\dots, x_{\abs{T}}\}$ is the set are the $\alpha$-corrupted points and that $T\setminus S$ denotes the subset of $\alpha n$ outliers with $z_i$ the associated center of $x_i\in T\setminus S$. 
    We want to show that the sample average over $T \setminus S$ matrix $\widehat{A}$ from \eqref{eq:matrixA}, with $\beta=\sqrt{\log(k)}$, satisfies $v^\top \widehat{A} v > -\eta$ for all unit vectors $v$. Without loss of generality, instead of the definition of $\widehat{A}$ given in  \eqref{eq:matrixA} we will use the following definition:
    \begin{align}
        \widehat{A} = \frac{1}{|T \setminus S|} \sum_{i:x_i \in T\setminus S} \widehat{A}_{i} \quad \text{where} \;\; \widehat{A}_i = F_{\beta,k}(x_i)e^{\sqrt{k/\log(k)}} \1(\cE_i)\;,
    \end{align}
    where $F_{\beta,k}$ as in \Cref{eq:matrixA} and $\cE_i$ is defined to be the following \emph{good event}:
    \begin{align}\label{eq:goodevent}
         \left| \Norm{x_i}^2 -\left(\Norm{z_i}^2 +k\right) \right| \lesssim \log \frac{1}{\tau}+ \left(\sqrt{k}+\Norm{z_i}\right) \sqrt{\log \frac{1}{\tau}} \;,
    \end{align}
    where $\tau := (1/n)^4$.
    The first change in our definition is that we replaced the normalization factor $Z_{\beta,k}$ that appeared in \eqref{eq:matrixA} by the simpler expression $e^{\sqrt{k/\log(k)}}$. This is because $ (1+{2}/\paren{\sqrt{k\log(k)}})^{k/2+2} = \Theta( e^{\sqrt{k/\log(k)}})$, and using $e^{\sqrt{k/\log(k)}}$ will be more convenient for our calculations later on. Without loss of generality we can use $\Pi_{\cV}=I_k$ as we can all of the statements that we aim to prove are similarity invariant. 
    The second change is that  we are using the good event in the definition of $\widehat{A}$. This is indeed without loss of generality because (i) by
    \Cref{fact:normConcetration} we know that $\cE_i$ holds for each sample $x_i$ with probability $1-\tau$, thus by a union bound over all samples we have that all the events $\cE_i$ hold simultaneously with probability at least $1-n \tau \geq 1-1/n^3>1-\delta/8$ (for $k\ge 2$), and (ii) using the indicator $\1(\cE_i)$ in the definition of $\widehat{A}$ shifts the expected value of $\widehat{A}$ by a negligible amount (much smaller than $\eta$) as we show below:
    
    \begin{align}
       &\Abs{v^\top  \E_{x_i \sim \cN(z_i,I)}\left[F_{\beta,k}(x_i)e^{\sqrt{k/\log(k)}} \1(\cE_i) - F_{\beta,k}(x_i)e^{\sqrt{k/\log(k)}} \right]  v } \notag \\
       &= \Abs{v^{\top} \E_{x_i \sim \cN(z_i,I)}\left[  \left(x_i x_i^\top - \frac{\sqrt{k\log(k)}}{\sqrt{k\log(k)}+2} I  \right)e^{-\frac{\|x_i\|^2-k}{\sqrt{k\log(k)}}}\Ind\left(\bar{\cE_i}\right)\right]v } \notag \\
        &\leq \E_{x_i \sim \cN(z_i,I)}\left[  \Abs{ \left( v^\top x_i \right)^2 - \frac{\sqrt{k\log(k)}}{\sqrt{k\log(k)}+2}  }e^{-\frac{\|x_i\|^2-k}{\sqrt{k\log(k)}}}\Ind\left(\bar{\cE_i}\right)\right] \notag \\
        &\le \sqrt{\Pr_{x_i \sim \cN(z_i,I)}[\bar{\cE_i}]}\sqrt{\E_{x_i \sim \cN(z_i,I)}\left[\left( \left( \left( v^\top x_i \right)^2 - \frac{\sqrt{k\log(k)}}{\sqrt{k\log(k)}+2}  \right)e^{-\frac{\|x_i\|^2-k}{\sqrt{k\log(k)}}}\right)^2\right]}  \tag{using Cauchy-Schwarz}\\
        &\le \left(\frac{1}{n}\right)^2 \sqrt{ \E_{x_i \sim \cN(z_i,I)}\left[ \left(\left( \left( v^\top x_i \right)^2 - \frac{\sqrt{k\log(k)}}{\sqrt{k\log(k)}+2}  \right)e^{-\frac{\|x_i\|^2-k}{\sqrt{k\log(k)}}}\right)^2\right]}  \tag{using \Cref{fact:normConcetration} with $\tau=(1/n)^4$}\\
        &\lesssim \left(\frac{1}{n}\right)^2  \sqrt{k}\log(k)e^{2/\log^2(k)}  \;. \tag{using \Cref{cl:var_bound} applied with $\beta=\sqrt{\log(k)}$}\\
        & \le \eta/2\;,  \label{eq:expectation_shift}
    \end{align}
    where \Cref{cl:var_bound} that was used above is a bound on the reweighted second moment that can be found in \Cref{sec:supporting_claims}.
The above is considering only a single term in the definition of $\widehat{A}$. By triangle inequality it also follows that $\Norm{\E\left[\widehat{A}\right]-A}\le \eta/2$, where $A = \E \left[\frac{1}{|T\setminus S|} \sum_{i : x_i \in T \setminus S}F_{\beta,k}(x_i)e^{\sqrt{k/\log(k)}} \1(\cE_i)\right]$.
    
    We now move to show that $v^\top \widehat{A} v > -\eta$ with high probability.
    Write $x_i= z_i+g_i$ where $ g_i\sim \cN(0,I) $ for $i \in T\setminus S$.  We decompose the matrix $\widehat{A}_i$ for $i : x_i \in T\setminus S$ into  three terms 
    \begin{align*}
        \widehat{A}_i&= \left(\left(g_ig_i^\top  - \frac{\sqrt{k\log(k)}}{\sqrt{k\log(k)}+2} I  \right)+(z_ig_i^\top +g_i z_i^\top) +z_iz_i^\top \right)e^{-\frac{\|x_i\|^2-k}{\sqrt{k\log(k)}}}\Ind\left(\cE_i\right)\\
        &=\vcentcolon \widehat{T}_{1i}+\widehat{T}_{2i}+\widehat{T}_{3i}\;,
    \end{align*}
    
where  by $\widehat{T}_{1i}, \widehat{T}_{2i} $ and $\widehat{T}_{3i}$ we denote each of the terms that sum to $\widehat{A}_i$ and by  $T_{1i}, T_{2i}$ and $ T_{3i}$ we denote their corresponding expectations.  We will show concentration for each of the terms separately and then  combine the results to show that $v^\top \widehat{A} v > -\eta$. 
\item  \paragraph{First Term }
Fix $v\in \mathbb{R}^k:\norm{v}=1$ and let $i$ such that $x_i\in T\setminus S$. We bound the $L_p$ norm of the random variable $v^\top \left(\widehat{T}_{1i}- T_{1i}\right)v$ as follows:
     \begin{align}
      &\left\| v^\top \left(\widehat{T}_{1i}- T_{1i}\right)v\right\|_{L_p} \lesssim \Norm{v^\top \widehat{T}_{1i}v}_{L_p} \tag{by triangle inequality and Jensen's inequality}\\
      &=\Norm{   \left( \left(g_i^\top v\right)^2  - \frac{\sqrt{k\log(k)}}{\sqrt{k\log(k)}+2}   \right)e^{-\frac{\|x_i\|^2-k}{\sqrt{k\log(k)}}}\Ind(\cE_i)}_{L_p} \notag\\
      & \le  \Norm{   \left( \left(g_i^\top v\right)^2  - \frac{\sqrt{k\log(k)}}{\sqrt{k\log(k)}+2}   \right)\Ind\left( \cE_i \right)}_{L_p} (1/\eta)^{o(1)} e^{  -\frac{\|z_i\|^2}{\sqrt{k\log(k)}} + O \left(\frac{\|z_i\| \sqrt{\log (k/\eta)}}{\sqrt{k\log(k)}}\right) }  \tag{by \Cref{cl:center_norm_bound}}\\
        &\le \left(\Norm{    \left(g_i^\top v\right)^2}_{L_p}+  \frac{\sqrt{k\log(k)}}{\sqrt{k\log(k)}+2} \right) (1/\eta)^{o(1)} e^{ O ( \frac{\log (k/\eta)}{\sqrt{k\log(k)}}) } \tag{by \Cref{cl:high_school}}\\
        &\lesssim \left(p+  \frac{\sqrt{k\log(k)}}{\sqrt{k\log(k)}+2} \right) (1/\eta)^{o(1)}   \tag{by \Cref{fact:gaussian_moments} since $g_i^\top v\sim N(0,1)$} \\
        &\lesssim p \cdot (1/\eta)^{o(1)} \;,  \label{eq:lpnorm-bound}
    \end{align}  
     where the third line above uses the definition of the good event $\cE_i$. The proof follows by the definition of the events $\cE_i$ and some simple algebra. We include the proofs of both  claims in \Cref{sec:supporting_claims}.
        \item
    \begin{restatable}{claim}{REWEIGHTING}\label{cl:center_norm_bound}
        Let $\cE_i$ denote the event from \eqref{eq:goodevent}. Then, 
        \begin{align*}
            \exp\left( -\frac{\|x_i\|^2-k}{\sqrt{k\log(k)}} \right) \1( \cE_i) \leq \exp\left( -\frac{\|z_i\|^2}{\sqrt{k\log(k)}} + O \left(\frac{\|z_i\| \sqrt{\log (k/\eta)}}{\sqrt{k\log(k)}}\right) \right)  (1/\eta)^{o(1)} \;.
        \end{align*}
    \end{restatable}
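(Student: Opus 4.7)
The plan is to expand the good event $\cE_i$ into an explicit lower bound on $\|x_i\|^2 - k$, substitute into the exponent, and then classify the error terms that appear. Under $\cE_i$ we have the one-sided estimate $\|x_i\|^2 - k \geq \|z_i\|^2 - C\log(1/\tau) - C(\sqrt{k}+\|z_i\|)\sqrt{\log(1/\tau)}$ for an absolute constant $C$, obtained by keeping the lower side of the two-sided band in \eqref{eq:goodevent}. Dividing by $\sqrt{k\log k}$, negating, and exponentiating bounds the left-hand side by $\exp(-\|z_i\|^2/\sqrt{k\log k})$ times three multiplicative error factors, one for each of the three summands in the error band.

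The next step is to classify these factors. The $C\|z_i\|\sqrt{\log(1/\tau)}/\sqrt{k\log k}$ contribution matches the explicit cross term $O(\|z_i\|\sqrt{\log(k/\eta)}/\sqrt{k\log k})$ in the statement, once I substitute $\log(1/\tau) = O(\log(k/\eta))$. The two size-independent contributions in the exponent, namely $C\log(1/\tau)/\sqrt{k\log k}$ and $C\sqrt{\log(1/\tau)/\log k}$, must be absorbed into the overall $(1/\eta)^{o(1)}$ prefactor. The substitution $\log(1/\tau) = O(\log(k/\eta))$ uses $\tau = (1/n)^4$ together with the sample-size hypothesis $n = k\log^3(k)(1/\eta)^{2+o(1)}/\delta$ from \Cref{lem:sample_complexity}, which gives $\log(1/\tau) = 4\log n \lesssim \log k + \log(1/\eta)$.

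Finally, I would verify that both size-independent factors qualify as $(1/\eta)^{o(1)}$: the first satisfies $\log(k/\eta)/\sqrt{k\log k} = o(1)$ for $k$ larger than a sufficiently large constant, which is the regime of the lemma, so its exponential is $(1/\eta)^{o(1)}$; the second produces $\exp(O(\sqrt{\log(k/\eta)/\log k}))$, and this is $(1/\eta)^{o(1)}$ because $\sqrt{\log(k/\eta)/\log k} = o(\log(1/\eta))$ as $\log k$ grows. The main obstacle here is purely one of bookkeeping---no new inequalities are needed beyond these elementary estimates---so assembling the three factors delivers exactly the stated bound.
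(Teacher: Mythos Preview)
Your proposal is correct and mirrors the paper's own argument almost step for step: use the lower side of the band in $\cE_i$, substitute $\log(1/\tau)=O(\log(k/\eta))$ via $\tau=n^{-4}$ and $n=(k/\eta)^{O(1)}$, keep the $\|z_i\|\sqrt{\log(k/\eta)}/\sqrt{k\log k}$ cross term explicitly, and absorb the two $z_i$-independent contributions into $(1/\eta)^{o(1)}$. The only cosmetic difference is that the paper spells out the chain of inequalities line by line and explicitly invokes $e^{\sqrt{\log(1/\eta)}}=(1/\eta)^{o(1)}$ and $e^{\log(1/\eta)/\sqrt{k\log k}}=(1/\eta)^{o(1)}$, whereas you summarize these as ``$o(\log(1/\eta))$'' exponents; the content is the same.
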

        
    \begin{restatable}{claim}{HIGHSCHOOL}\label{cl:high_school}
    Fix a $k>10, 0<\eta <1$ and  $C>8$. The following inequalities hold:  
    \begin{enumerate}[label=\roman*.]
        \item $-\frac{x^2}{\sqrt{k\log(k)}} + C\frac{x \sqrt{\log\left(\frac{k}{\eta}\right)}}{\sqrt{k\log(k)}} \leq  C^2\frac{\log \left(\frac{k}{\eta}\right)}{ \sqrt{k\log(k)}}$ for all $x>0$.
        \item $x^2 e^{-\frac{x^2}{\sqrt{k\log(k)}} + C\frac{x \sqrt{\log \frac{k}{\eta}}}{\sqrt{k\log(k)}}} \lesssim C^2\sqrt{k\log{k}}\log{\left(\frac{k}{\eta}\right)} e^{C^2\frac{\log{\left(\frac{k}{\eta}\right)}}{\sqrt{k\log(k)}} } $ for all $x>0$.
        \item $x^4 e^{-\frac{2x^2}{\sqrt{k\log(k)}} + C\frac{x \sqrt{\log (k/\eta)}}{\sqrt{k\log(k)}}} \lesssim  C^4k\log(k)\log^2{\left(\frac{k}{\eta}\right)} e^{C^2\frac{\log{\left(\frac{k}{\eta}\right)}}{\sqrt{k\log(k)}} }$ for all $x>0$.
    \end{enumerate}
    \end{restatable}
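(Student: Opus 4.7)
\begin{proofsketch}
The plan for all three inequalities is to complete the square in the exponent, which reduces each claim to standard one-dimensional calculus. Set $A := 1/\sqrt{k\log k}$ and $b := C\sqrt{\log(k/\eta)}/\sqrt{k\log k}$. The exponent in (i) and (ii) is
\[
-Ax^2 + bx \;=\; -A(x-M)^2 + \frac{b^2}{4A}, \qquad M := \frac{C\sqrt{\log(k/\eta)}}{2},
\]
where the additive constant $b^2/(4A) = C^2\log(k/\eta)/(4\sqrt{k\log k})$ is at most the exponent $C^2\log(k/\eta)/\sqrt{k\log k}$ appearing on the right-hand sides. The exponent in (iii) is $-2Ax^2 + bx$, which by the same computation equals $-2A(x-M/2)^2 + b^2/(8A)$, with a slightly smaller additive constant $C^2\log(k/\eta)/(8\sqrt{k\log k})$.

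Part (i) is then immediate, as the quadratic $-Ax^2 + bx$ is bounded by its maximum $b^2/(4A)$. For (ii) and (iii), after absorbing the additive constant into a prefactor $e^{C^2\log(k/\eta)/\sqrt{k\log k}}$, it remains to bound $x^{2j} e^{-cA(x-\tilde M)^2}$ for $j \in \{1,2\}$ (with $c \in \{1,2\}$ and $\tilde M \in \{M, M/2\}$ chosen accordingly). I would expand $x^{2j} \lesssim (x-\tilde M)^{2j} + \tilde M^{2j}$ by convexity, apply the elementary one-variable fact $\sup_{y \in \R} y^{2j} e^{-cAy^2} \lesssim A^{-j}$ (whose maximizer is $y^2 = j/(cA)$) to control the shifted piece, and use $\tilde M^{2j} \lesssim C^{2j}\log^j(k/\eta)$ to control the centered piece. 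Since $A^{-1} = \sqrt{k\log k}$, the resulting polynomial prefactor is of order $(\sqrt{k\log k})^j + C^{2j}\log^j(k/\eta)$, and both summands are dominated by the product $C^{2j}(\sqrt{k\log k})^j \log^j(k/\eta)$ claimed in the statement.

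No substantive obstacle is expected: the entire claim is a one-variable calculus exercise. The only care required is bookkeeping of the constants, namely verifying that the additive constants $b^2/(4A)$ and $b^2/(8A)$ pulled out as exponential prefactors are dominated by the target exponent $C^2\log(k/\eta)/\sqrt{k\log k}$ (which holds trivially since $1/4,1/8 < 1$), and checking that the hypothesis $C \geq 8$ is large enough to absorb the universal $\lesssim$-constants uniformly across the two pieces of the convexity split.
\end{proofsketch}
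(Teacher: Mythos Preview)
Your proposal is correct. Both you and the paper reduce the claim to one-variable optimization; the paper proceeds by explicit differentiation, solving $f'(x)=0$ for the critical point (which for (ii) and (iii) involves the quadratic formula), and then substituting the root back in and simplifying. Your completing-the-square route is essentially the same idea but executed more cleanly: it exposes the maximizer of the exponent without root-finding, and the convexity split $x^{2j}\lesssim (x-\tilde M)^{2j}+\tilde M^{2j}$ together with $\sup_y y^{2j}e^{-cAy^2}\lesssim A^{-j}$ avoids having to evaluate $f$ at a messy algebraic expression. The paper's direct calculation gives slightly sharper constants in principle, while your approach is shorter and makes the structure (Gaussian-like decay around a shifted center) transparent; for the purposes of the $\lesssim$-bounds in the statement, the two are interchangeable.
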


    As a result of \eqref{eq:lpnorm-bound}, we have that  $v^\top \left(\widehat{T}_{1i}- T_{1i}\right)v$ (where $T_{1i}$ denotes the expectation of $\widehat{T_{1i}}$) is sub-exponential random variable with parameter $\lambda = \frac{1}{\eta^{o(1)}}$, hence from Bernstein's inequality we have that 
 \begin{align*}
     \Pr\left[\Abs{\frac{1}{\alpha n}\sum_{i : x_i \in T\setminus
     S} v^\top \left(\widehat{T}_{1i}- T_{1i}\right)v}\ge t  \right]\le 2\exp{\left(-c\min\left(\frac{t^2\alpha n}{\lambda^2}, \frac{t\alpha n}{\lambda }\right)\right)}  \;,
 \end{align*}
Using the above with $t=\eta/6$, we have that with $n= \frac{  \log(1/\delta')}{\eta^{2+o(1)}} $ samples\footnote{Here we have used that without loss of generality $\alpha = \Omega(1)$ since we can always treat some of the inliers as outliers in the model of \Cref{def:cont}} we have that with probability at least $1-\delta'$ for a fixed vector $v$ it holds that $\Abs{\frac{1}{\alpha n}\sum_{i : x_i \in T\setminus S }v^\top \left(\widehat{T}_{1i}- T_{1i}\right)v}\le O\left(\eta\right)$. Now let $\cC$ be a cover of the unit ball from \Cref{fact:cover}. By using $\delta' = \frac{\delta}{8|\cC| } = \delta2^{-O(k)}$ and a union bound over $\cC$ we have that 
 \begin{align*}
     \sup_{v\in \mathbb{R}^k: \Norm{v}=1}\Abs{\frac{1}{\alpha n}\sum_{i : x_i \in T\setminus S } v^\top \left(\widehat{T}_{1i}- T_{1i}\right)v}\le 10 \max_{v\in \cC} \Abs{\frac{1}{\alpha n}\sum_{i : x_i \in T\setminus S } v^\top \left(\widehat{T}_{1i}- T_{1i}\right)v}  \;,
 \end{align*}
which can be made less than $\eta/6$ with probability $1-\delta/8$ by using $n=\frac{d  \log(1/\delta)}{\eta^{2+o(1)}}$ samples.

\item  \paragraph{Third Term} 
For the third term $\widehat{T}_{3i} = z_i z_i^\top e^{-\frac{\|x_i\|^2-k}{\sqrt{k\log(k)}}}$, we will use the Matrix Bernstein Inequality (\Cref{fact:matrixBernstein}). Recall our notation that $T_{3i}$ denotes the expected value of the random matrix $\widehat{T}_{3i}$. First, we have that
\begin{align*}
 {   \Norm{\widehat{T}_{3i}-T_{3i} }_{\op} \lesssim \|\widehat{T}_{3i} }\|_{\op}&=\Norm{z_iz_i^\top e^{-\frac{\|x_i\|^2-k}{\sqrt{k\log(k)}}}\Ind\left(\cE_i\right)}_{\op}\\
 &=\Norm{z_iz_i^\top }_{\op}e^{-\frac{\|x_i\|^2-k}{\sqrt{k\log(k)}}}\Ind\left( \cE_i\right)\\
 &=\Norm{z_i}^2 e^{  -\frac{\|z_i\|^2}{\sqrt{k\log(k)}} + O \left(\frac{\|z_i\| \sqrt{\log( k/\eta) }}{\sqrt{k\log(k)}}\right) }   (1/\eta)^{o(1) }\tag{using \Cref{cl:center_norm_bound}}\\
 &\lesssim \sqrt{k\log(k)} \log(k/\eta)  (1/\eta)^{o(1) }\;, \tag{by \Cref{cl:high_school}} \\
 &\lesssim \sqrt{k}\log^{3/2}(k) \log(1/\eta) (1/\eta)^{o(1) } \\
 & \lesssim \sqrt{k}\log^{3/2}(k)   (1/\eta)^{o(1) } \;,
\end{align*}
where the first inequality follows by the triangle inequality and Jensen's inequality
almost surely.
Moreover
\begin{align*}
  \Norm{ \E[\left(\widehat{T}_{3i}-T_{3i}\right)^2] }_{\op} &\le \Norm{ \E[\widehat{T}_{3i}^2] }\\
 &=\Norm{\E\left[\norm{z_i}^2z_iz_i^\top e^{-2\frac{\|x_i\|^2-k}{\sqrt{k\log(k)}}}\Ind\left( \cE_i\right) \right]}_{\op}\\&=
 \norm{z_i}^2\Norm{z_iz_i^\top}_{\op}\E\left[ e^{-2\frac{\|x_i\|^2-k}{\sqrt{k\log(k)}}}\Ind\left( \cE_i\right) \right]
 \\&\le\norm{z_i}^4  e^{  -\frac{2\|z_i\|^2}{\sqrt{k\log(k)}} + O \left(\frac{\|z_i\| \sqrt{\log(k/\eta)}}{\sqrt{k\log(k)}}\right) }  (1/\eta)^{o(1)} \tag{using \Cref{cl:center_norm_bound}}\\
 &\le k\log(k) \log^2(k/\eta) e^{O(\frac{\log(k/\eta)}{\sqrt{k\log(k)}})} (1/\eta)^{o(1)}.  \tag{by \Cref{cl:high_school}}\\
&\le k\log^3(k) \log^2(1/\eta) e^{O(\frac{\log(1/\eta)}{\sqrt{k\log(k)}})} (1/\eta)^{o(1)}\\
&\le k \log^3(k)(1/\eta)^{o(1)} \;.
\end{align*}
Hence, from \Cref{fact:matrixBernstein}, we have that 
\begin{align*}
    \Pr\left[\Norm{\frac{1}{\alpha n}\sum_{i : x_i\in T\setminus S}\left(\widehat{T}_{3i}-T_{3i} \right)}_{\op}\ge t \right]\le 2k\exp{\left(-c \cdot \min\left(\frac{t^2 \alpha n }{k^{}\log^3(k)(1/\eta)^{o(1)}}, \frac{t \alpha n}{\sqrt{k}\log^{3/2}(k) (1/\eta)^{o(1)}} \right)\right)}.
\end{align*}
In summary, 
using $n= O\left((1/\eta)^{2+o(1)}k^{ }\log^3(k)\log{ (1/\delta)} \right)$, we have that 
$\Norm{\frac{1}{\alpha n}\sum_{i : x_i \in T\setminus S}\left(\widehat{T}_{3i}-T_{3i}\right) }_{\op} \le \eta/6$ 
with probability at least $1-\delta/8$.

\item  \paragraph{Second Term}
For the term $\widehat{T}_2$ we will prove a multiplicative bound. Again, fix a direction $v$ with $\|v\|=1$. We will first show that $v^\top \widehat{T}_2 v$ is subgaussian by bounding the $L_p$-norms:
\begin{align*}
     \Norm{v^\top\left( \widehat{T}_{2i}-T_{2i}\right)v}_{L_p}\lesssim \Norm{v^\top \widehat{T}_{2i}v}_{L_p} &= \Norm{    \left(g_i^\top v z_i^\top v\right)  e^{-\frac{\|x_i\|^2-k}{\sqrt{k\log(k)}}}\Ind( \cE_i)}_{L_p}\\
     & \le  \Norm{   g_i^\top v}_{L_p} \Abs{v^\top z_i} e^{  -\frac{\|z_i\|^2}{\sqrt{k\log(k)}} + O \left(\frac{\|z_i\| \sqrt{\log (k/\eta)}}{\sqrt{k\log(k)}}\right) }  (1/\eta)^{o(1)}   \tag{using \Cref{cl:center_norm_bound}}\\
     &\lesssim \sqrt{p}\Abs{v^\top z_i}e^{  -\frac{\|z_i\|^2}{\sqrt{k\log(k)}} + O \left(\frac{\|z_i\| \sqrt{\log (k/\eta)}}{\sqrt{k\log(k)}}\right) }  (1/\eta)^{o(1)}  \;. \tag{since $g_i^\top v \sim \cN(0,I)$}
\end{align*}
Hence, it follows that 
$v^\top\left( \widehat{T}_{2i}-T_{2i}\right)v$ are independent subgaussian random variables with proxy standard deviations $\sigma_i = \Abs{v^\top z_i}e^{  -\frac{\|z_i\|^2}{\sqrt{k\log(k)}} + O \left(\frac{\|z_i\| \sqrt{\log (k/\eta)}}{\sqrt{k\log(k)}}\right) } (1/\eta)^{o(1)}$. Thus, the proxy variance $\sigma^2$ of the average  $(1/\alpha n) \sum_{i : x_i T\setminus S} v^\top\left( \widehat{T}_{2i}-T_{2i}\right)v$ is
\begin{align*}
    \sigma^2 &\lesssim \frac{1}{(\alpha n)^2}\sum_{i : x_i \in T\setminus S} (v^\top z_i)^2 e^{  -2\frac{\|z_i\|^2}{\sqrt{k\log(k)}} + O \left(\frac{\|z_i\| \sqrt{\log (k/\eta)}}{\sqrt{k\log(k)}}\right) }(1/\eta)^{o(1)} \\
    &\lesssim \frac{1}{(\alpha n)^2}\sum_{i : x_i \in T\setminus S} (v^\top z_i)^2 e^{  -\|z_i\|^2/\sqrt{k\log(k)}  }(1/\eta)^{o(1)} \\
     &\leq  \frac{1}{(\alpha n)^2}\sum_{i : x_i \in T\setminus S} (v^\top z_i)^2e^{-\Norm{z_i}^2/(\sqrt{k\log(k)}+2)}(1/\eta)^{o(1)}  \\
     &\lesssim \frac{1}{\alpha n} v^\top A v (1/\eta)^{o(1)} \;,  \tag{for $A := \frac{1}{\alpha n}\sum_{i : x_i \in T\setminus S} z_i z_i^\top e^{-\frac{\|z_i\|^2}{\sqrt{k\log(k)}+2}}$}
\end{align*}
where the second line used the following:
\begin{align*}
    -2\frac{\|z_i\|^2}{\sqrt{k\log(k)}} + O \left(\frac{\|z_i\| \sqrt{\log (k/\eta)}}{\sqrt{k\log(k)}}\right) 
    &= -\frac{\|z_i\|^2}{\sqrt{k\log(k)}} + \left(  O \left(\frac{\|z_i\| \sqrt{\log (k/\eta)}}{\sqrt{k\log(k)}}\right) -\frac{\|z_i\|^2}{\sqrt{k\log(k)}}\right) \\
    &\leq -\frac{\|z_i\|^2}{\sqrt{k\log(k)}} + O\left( (k\log(k))^{-1/2}\log (k/\eta)\right) \tag{using \Cref{cl:high_school} for last two terms}\\
    &\leq -\frac{\|z_i\|^2}{\sqrt{k\log(k)}} + O\left( \frac{\log (1/\eta)}{\sqrt{k \log k}} \right) \;.
\end{align*}

 Using the subgaussian tails of the random variable $\frac{1}{\alpha n} \sum_i v^\top\left( \widehat{T}_{2i}-T_{2i}\right)v$, we have that with probability $1-\delta'$ it holds
\begin{align}
    \Abs{\frac{1}{\alpha n}\sum_{i : x_i \in T\setminus S} v^\top\left( \widehat{T}_{2i}-T_{2i}\right)v} 
    \lesssim \sigma \sqrt{\log(1/\delta')}
    \lesssim   \sqrt{ \frac{1}{\alpha n }(v^\top A v) \, (1/\eta)^{o(1)} \log(1/\delta')} \;. \label{eq:temp1}  
\end{align}
Now let $\cC$ be a cover of the unit ball with accuracy $\xi = \eta/k$. 
 The size of such a cover is $|\cC| = (O(k/\eta))^k$ (\Cref{fact:cover_cor_ver}). If we use $\delta' = \delta/(8|\cC|)$ and do a union bound over $\cC$ we have that \eqref{eq:temp1} holds with probability $1-\delta/8$ for all $v \in \cC$ simultaneously. 
By plugging in the aforementioned value for $\delta'$, we get that the following holds for all $v \in \cC$:
\begin{align*}
    \Abs{\frac{1}{\alpha n}\sum_{i : x_i \in T\setminus S} v^\top\left( \widehat{T}_{2i}-T_{2i}\right)v} &\lesssim  \sqrt{ \frac{1}{\alpha n }(v^\top A v) \, (1/\eta)^{o(1)} k \log(k/\eta) \log((1/\delta) } \\
    &\lesssim \sqrt{ \frac{1}{\alpha n }(v^\top A v) \, (1/\eta)^{o(1)} k  \log(1/\delta) } \;.
\end{align*}
By using $\alpha n=O\left({  k }(1/\eta)^{o(1)}\log(1/\delta)\right)$ samples, the above implies that for all $v\in \cC$ it holds
\begin{align*}
    \frac{1}{\alpha n}\sum_{i : x_i \in T\setminus
     S} v^\top \widehat{T}_{2i}v \ge \frac{1}{\alpha n}\sum_{i : x_i\in T\setminus
     S} v^\top  T_{2i}v- (\eta/6)  \sqrt{v^\top Av } \;.
\end{align*}

As a result, combining the bounds for the three terms, we have that for every $v \in \cC$:
\begin{align*}
    v^\top \widehat{A} v&=\frac{1}{\alpha n}\sum_{i : x_i \in T\setminus S}  v^\top \left(\widehat{T}_{1i}+\widehat{T}_{2i}+\widehat{T}_{3i}\right) v\\
    &\ge \frac{1}{\alpha n}\sum_{i: x_i \in T\setminus S}  v^\top\left(T_{1i}+T_{2i}+T_{3i}\right)v-\eta/3 -(\eta/6)   \sqrt{v^\top Av }\\
    &= v^\top \E[\widehat{A}] v-\eta/3 -(\eta/6) \sqrt{v^\top Av }\\
    &\ge v^\top A v-\frac{5}{6}\eta -(\eta/6)  \sqrt{v^\top Av } \;.\tag{by \eqref{eq:expectation_shift}}
\end{align*}
From this, it now easily follows that $v^\top \widehat{A} v\geq - \eta$ by a simple case analysis: If $v^\top A v > 1$, then 
\begin{align*}
    v^\top A v-\frac{5}{6}\eta -(\eta/6)   \sqrt{v^\top Av }
    &\geq v^\top A v-\frac{5}{6}\eta -   \sqrt{v^\top Av } \tag{$\eta<1$} \\
    &\geq -\frac{5}{6}\eta  \;, 
\end{align*}
where the last line uses that $v^\top A v - \sqrt{v^\top Av } \geq 0$ whenever $v^\top A v > 1$.
If on the other hand  $v^\top A v \leq 1$, then our bound becomes $v^\top A v-\frac{5}{6}\eta -(\eta/6) \sqrt{v^\top Av } \geq v^\top A v-\frac{5}{6}\eta -(\eta/6) = v^\top A v- \eta \geq -\eta$, where we also used that $v^\top A v \geq 0$ by definition of $A :=  \frac{1}{\alpha n}\sum_{i : x_i \in T\setminus S} z_i z_i^\top e^{-\frac{\|z_i\|^2}{\sqrt{k\log(k)}+2}}$. 

So far we have shown that $v^\top \widehat{A} v \geq -\eta$ for every $v \in \cC$, where $\cC$ is a cover of the $k$-dimensional unit ball with accuracy $\xi = \eta/k$.
It is easy to see that this implies that $u^\top \widehat{A} u \geq -4\eta$ for all arbitrary $u$ with $\|u\| = 1$.
Consider an arbitrary unit vector $u$. There exists a $v\in \cC $  such that $\norm{v- u} \le \xi$ i.e., $u= v+w$,   with  $\|w\|\leq \xi$. Thus $u^\top \widehat{A}u=v^\top\widehat{A}v + v^\top A w+ w^\top A v +  w^\top\widehat{A}w \ge -\eta -   \abs{v^\top A w}- \abs{w^\top A v} -  \abs{w^\top\widehat{A}w} $.
Since $\|A\|_\op \leq \sqrt{k\log(k)}$ and $\|w\|\leq \xi = \eta/k$, $\abs{  u^\top A w}\le  \norm{w}\|A\|_\op\le \eta$ and $\abs{w^\top A w}\le \xi^2 \eta \le \eta$. 

The total probability of failure for \Cref{it:outliers} is $\delta/2$ by the union bound as the proof relies on $4$ events with probability of failure $\delta/8$.

    \item
    \paragraph{Proof of \Cref{it:inliers}}
We will use a similar argument as in \Cref{it:outliers}. Let $\cE_i$ be the same as in \eqref{eq:goodevent}. First, we will use 
\begin{align*}
        \widehat{A} = \frac{1}{| S|} \sum_{i : x_i \in  S} \widehat{A}_{i} \quad \text{where} \;\; \widehat{A}_i =  F_{\beta,k}(x_i)e^{\sqrt{k/\log(k)}}  \1(\cE_i)\;.
\end{align*}
The points $x_i\in S$ are all drawn from the same Gaussian component $\cN(\mu,I)$.
Fix $v\in \mathbb{R}^k:\norm{v}=1$. We bound the $L_p$ norm of the random variable $v^\top \left(\widehat{A}_{i}- \E[A_i]\right)v$ for $i : x_i\in S$ as follows:
     \begin{align}
      &\Norm{v^\top \left(\widehat{A}_{i}- \E[A_i]\right)v}_{L_p}\\ 
      &\lesssim \Norm{v^\top \widehat{A}_{i}v}_{L_p} \tag{by triangle inequality and Jensen's inequality}\\
      &=\Norm{   \left( \left(x_i^\top v\right)^2  - \frac{\sqrt{k\log(k)}}{\sqrt{k\log(k)}+2}   \right)e^{-\frac{\|x_i\|^2-k}{\sqrt{k\log(k)}}}\Ind( \cE_i)}_{L_p} \notag\\
      & \le  \Norm{    \left(x_i^\top v\right)^2  - \frac{\sqrt{k\log(k)}}{\sqrt{k\log(k)}+2}    }_{L_p} e^{  -\frac{\|\mu\|^2}{\sqrt{k\log(k)}} + O \left(\frac{\|\mu\| \sqrt{\log (k/\eta)}}{\sqrt{k\log(k)}}\right) }  (1/\eta)^{o(1)} \tag{by \Cref{cl:center_norm_bound}}\\
        &\le \left(\Norm{    \left(\mu^\top v+g_i\right)^2}_{L_p}+  \frac{\sqrt{k\log(k)}}{\sqrt{k\log(k)}+2} \right) e^{  -\frac{\|\mu\|^2}{\sqrt{k\log(k)}} + O \left(\frac{\|\mu\| \sqrt{\log (k/\eta)}}{\sqrt{k\log(k)}}\right) } (1/\eta)^{o(1)}  \tag{$x_i = g_i + \mu$ for $g_i\sim N(0,1)$}\\
        &\le \left(    \left(\mu^\top v\right)^2+ \Norm{ g_i^2}_{L_p}+  \frac{\sqrt{k\log(k)}}{\sqrt{k\log(k)}+2} \right) e^{  -\frac{\|\mu\|^2}{\sqrt{k\log(k)}} + O \left(\frac{\|\mu\| \sqrt{\log (k/\eta)}}{\sqrt{k\log(k)}}\right) }  (1/\eta)^{o(1)} \notag \\
        &\lesssim \left( \Norm{\mu}^2+p+  \frac{\sqrt{k\log(k)}}{\sqrt{k\log(k)}+2} \right)  e^{  -\frac{\|\mu\|^2}{\sqrt{k\log(k)}} + O \left(\frac{\|\mu\| \sqrt{\log (k/\eta)}}{\sqrt{k\log(k)}}\right) }  (1/\eta)^{o(1)}  \tag{by \Cref{fact:gaussian_moments} } \\
        &\lesssim p  \|\mu\|^2 e^{  -\frac{\|\mu\|^2}{\sqrt{k\log(k)}} + O \left(\frac{\|\mu\| \sqrt{\log (k/\eta)}}{\sqrt{k\log(k)}}\right) }  (1/\eta)^{o(1)} \notag \\
        &\lesssim p (1/\eta)^{o(1)} \;. \tag{$\norm{\mu}= O(1)$}
    \end{align}
Hence, by a cover argument identical to the one that we used when treating the first term  in \Cref{it:outliers}, we have that with 
  $n= O\left(\frac{k^{ }\log{(1/\delta)}}{\eta^{2+o(1)}} \right)$ we have that $\Norm{\widehat{A}-\E[A] }_{\op} \le \eta/2$ and hence $\Norm{\widehat{A}-A }_{\op} \le \eta$ with probability at least $1-\delta/2$ where  $A := \mu \mu ^\top e^{-\frac{\|\mu\|^2}{\sqrt{k\log(k)}+2}}$. 
  \item \paragraph{Proof of \Cref{it:trace_bound}}
Let $\widehat{A}$ the sample average matrix from \eqref{eq:matrixA}  over the whole sample set $T$ with $\beta=1$, we will show that $\trace(\widehat{A})\le 18 \sqrt{k\log(k)}$. In fact, we will show that this holds with probability 1.
  
\begin{align*}
    \trace(\widehat{A} )&=\frac{1}{n}\sum_{i \in [n]} \left(\norm{x_i}^2-\frac{\sqrt{k\log(k)}}{\sqrt{k\log(k)}+2}k \right)e^{-\frac{\norm{x_i}^2}{\sqrt{k\log(k)}}}(1+2/\sqrt{k\log(k)})^{k/2+2} \\ 
    &= \frac{1}{n}\sum_{i \in [n]} \left(\norm{x_i}^2-\frac{\sqrt{k\log(k)}}{\sqrt{k\log(k)}+2}k \right)e^{-\frac{\norm{x_i}^2-k}{\sqrt{k\log(k)}}} e^{-\sqrt{k/\log(k)}}(1+2/\sqrt{k\log(k)})^{k/2+2} \;.
\end{align*}

We will show that each term in that sum individually is at most $18 \sqrt{k\log(k)}$.
This will follow by two facts: (i) we have that $e^{-\sqrt{k/\log(k)}}(1+2/\sqrt{k\log(k)})^{k/2+2} \leq 6$ for all $k\geq 10$, and (ii) it is true that $ (\norm{x_i}^2-k\sqrt{k\log(k)}/(2+\sqrt{k\log(k)}) )e^{-\frac{\norm{x_i}^2-k}{\sqrt{k\log(k)}}} \leq  3 \sqrt{k\log(k)}$, which we will show in what follows.
Let $y:=\|x_i\|^2-\frac{\sqrt{k\log(k)}}{\sqrt{k\log(k)}+2}k$. We have that
\begin{align*}
    \left(\norm{x_i}^2-\frac{\sqrt{k\log(k)}}{\sqrt{k\log(k)}+2}k \right)e^{-\frac{\norm{x_i}^2-k}{\sqrt{k\log(k)}}}
    &= y e^{-\frac{y}{\sqrt{k\log(k)}}}e^{\frac{2\sqrt{k}}{\sqrt{k\log(k)}+2}} \\
   &\leq \frac{\sqrt{k\log(k)}}{e} e^{\frac{2\sqrt{k}}{\sqrt{k\log(k)}+2}}\\&  \leq  e \sqrt{k\log(k)}\\& \leq 3 \sqrt{k\log(k)}\;,
\end{align*}
where the first step is a re-writing, the second step uses that  $\sup_{y \in \R}y e^{-\frac{y}{\sqrt{k\log(k)}}} \leq \sqrt{k\log(k)}/e$, and the last step uses that $\frac{2\sqrt{k}}{2+\sqrt{k\log(k)}}\leq 2$ for all $k>10$.

The result follows by a union bound over over all three conditions.
\end{proof}

\SAMPLESTWO* 
\begin{proof}

    Let $S$ be the subset of the samples that correspond to inliers and $T \setminus S$ the subset corresponding to outliers. We want to establish the two conditions of \Cref{def:deterministic}. We prove each one separately:

    \item
    \paragraph{Proof of \Cref{it:outliers}} Let $\widehat{A}=\frac{1}{\alpha n}\sum_{i: x_i\in T\setminus S} \widehat{A}_i$, where $\widehat{A}_i$ as defined in \Cref{eq:matrixA} for $\beta =\eps$, also let $A=\E[\widehat{A}]$. Now define the  matrix $\Delta = \widehat{A} -  A$. We will show that for each element $\Delta_{ij}$, with high probability $1-\delta$ it holds $|\Delta_{ij}| \leq \eta/k$, which will allow us to conclude that $\|\Delta\|_\fr = \sqrt{\sum_{ij} |\Delta_{ij}|^2} \leq \sqrt{k^2 \eta^2/k^2} = \eta$, which in its turn will imply the desired $\|\Delta\|_\op \leq \|\Delta\|_\fr  \leq \eta$.

    Fix $s,t \in [k]$ and denote by $e_i$ the $i$-th element of the standard basis of $\R^k$. By \Cref{lemma:expression} we have that $\E[\Delta_{k \ell}] = 0$ and by \Cref{cl:var_bound} (applied with $\beta=\eps$), we have that 
    \begin{align*}
        \Var[\Delta_{s t}]\le\frac{1}{\paren{\alpha n}^2}\sum_{i: x_i\in T\setminus S}\Var[e_s^\top (\widehat{A}_{i} - A)e_t]
        \lesssim \frac{e^{4/\eps^2} d}{\alpha n}  \;.
    \end{align*}

    Thus, by Chebyshev's inequality, with probability at least $1-\tau$, we have that 
    \begin{align*}
        |\Delta_{s t}| \leq \sqrt{\frac{\Var[\Delta_{s t}]}{\tau}} 
        \lesssim \sqrt{\frac{e^{4/\eps^2} k}{\alpha \tau n} }
    \end{align*}
    
    The right hand side becomes less than $\eta/k$ when $n =O\left( \frac{k^3 }{\eta^2 \tau}e^{4/\eps^2}\right)$. 
    Finally, we can assume $\alpha = \Omega(1)$ without loss of generality, by simply treating some of inlier points as outliers in \Cref{def:deterministic}.
    
    Now, in order to do a union bound over all pairs $(s,t)$, for $s \in [k]$, $t \in [k]$, we will use $\tau = \frac{\delta}{k^2}$ for the failure probability.
    This brings the final sample complexity to $n = O\left(\frac{k^5}{\eta^2 \delta}e^{4/\eps^2}\right)$.

    \item
    \paragraph{Proof of \Cref{it:inliers}} The proof of this is a special case of the proof of \Cref{it:outliers} where   all the $z_i$'s are the same and equal to $\mu$.
    \item \paragraph{Proof of \Cref{it:trace_bound}}
    Let $\widehat{A}$ the sample average matrix from \eqref{eq:matrixA}  over the whole sample set $T$ with $\beta=\eps$, we will show that $\trace(\widehat{A})\le 2\eps \sqrt{k}$ with probability $1-\delta$. From \Cref{it:inliers} and \Cref{it:outliers} we have proved that $\Norm{\widehat{A}- A}_{\fr}\le \eta$ , where $A : = \frac{1}{n}\sum_{i: x_i \in T_t} z_i z_i^\top e^{- \frac{\|z_i\|^2}{\eps \sqrt{k}+2}}$. Consequently,
    \begin{align*}
    \tr(\widehat{A}) &= \tr(A) + \tr(\widehat{A}-A)
    \leq \eps\sqrt{k} + \sqrt{k}\| \widehat{A}-A \|_\fr
    \leq \eps \sqrt{k} + \sqrt{k} \eta \leq 2 \eps \sqrt{k} \;,
\end{align*}
where the last inequality uses that $\eta \leq \eps$.
\end{proof}

\subsection{Proofs of Supporting Claims}\label{sec:supporting_claims}

We restate and prove the following claims that were used in the previous section.

\begin{claim}\label{cl:var_bound} Assume $k >1/\beta^2$.
    Consider the random $k\times k$ matrix  $\widehat{A}_x := Z_{\beta,k} F_{\beta,k}(x)$, 
with 
 $x\sim \cN(\mu,I_k)$ is $k$-dimensional normal, and $Z_{\beta,k},  F_{\beta,k}$ as defined in \Cref{eq:matrixA}. For any two unit vectors $v,u$, it holds $\Var[v^\top \widehat{A}_x u] \lesssim e^{4/\beta^2} \beta^2 k$.
\end{claim}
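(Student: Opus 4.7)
\begin{proofsketch}
The plan is to upper bound the variance by the second moment $\E[(v^\top \widehat{A}_x u)^2]$ and compute this quantity via a Gaussian integration trick analogous to the one in \Cref{lemma:expression}, but with the squared weight $e^{-2\|x\|^2/(\beta\sqrt{k})}$. Without loss of generality I take $\cV=\R^k$ so that $\Pi_\cV = I$. Writing $a := \tfrac{\beta\sqrt{k}}{\beta\sqrt{k}+2}v^\top u$ and $c:=1+\tfrac{2}{\beta\sqrt{k}}$, $c':=1+\tfrac{4}{\beta\sqrt{k}}$, we have
\[
\E[(v^\top \widehat{A}_x u)^2] = Z_{\beta,k}^2\,\E_{x\sim \cN(\mu,I)}\!\left[\bigl((v^\top x)(u^\top x)-a\bigr)^2 e^{-2\|x\|^2/(\beta\sqrt{k})}\right].
\]
Completing the square in the exponent (just as in the proof of \Cref{lemma:expression}, but with $c'$ in place of $c$) converts the integral to an expectation under $y\sim \cN(\mu/c',\,I/c')$, times the factors $(1/c')^{k/2}$ and $e^{-2\|\mu\|^2/(\beta\sqrt{k}+4)}$.

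Next I would bound the scalar prefactor. Since $Z_{\beta,k}^2 = c^{k+4}$ and $c^2/c' = 1 + \tfrac{4/(\beta^2 k)}{1+4/(\beta\sqrt{k})} \le 1 + 4/(\beta^2 k)$, the assumption $k\ge 1/\beta^2$ yields
\[
Z_{\beta,k}^2 (1/c')^{k/2} = (c^2/c')^{k/2}\,c^4 \;\le\; e^{2/\beta^2}\cdot c^4 \;\lesssim\; e^{4/\beta^2}.
\]
Now for the inner expectation: writing $A=v^\top y$, $B=u^\top y$ as Gaussians with means $\mu_A=v^\top\mu/c'$, $\mu_B=u^\top\mu/c'$, variances $\le 1$, and correlation $\le 1$, a direct expansion of $(AB-a)^2$ using that $\E[g_A^2 g_B^2]\le 3$ for jointly standard normals gives
\[
\E[(AB-a)^2] \;\lesssim\; (\mu_A\mu_B-a)^2 + \mu_A^2 + \mu_B^2 + 1 \;\lesssim\; (1+\|\mu\|^2)^2.
\]

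Putting the pieces together, I get
$\E[(v^\top \widehat{A}_x u)^2] \lesssim e^{4/\beta^2}\,(1+\|\mu\|^2)^2\, e^{-2\|\mu\|^2/(\beta\sqrt{k}+4)}$. The final step is to note that the function $t\mapsto (1+t)^2 e^{-2t/M}$ is maximized (over $t\ge 0$) at a point where it is $O(M^2)$; applied with $t=\|\mu\|^2$ and $M=\beta\sqrt{k}+4$, this gives $(1+\|\mu\|^2)^2 e^{-2\|\mu\|^2/(\beta\sqrt{k}+4)} \lesssim \beta^2 k + 1 \lesssim \beta^2 k$ (using $k\ge 1/\beta^2$). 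Combining, $\Var[v^\top \widehat{A}_x u] \le \E[(v^\top \widehat{A}_x u)^2] \lesssim e^{4/\beta^2}\beta^2 k$, as desired. The main technical obstacle is just carefully tracking constants through the Gaussian completion-of-the-square so that the $Z_{\beta,k}^2$ factor (which is exponentially large in $\sqrt{k/\log k}$-type quantities) cancels precisely against the change-of-measure Jacobian $(1/c')^{k/2}$, leaving only the manageable $e^{O(1/\beta^2)}$ blow-up.
\end{proofsketch}
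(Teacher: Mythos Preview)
Your proof is correct and follows essentially the same approach as the paper's: bound the variance by the second moment, complete the square in the Gaussian integral to rewrite it as an expectation under $\cN(\mu/c',I/c')$, control the ratio $Z_{\beta,k}^2/(c')^{k/2}\lesssim e^{4/\beta^2}$, bound the remaining polynomial moment by $(1+\|\mu\|^2)^2$, and finish with the elementary optimization $(1+t)^2 e^{-2t/M}\lesssim M^2$. If anything, your bound on the normalization ratio via $c^2/c' \le 1+4/(\beta^2 k)$ is slightly slicker than the paper's, which takes a more circuitous route through the inequality $e^x\le(1+x/n)^{n+x/2}$.
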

\begin{proof}
    For the random variable $Y= v^\top  \widehat{A}_x u$, we have that 
\begin{align*}
    \E[\abs{Y}^p]&=\frac{Z^p}{Z'}\int_{\mathbb{R}^k} \left| (v^\top x) (u^\top x)- \frac{\beta\sqrt{k}}{2+\beta\sqrt{k}}v^\top u \right|^p e^{-p\frac{\norm{x}^2 }{\beta\sqrt{k}}}e^{-\frac{\norm{x-\mu}^2}{2}}dx \tag{for $Z'=(2\pi)^{k/2}$}\\ 
    &=\frac{ Z^p }{Z'}\int_{\mathbb{R}^k} \left|(v^\top x) (u^\top x)- \frac{\beta\sqrt{k}}{2+\beta\sqrt{k}}v^\top u \right|^p e^{-\frac{1}{2}\paren{\norm{x}^2(\frac{2p}{\beta\sqrt{k}} +1) -2x^\top \mu+\norm{\mu}^2 }}dx\\
    &=\frac{Z^p}{Z'}\int_{\mathbb{R}^k} \left|(v^\top x) (u^\top x)- \frac{\beta\sqrt{k}}{2+\beta\sqrt{k}}v^\top u \right|^p e^{-\frac{c}{2}\paren{\norm{x}^2 -\frac{2}{c}x^\top \mu+\frac{1}{c}\norm{\mu}^2 }}dx  \tag{for $c=\frac{2p}{\beta\sqrt{k}} +1$}\\
    &=\frac{Z^p}{Z'}\int_{\mathbb{R}^k} \left|(v^\top x) (u^\top x)- \frac{\beta\sqrt{k}}{2+\beta\sqrt{k}}v^\top u \right|^p e^{-\frac{c}{2}\paren{\norm{x-\frac{1}{c}\mu}^2+\paren{\frac{1}{c}-\frac{1}{c^2}}\norm{\mu}^2  }}dx\\
    &=\frac{Z^p e^{-\frac{1}{2}(1-\frac{1}{c})\norm{\mu}^2}}{Z'}\int_{\mathbb{R}^k} \left|(v^\top x) (u^\top x)- \frac{\beta\sqrt{k}}{2+\beta\sqrt{k}}v^\top u  \right|^p e^{-\frac{c}{2}\norm{x-\frac{1}{c}\mu}^2  }dx\\
    &=\frac{Z^p e^{-\frac{p}{2p+\beta\sqrt{k}}\norm{\mu}^2}Z''}{Z'}\E_{x\sim\cN(\frac{1}{c}\mu,\frac{1}{c}I)}\left[\left|(v^\top x) (u^\top x)- \frac{\beta\sqrt{k}}{2+\beta\sqrt{k}}v^\top u \right|^p \right]  \tag{for $Z''=(2\pi)^{k/2}\sqrt{\frac{1}{c^k}}$}  \\
    &=\frac{ Z^p e^{-\frac{p}{2p+\beta\sqrt{k}}\norm{\mu}^2}}{(1+\frac{2p}{\beta\sqrt{k}})^{k/2}}\E_{x\sim\cN(\frac{1}{c}\mu,\frac{1}{c}I)}\left[\left| (v^\top x) (u^\top x)- \frac{\beta\sqrt{k}}{2+\beta\sqrt{k}}v^\top u \right|^p \right] \;.
\end{align*}
For $p=2$, this becomes
\begin{align*}
    \E[\abs{Y}^2] &\lesssim \frac{(1+\frac{2}{\beta \sqrt{k}})^{k+4} e^{-\frac{2\|\mu\|^2}{4+\beta\sqrt{k}}}}{(1+4/(\beta\sqrt{k}))^{k/2}}   \E_{x\sim\cN(\frac{1}{c}\mu,\frac{1}{c}I)}\left[\left| (v^\top x) (u^\top x)- \frac{\beta\sqrt{k}}{2+\beta\sqrt{k}}v^\top u\right|^2  \right]  \\
    &\lesssim  e^{4/\beta^2} e^{-\frac{2\|\mu\|^2}{4+\beta\sqrt{k}}} \E_{x\sim\cN(\frac{1}{c}\mu,\frac{1}{c}I)}\left[\left| (v^\top x) (u^\top x)- \frac{\beta\sqrt{k}}{2+\beta\sqrt{k}}v^\top u\right|^2\right]  \tag{explained below}\\
    &\lesssim  e^{4/\beta^2} e^{-\frac{2\|\mu\|^2}{4+\beta\sqrt{k}}} \E_{x\sim\cN(\frac{1}{c}\mu,\frac{1}{c}I)}\left[\left| (v^\top x) (u^\top x)\right|^2+ \frac{\beta\sqrt{k}}{2+\beta\sqrt{k}}\abs{v^\top u}^2\right]    \tag{using $(a+b)^2 \leq 2a^2 + 2 b^2$}\\
    &\lesssim  e^{4/\beta^2} e^{-\frac{2\|\mu\|^2}{4+\beta\sqrt{k}}} \left( \E_{x\sim\cN(\frac{1}{c}\mu,\frac{1}{c}I)}\left[\left| (v^\top x) (u^\top x)\right|^2\right]+ \frac{\beta\sqrt{k}}{2+\beta\sqrt{k}} \right) \\
    &=  e^{4/\beta^2}e^{-\frac{2\|\mu\|^2}{4+\beta\sqrt{k}}} \left( \E_{x\sim\cN(\frac{1}{c}\mu,\frac{1}{c}I)}\left[\left| (v^\top x) (av^\top x+ b z^\top x)\right|^2\right]+ 1 \right) \tag{$u{=}av{+}bz$ for $z{\perp} v, a{:=}u^\top v,b{:=}\sqrt{1{-}a^2}$}  \\
    &=  e^{4/\beta^2} e^{-\frac{2\|\mu\|^2}{4+\beta\sqrt{k}}}\left( \E_{x\sim\cN(\frac{1}{c}\mu,\frac{1}{c}I)}[ a^2(v^\top x)^4+ 2ab(v^\top x)^3(z^\top x)+b^2(v^\top x)^2(z^\top x)^2 ]+1  \right) \\
    &=  e^{4/\beta^2} e^{-\frac{2\|\mu\|^2}{4+\beta\sqrt{k}}} \left( a^2 \E[ x^4]+ 2ab  \E[x^3] \E[y]+b^2\E[x^2]\E[y^2 ]+1 \right) \tag{${x\sim\cN(\frac{1}{c}v^\top \mu,\frac{1}{c})}$ and ${y\sim\cN(\frac{1}{c}z^\top \mu,\frac{1}{c})}$}\\
    &\lesssim e^{4/\beta^2} e^{-\frac{2\|\mu\|^2}{4+\beta\sqrt{k}}}\left(   \frac{\| \mu\|^4}{c^4}+  \frac{\| \mu\|^3}{c^3} + \frac{\| \mu\|^2}{c^2}+ \frac{1}{c^4}+ 1 \right) \tag{using $|a|\leq 1, |b|\leq 1$ and $\E_{y \sim \cN(z,\sigma^2)}[x^4] = z^4+6z^2 \sigma^2 + 3 \sigma^4$}\\
    &\lesssim  e^{4/\beta^2} e^{-\frac{2\|\mu\|^2}{4+\beta\sqrt{k}}} \left( \| \mu\|^4+1 \right)  \tag{using $c=\frac{2p}{\beta\sqrt{k}} +1 \geq 1$} \\
    &\lesssim e^{4/\beta^2} \beta^2 k\;. \tag{using the fact $\sup_{x \in \R} x^4 e^{-x^2/\gamma} \leq \gamma^2$}
\end{align*}

We now explain the second step above, which claims that $(1+\frac{2}{\beta \sqrt{k}})^{k+4}/(1+\frac{4}{\beta\sqrt{k}})^{k/2} \lesssim e^{4/\beta^2}$: 
First note that since $k \geq 1/\beta^2$ we have that $(1+\frac{2}{\beta \sqrt{k}})^{4} \leq 3^4 = O(1)$, thus it suffices to prove that $(1+\frac{2}{\beta \sqrt{k}})^{k}/(1+\frac{4}{\beta\sqrt{k}})^{k/2} \lesssim e^{4/\beta^2}$.
Towards this end, we will use the fact that $e^x \leq (1+x/n)^{n+x/2}$ for all $x,n >0$.
Applying this with $n=k/2$ and $x=2\sqrt{k}/\beta$, we have that
\begin{align*}
    e^{\frac{2\sqrt{k}}{\beta}} \leq \left( 1+ \frac{4}{\beta \sqrt{k}} \right)^{\frac{k}{2} + \frac{\sqrt{k}}{\beta}}\;.
\end{align*}
Rearranging, this gives that
\begin{align}
    \left( 1+ \frac{4}{\beta \sqrt{k}}  \right)^{k/2} \geq e^{\frac{2\sqrt{k}}{\beta}}/\left( 1 + \frac{4}{\beta \sqrt{k}} \right)^{\sqrt{k}/\beta} \;. \label{eq:tempbound}
\end{align}
Finally, using that, we obtain:
\begin{align*}
      \frac{\left( 1+ \frac{2}{\beta \sqrt{k}}  \right)^k}{  \left( 1+\frac{4}{\beta \sqrt{k}} \right)^{k/2}} 
    \leq \frac{\left( 1 + \frac{4}{\beta \sqrt{k}} \right)^{\sqrt{k}/\beta}\left( 1+ \frac{2}{\beta \sqrt{k}} \right)^{k}}{e^{\frac{2\sqrt{k}}{\beta}}}
    \leq \frac{e^{4/\beta^2} \left( 1+ \frac{2}{\beta \sqrt{k}} \right)^{k}}{e^{\frac{2\sqrt{k}}{\beta}}}
    \leq e^{\frac{4}{\beta^2} + \frac{2\sqrt{k}}{\beta} - \frac{2\sqrt{k}}{\beta}} \leq e^{4/\beta^2 }\;.
\end{align*}

\end{proof}

\REWEIGHTING*

\begin{proof}
    Let $\tau := 1/n^4$, then, by the definition of the event $\cE_i$ in \eqref{eq:goodevent}:
        \begin{align*}
     &\exp\left( -\frac{\|x_i\|^2-k}{\sqrt{k\log(k)}} \right) \1( \cE_i) 
     \leq \exp\left( \frac{-\|z_i\|^2  + O\left(\log \frac{1}{\tau}+ \left(\sqrt{k}+\Norm{z_i}\right) \sqrt{\log (1/\tau)}\right) }{\sqrt{k\log(k)}}\right) \\
     &\leq \exp\left[  -\frac{\|z_i\|^2}{\sqrt{k\log(k)}} +  O\left( \frac{\log(1/\tau)}{\sqrt{k\log(k)}} + \frac{\sqrt{\log(1/\tau)}}{\sqrt{\log(k)}} +  \frac{\|z_i\|\sqrt{\log(1/\tau)}}{\sqrt{k\log(k)}} \right) \right] \\
     &\leq \exp\left[  -\frac{\|z_i\|^2}{\sqrt{k\log(k)}} +  O\left( \frac{\log(k/\eta)}{\sqrt{k\log(k)}} + \frac{\sqrt{\log(k/\eta)}}{\sqrt{\log(k)}} +  \frac{\|z_i\|\sqrt{\log(k/\eta)}}{\sqrt{k\log(k)}} \right) \right]\\
     &\leq \exp\left[  -\frac{\|z_i\|^2}{\sqrt{k\log(k)}} +  O\left( \frac{\log(k) + \log(1/\eta)}{\sqrt{k\log(k)}} + \frac{\sqrt{\log(k/\eta)}}{\sqrt{\log(k)}} +  \frac{\|z_i\|\sqrt{\log(k/\eta)}}{\sqrt{k\log(k)}} \right) \right]\\
      &\leq \exp\left[  -\frac{\|z_i\|^2}{\sqrt{k\log(k)}} +  O\left( \frac{\log(1/\eta)}{\sqrt{k\log(k)}} + \frac{\sqrt{\log(k/\eta)}}{\sqrt{\log(k)}} +  \frac{\|z_i\|\sqrt{\log(k/\eta)}}{\sqrt{k\log(k)}} \right) \right] \\
      &\leq \exp\left[  -\frac{\|z_i\|^2}{\sqrt{k\log(k)}} +  O\left( \frac{\log(1/\eta)}{\sqrt{k\log(k)}} + \sqrt{\log(1/\eta)}  + \frac{\|z_i\|\sqrt{\log(k/\eta)}}{\sqrt{k\log(k)}} \right) \right] \\
      &\leq \exp\left[  -\frac{\|z_i\|^2}{\sqrt{k\log(k)}} +  O\left(   \frac{\|z_i\|\sqrt{\log(k/\eta)}}{\sqrt{k\log(k)}} \right) \right]  (1/\eta)^{o(1)}\;.
    \end{align*}
    In the above, we first used that $\tau= (1/n)^{O(1)}$ and $n = (k/\eta)^{O(1)}$, then we used that $\log(k)/\sqrt{k} = o(1)$. We have also used that $e^{\sqrt{\log (1/\eta)}} = (1/\eta)^{o(1)}$ and $e^{\frac{\log(1/\eta)}{\sqrt{k \log k}}} = (1/\eta)^{o(1)}$.
    
    \end{proof}

    \HIGHSCHOOL*
    \begin{proof} We prove each item in turn. 
    \item \paragraph{Proof of i }
    Let $f(x)=-\frac{x^2}{\sqrt{k\log(k)}} + C\frac{x \sqrt{\log \left(\frac{k}{\eta}\right)}}{\sqrt{k\log(k)}} $ as $\lim_{x\to +\infty} f(x)=-\infty$ and $f(0)=0$ and $f$ continuous on $(0,+\infty)$ we have that the maximum must occur on a point in $(0,+\infty)$ where the derivative is $0$ either it is upper bounded than $0$. Also 
    \begin{align*}
        &f'(x)=0\\
    \Rightarrow& x= \frac{C}{2}\sqrt{\log \left(\frac{k}{\eta}\right)}.
    \end{align*}
    Thus, as $f'(x)\le \frac{C^2}{2}\frac{\log \left(\frac{k}{\eta}\right)}{\sqrt{k\log(k)}}$, we get the result.
    \item \paragraph{Proof of ii }
    Let $f(x)=x^2 e^{-\frac{x^2}{\sqrt{k\log(k)}} + C\frac{x \sqrt{\log \left(\frac{k}{\eta}\right)}}{\sqrt{k\log(k)}}}$. As $\lim_{x\to +\infty} f(x)=0$ and $f(0)=0$ and $f$ continuous on $(0,+\infty)$, we have that the maximum must occur on a point in $(0,+\infty)$ where the derivative is $0$. We have $$f'(x)=2x e^{-\frac{x^2}{\sqrt{k\log(k)}} + C\frac{x \sqrt{\log \left(\frac{k}{\eta}\right)}}{\sqrt{k\log(k)}}}+\left( C\frac{x^2 \sqrt{\log \left(\frac{k}{\eta}\right)}}{\sqrt{k\log(k)}} -\frac{2x^3}{\sqrt{k\log(k)}} \right) e^{-\frac{x^2}{\sqrt{k\log(k)}} + C\frac{x \sqrt{\log \left(\frac{k}{\eta}\right)}}{\sqrt{k\log(k)}}} \;,$$ so solving for $x>0$ we get $x= \frac{1}{4} \left( C \sqrt{\log{\left(\frac{k}{\eta}\right)}} + \sqrt{16 \sqrt{k\log(k)} + C^2 \log{\left(\frac{k}{\eta}\right)}} \right)$. Hence, it suffices to compute 
   \begin{align*}
        &f\left( \frac{C\sqrt{\log \left(\frac{k}{\eta}\right)}+ \sqrt{C^2\log \left(\frac{k}{\eta}\right)+16\sqrt{k\log(k)}} }{4} \right)\\&\le \frac{1}{16} e^{\frac{C \sqrt{\log{\left(\frac{k}{\eta}\right)}} \left(C \sqrt{\log{\left(\frac{k}{\eta}\right)}} + \sqrt{16 \sqrt{k\log(k)} + C^2 \log{\left(\frac{k}{\eta}\right)}}\right)}{4 \sqrt{k\log(k)}}} \left(C \sqrt{\log{\left(\frac{k}{\eta}\right)}} + \sqrt{16 \sqrt{k\log(k)} + C^2 \log{\left(\frac{k}{\eta}\right)}}\right)^2
        \\
        &\lesssim C^2\sqrt{k\log(k)}\log{\left(\frac{k}{\eta}\right)} e^{C^2\frac{\log{\left(\frac{k}{\eta}\right)}}{\sqrt{k\log(k)}} }.
    \end{align*}
    Hence, $x^2 e^{-\frac{x^2}{\sqrt{k\log(k)}} + C\frac{x \sqrt{\log \left(k/\eta \right)}}{\sqrt{k\log(k)}}} \lesssim C^2\sqrt{k\log(k)}\log{\left(\frac{k}{\eta}\right)} e^{3C^2\frac{\log{\left(\frac{k}{\eta}\right)}}{\sqrt{k\log(k)}} }$.
    \item \paragraph{Proof of iii } Similarly to  the previous inequality, let 
    \begin{align*}
        f(x)=x^4 e^{-\frac{x^2}{\sqrt{k\log(k)}} + C\frac{x \sqrt{\log\left(\frac{k}{\eta}\right)}}{\sqrt{k\log(k)}}}.    \end{align*}
 As $\lim_{x\to +\infty} f(x)=-\infty$ and $f(0)=0$ and $f$ continuous on $(0,+\infty)$, we have that the maximum must occur on a point in $(0,+\infty)$ where the derivative is $0$ either it is upper bounded than $0$. Also, we have that
\begin{align*}
&\quad f'(x)= 0\\
\Rightarrow &\left(4x^3-\frac{2x^5}{\sqrt{k\log(k)}} +C\frac{x^4 \sqrt{\log\left(\frac{k}{\eta}\right)}}{\sqrt{k\log(k)}} \right) e^{-\frac{x^2}{\sqrt{k\log(k)}} + C\frac{x \sqrt{\log\left(\frac{k}{\eta}\right)}}{\sqrt{k\log(k)}}}=0\\
\Rightarrow & x=\frac{1}{8} \left( C \sqrt{\log{\left(\frac{k}{\eta}\right)}} + \sqrt{64 \sqrt{k\log(k)} + C^2 \log{\left(\frac{k}{\eta}\right)}} \right). \tag{for $x>0$}
\end{align*}
Moreover,  
\begin{align*}
&f\left(\frac{1}{8} \left( C \sqrt{\log{\left(\frac{k}{\eta}\right)}} + \sqrt{64 \sqrt{k\log(k)} + C^2 \log{\left(\frac{k}{\eta}\right)}} \right)\right)\\ 
&\lesssim C^4k\log(k)\log^2{\left(\frac{k}{\eta}\right)} e^{C^2\frac{\log{\left(\frac{k}{\eta}\right)}}{\sqrt{k\log(k)}} } \;.
\end{align*}
\end{proof}

    \section{Full Proof of \Cref{thm:main}}
    \label{appendix:fullproof}
We now combine everything to prove the main theorem, which we restate below:

\MAINTHEOREM*
\begin{proof}

The main while loop of the algorithm maintains a subspace $\cV_t$ whose dimension, denoted by $k$ in the pseudocode, starts from $d$ and can only decrease from a round to the next one. We will examine two distinct ``phases'' (or parts) of this while loop: Phase 1 will refer to all the iterations during which $C\log^4(d)/\eps^5 \leq k \leq d$, and phase 2 will refer to all the iterations with $1/\eps^2 \leq k \leq C\log^4(d)/\eps^5$. The algorithm and the analysis will be slightly different for each phase. 
Regarding notation, let us denote by $T_1$ the number of rounds of phase 1 and by $T_2$ the number of rounds of phase 2 (i.e., phase 1 consists of all the iterations for $t = 1,\ldots, T_1$ and phase 2 consists of the iterations for $t=T_1+1, \ldots, T_1 + T_2$).
The proof of correctness relies on the following claims regarding each part of the algorithm. We first state the claims, we then show how we can prove the theorem using the claims, and we finally prove each claim individually.

\vspace{10pt}

\begin{enumerate}[leftmargin=*]
    \item \textbf{Warm start}: If $\widehat{\mu}_0$ denotes the estimator from line \ref{line:warm_start} of the algorithm, it holds $\| \widehat{\mu}_0 - \mu\| = O(1)$ with probability $0.999$.\label{it:warmstart}

    \item \textbf{Phase 1} ($ C\log^4(d)/\eps^5 \leq k \leq d$):
    \begin{enumerate}
        
        \item Let $\cE_{t}$ for $t \in 1,2,\ldots$ be the event that the set $T_t$ from line \ref{line:sample_set}, after the transformation of line \ref{line:transform_set} is ($\eta_t,\sqrt{\log(k)}$)-good (cf. \Cref{def:deterministic}) with respect to  
        $\tilde \mu_t,\tilde z_1^{(t)},\ldots,\tilde z_{\alpha}^{(t)}$, where  $\eta_t$ is the parameter set in line \ref{line:eta_t_1}, $\tilde \mu_t = \Proj_{\cV_t}( \mu - \widehat{\mu}_0)$ and $\tilde z_1^{(t)},\ldots,\tilde z_{\alpha}^{(t)}$ are some vectors in $\cV_t$. Then, with probability at least $0.999$, the events $\cE_t$ for $t=0,1,\ldots,  \log d$ are all true.  \label{it:goodness}

        \item Assuming $\cE_t$ hold for $t=1,\dots,T_1$, Phase 1 terminates after at most $T_1 \leq   \log d$ iterations. 
        \label{it:iterations_num}

        \item Let $C$ be a sufficiently large absolute constant. Denote by $\cV_{t+1}$ the same subspace as in line \ref{line:subspace} of the algorithm and by $\cV_{t+1}^\perp$ its orthogonal complement. 
        During the $t$-th iteration of the while loop, when the execution reaches line \ref{line:subspace}, it holds $\| \Proj_{\cV_{t+1}^\perp}(\widehat{\mu}_0 - \mu)\| \leq C \sum_{t' = 1}^t \sqrt{\eta_{t'}}$, where $\eta_{t'}$ is the value set in line \ref{line:eta_t_1}.\label{it:subspace_error}
    \end{enumerate}

    \item \textbf{Phase 2} ($1/\eps^2 \leq k \leq C\log^4(d)/\eps^5$):
    \begin{enumerate}
        \item Let $\cE_{t}'$ be the event that the set $T_t$ from line \ref{line:sample_set}, after the transformation of line \ref{line:transform_set} is ($\eta_t,\eps$)-good (cf. \Cref{def:deterministic}) with respect to  $\tilde \mu_t,\tilde z_1^{(t)},\ldots,\tilde z_{\alpha}^{(t)}$, where $\eta_t$ is the parameter set in \ref{line:eta_t_2},  $\tilde \mu = \Proj_{\cV_t}( \mu - \widehat{\mu}_0)$ and $\tilde z_1^{(t)},\ldots,\tilde z_{\alpha}^{(t)}$ are some vectors in $\cV_t$. Then, with probability at least $0.999$, the events $\cE_t'$ for $t=T_1+1,\ldots,T_1+100\log (\log (d)/\eps)$ are all true.  \label{it:goodness2}

        \item Assuming the events $\cE_{t}'$ from above hold, in every iteration of Phase 2 the dimension $k$ gets halved. As a corollary, the number of iterations of Phase 2 (i.e., number of iterations for which $1/\eps^2 \leq k \leq C\log^4(d)/\eps^5$) is   $T_2 \leq 100 \log (\log(d)/\eps)$. \label{it:iterations_num2}

        \item Let $C$ be a sufficiently large absolute constant. Denote by $\cV_{t+1}$ the same subspace as in line \ref{line:subspace} of the algorithm and by $\cV_{t+1}^\perp$ its orthogonal complement. Assume that the events $\cE_t$ for $t  \in [t_{\max}]$ are all true. Then, the following holds for any  $t  \in [t_{\max}]$: During the $t$-th iteration of the while loop, when the execution reaches line \ref{line:subspace}, it holds $\| \Proj_{\cV_{t+1}^\perp}(\widehat{\mu}_0 - \mu)\| \leq C \sum_{t' = 1}^t \sqrt{\eta_{t'}} $, where $\eta_{t'}$ are the values set in lines \ref{line:eta_t_1} and \ref{line:eta_t_2}.\label{it:subspace_error2}
        
    \end{enumerate}

    \item \textbf{Estimator for remaining subspace}: With probability at least $0.999$, the lines \ref{line:sample_set_end}-\ref{line:brute_force} of the algorithm find a vector $\widehat{\mu}_1 \in \cV_{t}$ such that  $\|\widehat{\mu}_1 - \Proj_{\cV_t}(\mu)\| \leq \eps$.\label{it:brute_force} 
    
\end{enumerate}

\vspace{10pt}
We now show how given the claims above  it follows that, with probability at least $0.99$, the output of the algorithm $\widehat{\mu} := \widehat{\mu}_0 + \widehat{\mu}_1$ satisfies $\|\widehat{\mu} - \mu\| \leq \eps$. Without loss of generality, it suffices to show $\|\widehat{\mu} - \mu\| = O(\eps)$ as, if this is true, then one can also obtain error exactly $\eps$ by running  the algorithm with $c \eps$ in place of $\eps$, for $c$ being a sufficiently small constant.  Let $t=T_1 + T_2$, so that $\cV_t$ denotes the subspace after exiting the while loop.   By decomposing the true mean into the projections onto the two orthogonal subspaces we have that $\widehat \mu = \Proj_{\cV_t^\perp}(\mu) + \Proj_{\cV_t}(\mu)$: By the Pythagorean theorem:
\begin{align*}
    \left\| \widehat{\mu} - \mu \right\|^2
    &= \left\| \Proj_{\cV_{t}^\perp}(\widehat{\mu}_0 - \mu) + \Proj_{\cV_{t}}(\widehat{\mu}_1 - \mu)\right\|^2
    = \left\| \Proj_{\cV_{t}^\perp}(\widehat{\mu}_0 - \mu)\right\|^2 +  \left\| \Proj_{\cV_{t}}(\widehat{\mu}_1 - \mu)\right\|^2 \;. \\
\end{align*}
The last term is $\left\| \Proj_{\cV_{t}}(\widehat{\mu}_1 - \mu)\right\| \leq \eps$ by \Cref{it:brute_force}. It suffices to bound the first term. Towards this end, denote $D:= C\log^4(d)/\eps^5$, which is the dimension during the first iteration of phase 2. \Cref{it:subspace_error} for $t=T_1+T_2$ (which according to our notation denotes the last iteration of Phase 2) yields the following (we explain the derivations below):
\begin{align}
    \left\| \Proj_{\cV_{t}^\perp}(\widehat{\mu}_0 - \mu)\right\|
    &\lesssim\sum_{t' = 1}^{T_1 + T_2} \sqrt{\eta_{t'}} \\
    &\leq \sum_{t' = 1}^{T_1} \sqrt{\eta_{t'}}  + \sum_{t' = T_1 + 1}^{T_1+T_2} \sqrt{\eta_{t'}} \\
    &\leq T_1 \frac{\eps}{  \log d}  + \sum_{t' = T_1 + 1}^{T_1+T_2} \sqrt{\eta_{t'}} \\
    &\leq \eps  + \sum_{t' = T_1 + 1}^{T_1+T_2} \sqrt{\eta_{t'}} \tag{$T_1\leq   \log  d$ by \Cref{it:iterations_num}}\\
    &\leq  \eps + \left( \sqrt{\frac{36\eps}{\sqrt{D}}} +  \sqrt{\frac{36\eps}{\sqrt{D/2}}} + \cdots + \sqrt{\frac{36\eps}{\sqrt{1/\eps^2}}} \right) \label{eq:series}\\
    &\leq \eps  +  \frac{6\sqrt{\eps}}{D^{1/4}} \sum_{i=0}^{\lg(D \eps^2)} 2^{i/4}  \label{eq:series2} \\
    &\le  \eps +  \frac{6\sqrt{\eps}}{(2^{1/4}-1)D^{1/4}} 2^{\frac{\lg(D \eps^2)}{4}}  \label{eq:series3}\\
    &=  \eps +  \frac{6\sqrt{\eps}}{(2^{1/4}-1)D^{1/4}} (D \eps^2)^{1/4}\lesssim \eps \;. \label{eq:series4}
\end{align}

We proceed to explain the steps above:
\eqref{eq:series} uses the definition of $\eta_t:= 36  \eps/\sqrt{k}$. The dimension $k$ for the first round of Phase 2 is $D$ and in every subsequent round it gets divided by 2 (by the claim of \Cref{it:iterations_num2}). Phase 2 ends when the dimension becomes $1/\eps^2$, which corresponds to the last term in the series.
\eqref{eq:series2} is a rewriting. The summation goes from $i=0$ to $\lg (D \eps^2)$ because $\lg (D \eps^2)$ is the solution to the equation $D/2^x = 1/\eps^2$ which seeks to determine after how many rounds of halving the dimension becomes $1/\eps^2$.
The next line, \eqref{eq:series3} uses the closed form formula for that series.

\noindent
We now prove all the individual claims.

\item \paragraph{Proof of \Cref{it:warmstart}}
This holds with probability at least $0.999$ by an application of Corollary 2.12 in \cite{diakonikolas2023algorithmic}. Without loss of generality, we apply this corollary with the fraction of outliers being $\alpha = \Omega(1)$ since we can always treat some of the inliers as outliers in the model of \Cref{def:cont}. That corollary yields that  $n_0=O(d)$ samples suffice.

\item \paragraph{Proof of \Cref{it:iterations_num}}

We claim that each iteration of Phase 1 of the while loop decreases the dimension from  $k$ to $k' := 18\sqrt{k\log(k)}/\eta_t$. 
This is because $\cE_i$ hold for $i=1,\dots, t$, and \Cref{it:trace_bound} of \Cref{def:cont} states  that $\trace(\widehat{A})\le 18\sqrt{k\log(k)}$. This means  that in line \ref{line:new_eigenvectors} of the algorithm, the number of eigenvectors with eigenvalue greater than $\eta_t$ can be at most $\trace(\widehat{A}_t)/\eta_t \leq 18\sqrt{k\log(k)}/\eta_t$.

The dimension thus gets divided by $2$ whenever $18\sqrt{k\log(k)}/\eta_t < k/2$. By plugging in the value $\eta_t:= (\eps/\log d)^2$ we obtain that the dimension gets halved whenever $k >36^2 \cdot \log^5 (d) /\eps^4$. This indeed holds because of line \ref{line:phase1}. Hence the number of iterations is at most $\log(d)$.

\item \paragraph{Proof of \Cref{it:goodness}}
During the $t$-th round of our algorithm let $z_i^{(t)}$ denote the outlier centers that the adversary chooses for the samples, in the model of \Cref{def:cont}.
Note that performing the mean shift $\tilde x_i = x_i- \widehat{\mu}_0$, means that the shifted points $\tilde x_i$ effectively come from the model described in \Cref{def:cont} with shifted mean $\tilde \mu  = \mu-\widehat{\mu}_0$ and shifted outlier centers $\tilde z_i^{(t)}   =  z_i^{(t)} - \widehat{\mu}_0$. Similarly, the projection operation of line \ref{line:transform_set} makes the points essentially come from a model that uses mean $\tilde \mu_t = \Proj_{\cV_t}(\mu-\widehat{\mu}_0)$.

Now by a union bound over all of the $\log d$ iterations of Phase 1 and  \Cref{lem:sample_complexity} with $\eta=\eta_t= (\eps / \log d)^2$ and $\delta = 10^{-3}/\log d$, we have that all of the sets $T_t$ of points drawn in line \ref{line:transform_set} will be ($\eta_t,\sqrt{\log(k)}$)-good with probability at least $0.999$.
Note that the condition  $\|\tilde \mu_t\| \leq  \|\mu-\widehat{\mu}_0 \| = O(1)$ that this lemma requires has already been established in \Cref{it:warmstart}, thus the lemma is indeed applicable. The sample complexity that this lemma yields is $k\log^3 k(1/\eta)^{2+o(1)}/\delta$. By using $\eta=\eta_t= (\eps /\log d)^2$ and $\delta = 10^{-3}/\log d$ this becomes $ k(\log(d)/\eps)^{2+o(1)}$ which is smaller than the number of samples $n_1$ that we use in our algorithm (cf. line \ref{line:samples}).

\item \paragraph{Proof of \Cref{it:subspace_error}}

Proof by induction. Denote by $C$ the constant in the statement of \Cref{lemma:dimerror}. The base case $t=1$ follows immediately by \Cref{lemma:dimerror} applied with $\cU= \R^d$ and $\cV = \cV_2^\perp$.  For the induction step, assume that the claim holds for $t$ and we will show it for $t+1$. Let $\cV_{t+1}$ denote the subspace maintained by the algorithm at the end of (line \ref{line:subspace}) the $t$-th iteration. 
By inductive hypothesis, we know that $\| \Proj_{ \cV_{t+1}^{\perp}}( \widehat{\mu}_0 - \mu) \| \leq  C\sum_{\tau=1}^t \sqrt{\eta_\tau}$. The goal is to show the bound for the end of the next iteration, i.e., show that $\| \Proj_{ \cV_{t+2}^{\perp}}( \widehat{\mu}_0 - \mu) \| \leq C \sum_{\tau=1}^{t+1} \sqrt{\eta_\tau}$.
We split $\cV_{t+2}^\perp$ into two parts: $\cV_{t+2}^\perp \cap \cV_{t+1}$ and  $\cV_{t+2}^\perp \cap \cV_{t+1}^\perp$. We now apply \Cref{lemma:dimerror} with  $\cU = \cV_{t+1}$, $\beta=\sqrt{\log (d)}$ and $\cV = \cV_{t+2}^\perp \cap \cV_{t+1}$.  
The lemma is indeed applicable because the set $T_t$ of points are ($\eta_t,\sqrt{\log (k)}$)-good.
The application of the lemma yields that $\| \Proj_{\cV_{t+2}^\perp \cap \cV_{t+1}}( \widehat{\mu}_0 - \mu) \| \leq C \sqrt{\eta_{t+1}}$. Also, the inductive hypothesis that we mentioned earlier implies that $\| \Proj_{\cV_{t+2}^\perp \cap \cV_{t+1}^{\perp}}( \widehat{\mu}_0 - \mu) \| \leq C \sum_{\tau=1}^{t} \sqrt{\eta_{\tau}}$. Combining the previous two  we have that  $\| \Proj_{\cV_{t+2}^\perp}( \widehat{\mu}_0 - \mu) \| \leq C\sum_{\tau=1}^{t+1} \sqrt{\eta_\tau}$.

\item \paragraph{Proof of \Cref{it:goodness2}}
This follows by an application of \Cref{lem:sample_complexity2} with probability of failure $\delta = 10^{-5}/\log(\log(d)/\eps)$. 
Note that the requirement of that lemma that $k>1/\eps^2$ holds since the algorithm would exit entirely the while loop of line \ref{line:while} otherwise.  
Since the dimension is $k \leq C\log^4(d)/\eps^5$ during this phase of the algorithm, and $\eta_t = O(\eps/\sqrt{k})$, the sample complexity $n = \frac{k^5}{\eta_t^2 \delta} 2^{O(1/\eps^2)}$ mentioned in that lemma is overall $2^{O(1/\eps^2)} \polylog(d)$ . By a union bound over the iterations of the algorithm, we have that all the sets $T_t$ during the first $100\log(\log(d)/\eps)$ rounds of Phase 2 will be $(\eta_t,\eps)$-good (and as we will show below, Phase 2 will not have more than $100\log(\log(d)/\eps)$-many rounds).

\item \paragraph{Proof of \Cref{it:iterations_num2}}

Consider a single iteration of Phase 2 of the algorithm, and let $\widehat{A}_t$ be the matrix from line \ref{line:Ahat}. Since the set of points on which $\widehat{A}_t$ is computed is $(\eta_t,\eps)$-good, we have that $\tr(\widehat{A})$ is at most $18 \eps \sqrt{k}$ (cf. \Cref{it:trace_bound} of \Cref{def:deterministic}).
 If $k'$ denotes the number of eigenvalues larger than $\eta_t$ then we have  $k' \leq 18   \eps \sqrt{k}/\eta_t$. Note that, by definition of $\eta_t:=36  \eps/\sqrt{k}$, 
\begin{align*}
    \frac{18 \eps \sqrt{k}}{\eta_t}  =\frac{18 \eps \sqrt{k}\sqrt{k}}{36 \eps} \leq k/2.
\end{align*}
The dimension in each round is therefore being halved.

\item \paragraph{Proof of \Cref{it:subspace_error2}}
This follows by the same argument we used for Phase 1, but applying \Cref{lemma:dimerror} with $\beta=\eps$ instead of $\beta=\sqrt{\log k}$.

\item \paragraph{Proof of \Cref{it:brute_force}}
This follows by an application of \Cref{thm:brute_force}.

\item \paragraph{Runtime and Sample Complexity of \Cref{alg:mean_estimation}}
Let  $n_0,n_1,n_2$ be as defined in line \Cref{line:samples} of the algorithm.
Recall that we denote by $T_1$ the number of iterations of the first phase of the while loop (interactions for which $C\log^4(d)/\eps^5 \leq k \leq d$) and by $T_2$ the number of the remaining iterations of the while loop.
The sample complexity of the algorithm is the following:
\begin{align*}
    n &= n_0 + n_1 \cdot T_1 + n_2\cdot (T_2 + 1)\\
     &= O(d) + \frac{d \polylog(d) }{\eps^{2+o(1)}}  +  2^{O(1/\eps^2)} \polylog(d)  \\
     &=   \frac{d \polylog(d) }{\eps^{2+o(1)}}+2^{O(1/\eps^2)} \polylog(d) .
\end{align*}
Regarding runtime: The runtime of the warm-start step is $\tau_1 = \poly(n_0 d)$. The runtime of the while-loop part of the algorithm is $\tau_2 = \poly(n_1 d )$ because each iteration runs in polynomial time and we have at most $T_1+T_2 = O(\log d)$ iterations. The runtime of the last step of the algorithm (\ref{line:brute_force}) is $\tau_3 = 2^{O(k)}\poly(n_2 d)$, where $k = 1/\eps^2$ here denotes the dimension of the subspace $\cV_t$ for $t=T_1+T_2$, i.e., the dimension that we end up after the while loop finishes. Since $n_2 = 2^{\Theta(1/\eps^2)}$, that runtime is overall $\tau_3 = \poly(n_2 d)$. Thus, the overall runtime of the algorithm is polynomial in the size of the input.

\end{proof}

\section{Higher Breakdown Point and Adaptivity}
\label{appendix:breakdownpoint}

In this section, we use Lepskii's method \cite{lepskii1991problem, birge2001alternative} to prove \Cref{thm:higher_breakdown} (stated below), 
a generalization of \Cref{thm:main} which provides similar guarantees as \Cref{thm:main} but works for any contamination parameter $\alpha$, and does not require a priori knowledge of $\alpha$. Specifically, we can generalize the algorithm to work for $\alpha \in (0,1/2-c)$ for any $c<1/2$, which is unknown to the algorithm. The price that we pay for that is a slightly larger dependence on $\eps$ in the sample complexity and the fact that the error becomes $O_c(\eps)$ instead of $\eps$, where the $O_c$ notation hides factors that depend on $c$ (i.e., if $c$ is a constant, the error increases only by a constant factor). 
Moreover,  even in one dimension, \cite{kotekal2024optimal} has shown that consistent estimation with arbitrarily small error $\eps$ using the information theoretic optimal of $2^{\Theta(1/\eps^2)}$ samples is only possible when $c>2^{-\Theta(1/\eps^2)}$, thus we only consider that regime in this section.

\begin{restatable}{theorem}{MAINTHEOREMHBP}{\em (Higher breakdown point)} \label{thm:higher_breakdown}
    Let $d \in \Z_+$ denote the dimension, $\mu \in \R^d$ be an unknown mean vector,  $\eps \in (0,1)$ be an accuracy parameter.
    Fix $n = (d\cdot(1/\eps)^C + 2^{C/\eps^2})\log^C(d)$, for a sufficiently large constant $C$. 
    Suppose that we have sample access to $\alpha$-corrupted samples from $\cN(\mu,I)$ under the mean-shift model (\Cref{def:cont}) with contamination parameter $\alpha \in (0,1/2-c)$ where $2^{-1/(2\eps^2)} < c \leq 1/2$.
    There exists an algorithm that takes as input $\eps$, draws 
    $n$ samples, runs in $\poly(n, d)$ time, and outputs $\widehat{\mu}$ such that with probability at least $0.99$ it holds $\| \widehat{\mu} - \mu \| \leq O_c(\eps)$.
\end{restatable}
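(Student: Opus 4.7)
The plan is to apply Lepskii's method over a dyadic grid of accuracy targets to adaptively select an estimator without knowing $c$. Define $\eps_j = 2^j \eps$ for $j = 0, 1, \ldots, J$ with $J = \lceil \log_2 (1/\eps) \rceil$, and set $c_j = 2^{-C_0/\eps_j^2}$ for a suitable absolute constant $C_0$. For each $j$, I run a single-level estimator $\cA_j$ (described below) that returns $\hat\mu_j$ with $\|\hat\mu_j - \mu\| \leq C_1 \eps_j$ under the \emph{assumption} $\alpha \leq 1/2 - c_j$. Fresh samples are drawn at each level; the per-level budget scales as $d \polylog(d)/\eps_j^{2+o(1)} + 2^{O(1/\eps_j^2)}$, so the geometric growth of $\eps_j$ makes the $j=0$ level dominate, giving the total budget claimed in the theorem after an $O(\log J)$ union-bound overhead for failure probabilities.

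The single-level algorithm $\cA_j$ is obtained from \Cref{alg:mean_estimation} with two modifications. First, the univariate subroutine of \Cref{fact:onedim} is replaced by the variant from Section~5.3 of \cite{kotekal2024optimal}, which has breakdown point arbitrarily close to $1/2$ and requires $2^{O(1/\eps_j^2)} \cdot \mathrm{poly}(1/c_j) = 2^{O(1/\eps_j^2)}$ samples for target accuracy $\eps_j$ under $\alpha \leq 1/2 - c_j$; the last equality uses $c_j = 2^{-C_0/\eps_j^2}$. Second, the warm start in line~\ref{line:warm_start} is replaced by any robust mean estimator that tolerates $\alpha < 1/2$ (e.g.\ a Tukey-type procedure), yielding $\widehat\mu_0$ with $\|\widehat\mu_0 - \mu\| = O_c(1)$. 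No further changes are required: since $1 - \alpha \geq 1/2$, the bound $(v^\top \mu)^2 = O(\eta)$ in \Cref{lemma:dimerror} continues to hold with absolute constants, and the goodness statements \Cref{lem:sample_complexity,lem:sample_complexity2} go through unchanged once $\|\mu\| = O(1)$ is relaxed to $\|\mu\| = O_c(1)$ in their hypotheses.

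Given $\hat\mu_0, \ldots, \hat\mu_J$, apply Lepskii's rule: let $\hat j$ be the smallest $j$ such that $\|\hat\mu_{\hat j} - \hat\mu_i\| \leq 2 C_1 (\eps_{\hat j} + \eps_i)$ for every $i \geq \hat j$, and return $\hat\mu_{\hat j}$. Let $j^\star$ denote the smallest index with $c \geq c_{j^\star}$; combining $c > 2^{-1/(2\eps^2)}$ with $c_j = 2^{-C_0/\eps_j^2}$ forces $\eps_{j^\star} = O(\eps)$. Conditional on all $\cA_j$ with $j \geq j^\star$ succeeding (a $0.99$-probability event after union-bounding per-level failure probabilities $0.01/J$), every such $\hat\mu_i$ satisfies $\|\hat\mu_i - \mu\| \leq C_1 \eps_i$, so the triangle inequality verifies the Lepskii check at $j^\star$, giving $\hat j \leq j^\star$. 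Applying the triangle inequality once more to $\|\hat\mu_{\hat j} - \hat\mu_{j^\star}\| \leq 2 C_1 (\eps_{\hat j} + \eps_{j^\star})$ and $\|\hat\mu_{j^\star} - \mu\| \leq C_1 \eps_{j^\star}$ yields $\|\hat\mu_{\hat j} - \mu\| \leq 5 C_1 \eps_{j^\star} = O_c(\eps)$.

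The main obstacle is pinning down a precise form of the full-breakdown-point univariate estimator from \cite{kotekal2024optimal} with an explicit $(\eps, c)$-dependence in its sample complexity, and verifying that the $\mathrm{poly}(1/c_j)$ factor is absorbed into $2^{O(1/\eps_j^2)}$ uniformly across levels. Once this is established, the runtime and sample complexity bounds already proved for \Cref{thm:main} extend routinely to the adaptive setting: every step of \Cref{alg:mean_estimation} is either unchanged or altered only up to constants depending on $c$, and the additional $J = O(\log(1/\eps))$ factor from running Lepskii's procedure is absorbed into the $\log^C(d)$ factor in the sample complexity.
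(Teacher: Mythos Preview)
Your Lepskii grid has the monotonicity reversed. With $\eps_j = 2^j\eps$ increasing and $c_j = 2^{-C_0/\eps_j^2}$, the sequence $c_j$ is \emph{increasing} in $j$, so the hypothesis $\alpha \le 1/2 - c_j$ becomes \emph{more} restrictive as $j$ grows, not less. In particular, taking $C_0 \ge 1/2$ you always have $c > 2^{-1/(2\eps^2)} \ge c_0$, so $j^\star = 0$; but then ``all $\cA_j$ with $j \ge j^\star$ succeed'' demands that \emph{every} level works, which fails as soon as $c < c_J = 2^{-C_0}$---and the theorem allows $c$ down to $2^{-1/(2\eps^2)}$. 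For a Lepskii selection indexed by growing $\eps_j$ you would need the assumptions to \emph{relax} as $j$ grows, i.e., $c_j$ decreasing.

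There is a second gap in the single-level analysis. The assertion that ``$(v^\top\mu)^2 = O(\eta)$ in \Cref{lemma:dimerror} continues to hold with absolute constants'' is not correct. That proof ends with $(v^\top\mu)^2 \le 2\eta\, e^{\|\mu\|^2/(\beta\sqrt{k}+2)}$, where $\mu$ is the mean \emph{after} the warm-start shift. Your warm start only guarantees $\|\mu\| = O_c(1)$, so in Phase~2 (where $\beta\sqrt{k}$ comes down to $O(1)$) this exponential is $e^{O_c(1)}$, and the per-level guarantee is actually $\|\hat\mu_j - \mu\| \le C_1(c)\,\eps_j$ with $C_1$ depending on the unknown $c$. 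Your Lepskii check uses a single fixed $C_1$ and so cannot be executed as written. (A related issue: a Tukey-type warm start is not polynomial-time; an efficient replacement achieving $O(\gamma)$ error for $\alpha$ near $1/2$ will itself want a guess for $\alpha$.)

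The paper resolves both points by running Lepskii over the \emph{contamination} parameter $\gamma = 1/(1-2\alpha)$ with $\eps$ held fixed. It first builds a non-adaptive estimator that takes $\gamma$ as input---using it in the warm start---and shows its error is $r(\gamma) = e^{O(\gamma^2)}\eps$; the $e^{O(\gamma^2)}$ factor is exactly the term $e^{\|\mu\|^2/(\beta\sqrt{k}+2)}$ with $\|\mu\| = O(\gamma)$ that you tried to suppress. Since $r$ is a known increasing function of the unknown $\gamma$, standard Lepskii over a dyadic grid in $\gamma \in [1, \sqrt{n}/2]$ returns error $3r(2\gamma) = e^{O(\gamma^2)}\eps = O_c(\eps)$ after $O(\log n)$ calls.
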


The first step towards proving \Cref{thm:higher_breakdown} is to show that there exists an estimator that, \emph{having $\gamma$} as input achieves error $O_\gamma(\eps)$, i.e., an estimator that requires knowledge of $\gamma$ but $\gamma$ can be arbitrary. We show this in \Cref{cl:arbitraryalpha}. Then, we can use Lepskii's method to obtain an estimator with the same error, but without  knowledge of $\gamma$.

\begin{claim} \label{cl:arbitraryalpha}
        Let $d \in \Z_+$ denote the dimension, $\mu \in \R^d$ be an unknown mean vector and  $\eps \in (0,1)$ be an accuracy parameter.  
        Let $\alpha$ be a contamination parameter.
        There exists an algorithm that takes as input $\eps$, $\alpha$, draws  
    $n = \tilde{O}(d/(\delta \eps^{2+o(1)}) + 2^{O(1/\eps^2)}/\delta)$ $\alpha$-corrupted samples from $\cN(\mu,I)$ under the mean-shift model (\Cref{def:cont}), runs in $\poly(n, d)$ time, and outputs $\widehat{\mu}$ such that, if $\alpha \leq 1/2-2^{-1/(2\eps^2)}$, then with probability at least $1-\delta$ it holds $\| \widehat{\mu} - \mu \| \leq e^{O(1/(1-2\alpha)^2)}\eps$.
\end{claim}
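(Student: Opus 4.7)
The plan is to generalize \Cref{thm:main} and \Cref{alg:mean_estimation} to arbitrary $\alpha \in (0,\, 1/2 - 2^{-1/(2\eps^2)})$, with $\alpha$ given as input, via three adaptations to the proof of \Cref{thm:main}.

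First, I would replace the warm-start at \Cref{line:warm_start}, which cites Corollary~2.12 of \cite{diakonikolas2023algorithmic} and requires $\alpha \leq 0.49$, by any polynomial-time robust mean estimator for Gaussians under Huber contamination with breakdown point arbitrarily close to $1/2$ (for instance, the filtering-based methods surveyed in \cite{diakonikolas2023algorithmic}). Since mean-shift contamination is a special case of Huber contamination, such an estimator applies directly and yields $\widehat{\mu}_0$ with $\|\widehat{\mu}_0 - \mu\| \leq M$, where $M = O(1/(1-2\alpha))$, using $\widetilde{O}(d)$ samples.

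Second, I would observe that \Cref{thm:brute_force} already extends to any $\alpha \in (0,1/2)$ (when $\alpha$ is known) without modification to its proof. That proof is a black-box reduction to the one-dimensional estimator \Cref{fact:onedim}, which, as the paper notes (citing Section~5.3 of \cite{kotekal2024optimal}), works for $\alpha$ arbitrarily close to $1/2$. The cover-of-the-unit-sphere reduction therefore gives a brute-force multivariate estimator for general $\alpha$, with sample complexity $2^{O(1/\eps^2)}(d + \log(1/\delta))$.

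Third, I would redo the error accounting. In the proof of \Cref{lemma:dimerror}, retaining the factor $(1-\alpha)$ explicitly gives $(v^\top \mu)^2 \leq (2\eta/(1-\alpha))\, e^{\|\mu\|^2/(\beta\sqrt{k}+2)}$. After shifting by $\widehat{\mu}_0$, the effective mean has norm at most $M$, so the per-iteration bound in \Cref{lemma:dimerror} becomes $\|\Proj_{\cU}(\mu - \widehat{\mu}_0)\| \lesssim e^{O(M^2)}\sqrt{\eta/(1-\alpha)}$. Running the same geometric-series bookkeeping as in the proof of \Cref{thm:main} and applying the generalized brute-force at the end with accuracy parameter $\eps$, the total error becomes $(e^{O(M^2)}/\sqrt{1-\alpha})\cdot O(\eps) = e^{O(1/(1-2\alpha)^2)}\,\eps$, matching the claim.

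The main obstacle will be propagating the relaxed bound $\|\mu - \widehat{\mu}_0\| \leq M$ through the concentration lemmas \Cref{lem:sample_complexity,lem:sample_complexity2}, whose proofs use $\|\mu\| = O(1)$ at several points in order to convert expressions like $\|\mu\|^2 e^{-\|\mu\|^2/\sqrt{k\log k}}$ into absolute constants. With the weaker bound $\|\mu\|\leq M$, these quantities become $\poly(M)$-sized, inflating the per-iteration sample complexity by $\poly(M)$. Under the hypothesis $\alpha \leq 1/2 - 2^{-1/(2\eps^2)}$ one has $M \leq 2^{O(1/\eps^2)}$, so this inflation is of order $2^{O(1/\eps^2)}$ and can be absorbed into the big-$O$ constants implicit in the claimed sample complexity $\widetilde{O}(d/(\delta \eps^{2+o(1)}) + 2^{O(1/\eps^2)}/\delta)$. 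Carrying out this bookkeeping carefully, at each application of \Cref{lem:sample_complexity} and \Cref{lem:sample_complexity2} in the analysis of \Cref{alg:mean_estimation}, is the principal technical step.
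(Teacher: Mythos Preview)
Your plan follows essentially the same route as the paper's proof: replace the warm-start by an estimator with breakdown point close to $1/2$ (yielding $\|\widehat\mu_0-\mu\|=O(\gamma)$ where $\gamma=1/(1-2\alpha)$), trace the effect of this larger initial error through \Cref{lemma:dimerror} (the exponential $e^{\|\mu\|^2/(\beta\sqrt{k}+2)}$ becomes $e^{O(\gamma^2)}$ rather than $O(1)$), propagate through the geometric-series bookkeeping, and finish with the brute-force estimator on the low-dimensional subspace. The paper also notes that the $1-\delta$ probability comes at a $1/\delta$ blowup in the concentration lemmas.

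Two points where your proposal diverges from, or could be sharpened relative to, the paper:

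\emph{The brute-force error.} Your claim that \Cref{thm:brute_force} ``extends to any $\alpha\in(0,1/2)$ without modification'' and still returns error $\eps$ is not quite right. \Cref{fact:onedim} as stated gives error $\eps$ only for $\alpha\le 0.49$; for $\alpha$ close to $1/2$ the rate degrades. The paper handles this by invoking the sharp one-dimensional rate from \cite{kotekal2024optimal}, namely that with $2^{O(1/\eps^2)}$ samples the error is $1/\sqrt{\log(1+2^{O(1/\eps^2)}/\gamma^2)}\le e^{O(\gamma^2)}\eps$ whenever $\gamma\le 2^{1/(2\eps^2)}/2$. This is exactly what the hypothesis $\alpha\le 1/2-2^{-1/(2\eps^2)}$ guarantees. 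Since the target error is $e^{O(\gamma^2)}\eps$ anyway, this does not change the final bound, but it is the correct justification.

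\emph{The concentration step.} You are right to flag that \Cref{lem:sample_complexity,lem:sample_complexity2} are stated under $\|\mu\|=O(1)$, and the paper's own proof does not explicitly revisit this. However, your proposed fix---absorb a multiplicative $\poly(M)=2^{O(1/\eps^2)}$ blowup into the claimed complexity---does not work: it would turn the $d/\eps^{2+o(1)}$ term into $d\cdot 2^{O(1/\eps^2)}$, which is \emph{not} of the additive form $\tilde O(d/\eps^{2+o(1)}+2^{O(1/\eps^2)})$ claimed in the statement. A cleaner route is to notice that the inlier concentration (\Cref{it:inliers}) can be obtained by the same decomposition used for the outliers (\Cref{it:outliers}), which nowhere assumes a bound on the centers $z_i$; that analysis yields a multiplicative-type deviation $|v^\top(\widehat A-A)v|\lesssim \eta+\eta\sqrt{v^\top A v}$ with the \emph{same} sample complexity, and feeding this weakened bound through \Cref{lemma:dimerror} still gives $(v^\top\mu)^2\le O(\eta)e^{\|\mu\|^2/(\beta\sqrt{k}+2)}\le O(\eta)e^{O(\gamma^2)}$.
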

\begin{proof}
For notational convenience, instead of working with the contamination parameter $\alpha$ from \Cref{def:cont}, we will use the reparameterization  $\gamma = \frac{1}{1-2\alpha}$. We now describe the modifications that we need to do to \Cref{alg:mean_estimation} and its analysis in order to obtain \Cref{cl:arbitraryalpha}

First, we replace the rough estimation step of line \ref{line:warm_start} in \Cref{alg:mean_estimation} with an 
estimator that takes $\alpha$ in its input and achieves error 
$O(\gamma)$ instead of an $O(1)$ 
(see, e.g., Exercise 2.10 in \cite{diakonikolas2023algorithmic}).

This change affects the error given by the analysis of the dimensionality reduction, \Cref{lemma:dimerror}. It is easy to see that the conclusion of that lemma will now become $\norm{\Proj_{\cU}(\mu)} = O(e^{\gamma^2}\sqrt{\eta})$ instead of $\norm{\Proj_{\cU}(\mu)} = O(e^{\gamma^2}\sqrt{\eta})$.  This is because in the last line of the proof of \Cref{lemma:dimerror} we have that for any $v$ such that $v^\top\widehat{A}v\le \eta$
\begin{align*}
(v^\top \mu)^2 \leq 2 \eta \, e^{\frac{\|\mu\|^2}{\beta \sqrt{k}+2}}
    \lesssim e^{\gamma^2}\eta.    
\end{align*}
As a result, we have that  $\Norm{\Proj_{\cU}(\mu)}=e^{O(\gamma^2)}\sqrt{\eta}$ for $\cU$ the subspace of $V_t$ such that $v^{\top}\widehat{A}v\le \eta$ for all unit vectors $v\in \cU$. So propagating this change in the analysis of \Cref{alg:mean_estimation} in \Cref{appendix:fullproof} we get the $e^{O(\gamma^2)}\eps $ error will appear as the contribution to the error by the subspace $\cV_t^\perp$.

For the error on the orthogonal subspace, we need to calculate the error that our inefficient estimator in the final step of our algorithm will attain. Note that from \cite{kotekal2024optimal} we have that when using $2^{O(1/\eps^2)}$ samples the one dimensional estimator achieves error $\frac{1}{\sqrt{\log(1+2^{-O(\eps^2) }/\gamma^2)}}\le e^{O(\gamma^2)}\eps$ when $\gamma\le 2^{1/(2\eps^2)}/2$. As a result, a more refined version of \Cref{thm:brute_force} exists error $\widehat{\mu}$ is $\| \widehat{\mu} - \mu \| \leq e^{O(\gamma^2)}\eps$ with probability $0.99$.

Finally, although our analysis in \Cref{thm:main} provides a result that holds with a constant probability, this was done only for simplicity. It is easy to verify that because the conclusion of \Cref{lem:sample_complexity,lem:sample_complexity2} holds with probability $1-\delta$ with $1/\delta$ blowup in the sample complexity, we can follow the analysis done in \Cref{appendix:fullproof} to obtain a high probability conclusion for the final error of the algorithm. \qedhere

\end{proof}

Now we will use \Cref{cl:arbitraryalpha} along with Lepskii's method to get an error guarantee that depends on the true parameter $\gamma$ without it being known to the algorithm. We state Lepskii's method guarantee below and then prove \Cref{thm:higher_breakdown}.
\begin{fact}[Lepskii's Method]\label{thm:lepskii}
    Let $\mu \in \mathbb{R}^d$, $A,B > 0$, $\gamma \in [A,B]$, and a non-decreasing function $r : \mathbb{R}^+ \to \mathbb{R}^+$. Suppose $\text{Alg}(\gamma')$ is a black-box algorithm that is guaranteed to return a vector $\widehat{\mu}$ such that $\|\widehat{\mu} - \mu\|_2 \leq r(\gamma')$, with probability at least $1-\delta$, whenever $\gamma' \geq \gamma$. Then, there exists an algorithm that returns $\widehat{\mu}'$ such that, with probability at least $1-O(\log(B/A))\delta$, it holds $\|\widehat{\mu}' - \mu\|_2 \leq 3r(2\gamma)$. Moreover, this algorithm calls $\text{Alg}$ at most $O(\log(B/A))$ times.
\end{fact}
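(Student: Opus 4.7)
The plan is to give a constructive proof by describing the Lepskii aggregation algorithm and analyzing it. First, I would discretize the range $[A,B]$ along a geometric grid: set $\gamma_i = A \cdot 2^i$ for $i = 0, 1, \dots, N$, where $N = \lceil \log_2(B/A) \rceil = O(\log(B/A))$, so that $\gamma_N \geq B$. For each $i$, invoke $\text{Alg}(\gamma_i)$ to obtain an estimator $\widehat{\mu}_i$, and associate to it the confidence ball $I_i := \{x \in \mathbb{R}^d : \|x - \widehat{\mu}_i\|_2 \leq r(\gamma_i)\}$. The Lepskii selection rule then picks $\hat{i}$ to be the smallest index such that $\bigcap_{j \geq \hat{i}} I_j \neq \emptyset$ (equivalently, the smallest $\hat{i}$ such that $\|\widehat{\mu}_{\hat{i}} - \widehat{\mu}_j\|_2 \leq r(\gamma_{\hat{i}}) + r(\gamma_j)$ for every $j \geq \hat{i}$). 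The algorithm outputs $\widehat{\mu}' := \widehat{\mu}_{\hat{i}}$. Since there are $N+1 = O(\log(B/A))$ grid points, the procedure makes at most $O(\log(B/A))$ calls to $\text{Alg}$, matching the claimed query complexity.

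For the analysis, apply a union bound over the grid: with probability at least $1 - (N+1)\delta = 1 - O(\log(B/A))\delta$, the guarantee $\|\widehat{\mu}_j - \mu\|_2 \leq r(\gamma_j)$ holds for every $j$ with $\gamma_j \geq \gamma$; call this the good event $\mathcal{E}$. Define $j^\star$ to be the smallest index with $\gamma_{j^\star} \geq \gamma$. By minimality of $j^\star$, $\gamma_{j^\star -1} < \gamma$, hence $\gamma_{j^\star} = 2\gamma_{j^\star-1} < 2\gamma$, so $\gamma_{j^\star} \in [\gamma, 2\gamma)$. On $\mathcal{E}$, the point $\mu$ lies in $I_j$ for every $j \geq j^\star$, so these balls have nonempty intersection; consequently the selection rule yields $\hat{i} \leq j^\star$.

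To bound the error, note that since $\hat{i} \leq j^\star$, by definition of the selection rule the intersection $\bigcap_{j \geq \hat{i}} I_j$ is nonempty, so in particular $I_{\hat{i}} \cap I_{j^\star}$ contains some point $p$. Then, using monotonicity of $r$ and the fact that $\gamma_{\hat{i}} \leq \gamma_{j^\star} \leq 2\gamma$,
\begin{align*}
\|\widehat{\mu}_{\hat{i}} - \mu\|_2
&\leq \|\widehat{\mu}_{\hat{i}} - p\|_2 + \|p - \widehat{\mu}_{j^\star}\|_2 + \|\widehat{\mu}_{j^\star} - \mu\|_2 \\
&\leq r(\gamma_{\hat{i}}) + r(\gamma_{j^\star}) + r(\gamma_{j^\star})
\leq 3 r(2\gamma),
\end{align*}
where the last bound on $\|\widehat{\mu}_{j^\star} - \mu\|_2$ is exactly the good-event guarantee for the valid index $j^\star$. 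This completes the argument.

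The routine part is the union bound and the triangle-inequality chain; the only subtle point—and the step I would be most careful about—is the choice of the selection rule and the verification that $\hat{i} \leq j^\star$ always holds on $\mathcal{E}$. One must use that $\mu$ itself witnesses the nonemptiness of $\bigcap_{j \geq j^\star} I_j$ on $\mathcal{E}$, which is why taking the \emph{smallest} index whose tail of balls mutually intersect is both well-defined and bounded above by $j^\star$. An equivalent formulation checks pairwise intersection of $I_{\hat{i}}$ with each $I_j$ for $j \geq \hat{i}$ (which avoids the need to reason about $d$-dimensional common intersections algorithmically—pairwise intersection of balls is easy to test), and the same proof goes through since the chain only uses $I_{\hat{i}} \cap I_{j^\star} \neq \emptyset$. \qed
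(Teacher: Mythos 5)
Your proof is correct. Note that the paper does not prove this statement at all: it is imported as a known \emph{fact} with citations to Lepskii (1991) and Birg\'e (2001), so there is no in-paper argument to compare against; your write-up supplies the standard Lepskii aggregation argument (geometric grid, confidence balls, smallest admissible index, triangle inequality through a witness point) that those references contain. Two small remarks: the word ``equivalently'' relating the common-intersection rule to the pairwise-intersection rule is not literally accurate (pairwise intersection of balls in $\R^d$ does not imply a common point), but you correctly observe that the analysis only needs $I_{\hat i}\cap I_{j^\star}\neq\emptyset$, so the pairwise rule is the right one to run and the proof stands; and the case $j^\star=0$ (i.e., $\gamma=A$) should be noted separately, though it is trivial since then $\gamma_{j^\star}=\gamma\le 2\gamma$.
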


\begin{proof}[Proof of \Cref{thm:higher_breakdown}]
We will apply Lepskii's method (cf. \Cref{thm:lepskii}), while using the algorithm of \Cref{cl:arbitraryalpha} with $\delta = 0.01 \eps^2/\log(d)$ for $\text{Alg}$ and $r(\gamma) := e^{O(\gamma^2)}\eps$. By assumption $\alpha \leq 1/2-1/\sqrt{n}$ (and $\alpha \geq 1/n$ otherwise we can treat one of the samples as outlier to make it at least $1/n$) thus $\gamma$ will belong in the interval $[n/(n-2),\sqrt{n}/2)$ thus Lepskii's method will search over the interval $[A,B]$ where $A=n/(n-2)$ and $B=\sqrt{n}/2$.

Now if we use the algorithm from \Cref{cl:arbitraryalpha} for this fixed $n$ in Lepskii's method the number of calls to our algorithm will be at most $\log(\sqrt{n}/(n/(n-2)))\le\log(\sqrt{n})\le \log(d2^{O(1/\eps^2)})\le \log(d)/\eps^2$ (this is why we used \Cref{cl:arbitraryalpha} with $\delta = 0.01\eps^2/\log(d)$ in the beginning). Hence by the union bound, the probability of failure of any call of the algorithm in Lepskii's method is at most $\log(d)/\eps^2\delta \leq 0.01$. Therefore, by \Cref{thm:lepskii}, we obtain that there exists an algorithm that returns an estimate $\widehat{\mu}'$ such that $\|\widehat{\mu}' - \mu\|_2 \leq 3r(2\gamma)\le e^{O(\gamma^2)}\eps=O_\gamma(\eps)$ with probability at least $0.99$.

Regarding the sample complexity of our algorithm, as at each call we use the same number of samples $n$ and the number of such calls is at most $\log(d)/\eps^2$, we have that the total number of samples is at most $(\log(d)/\eps^2)n\le \tilde{O}(d \poly(1/\eps) + 2^{O(1/\eps^2)})$. Furthermore, the runtime of the algorithm is $(\log(d)/\eps^2)\poly(n,d)=\poly(n,d)$, as Lepskii's method consists of $\log(d)/\eps^2$ calls to the algorithm of \Cref{cl:arbitraryalpha}.\qedhere

\end{proof}

\end{document}